\newtheorem{fact}{Fact}
\newtheorem{meta-axiom}{Meta-axiom}
\DeclareMathAlphabet{\mathrm}{OT1}{cmr}{m}{n}
\DeclareMathAlphabet{\mathrmbf}{OT1}{cmr}{bx}{n}
\DeclareMathAlphabet{\mathrmit}{OT1}{cmr}{m}{it}
\DeclareMathAlphabet{\mathrmbfit}{OT1}{cmr}{bx}{it}
\DeclareMathAlphabet{\mathsf}    {OT1}{cmss}{m}{n}
\DeclareMathAlphabet{\mathsfbf}  {OT1}{cmss}{bx}{n}
\DeclareMathAlphabet{\mathsfit}  {OT1}{cmss}{m}{sl}
\DeclareMathAlphabet{\mathttbf}{OT1}{cmtt}{bx}{n}
\newenvironment{eg}
{\begin{center}\begin{minipage}{346pt}\textbf{Example:}}
{\end{minipage}\end{center}}
\begin{document}

\title{The Architecture of Truth}
\author{Robert E. Kent}
\institute{Ontologos}
\maketitle

\begin{abstract}
The theory of institutions is framed as an indexed/fibered duality,
where the indexed aspect specifies the fibered aspect.
Tarski represented truth in terms of a satisfaction relation.
The theory of institutions encodes satisfaction as its core architecture in the indexed aspect.
Logical environments enrich this truth architecture by axiomatizing the truth adjunction in the fibered aspect.
The truth architecture is preserved by morphisms of logical environments.
~\footnote{Although not every institution is a logical environment,
each institution has an associated logical environment
defined via the intent of the structures of the institution,
and each institution
is represented by an indexed functor into the structure category of the classification logical environment $\mathtt{Cls}$.}
\end{abstract}

\tableofcontents

\newpage
\section{FOL Structures}\label{structures-FOL}

\subsection{Objects}

A ``possible world'' or \emph{structure} for a type language provides an interpretation for all types. 
It traditionally associates sets (domains) with entity types and relations with relation types. 
These associations must respect the various typings. 
The notion of structure introduced here 
(see the IFF Model Theory Ontology)
is framed in terms of the notion of a classification along with a suitable notion of a hypergraph. 
This is a new formulation of structure,
and introduced here for the first time. 
However, these structures can be placed in the traditional perspective. 
These structures and their morphisms form a category called $\mathrmbf{Struc}$.


\begin{figure}
\begin{center}
\begin{tabular}{c}

\setlength{\unitlength}{0.5pt}
\begin{picture}(160,80)(0,0)
\put(0,80){\makebox(0,0){\footnotesize{$\mathrmbfit{var}(\mathrmbf{A})$}}}
\put(160,80){\makebox(0,0){\footnotesize{$\mathrmbfit{typ}(\mathrmbfit{ent}(\mathrmbf{A}))$}}}
\put(0,0){\makebox(0,0){\footnotesize{$\mathrmbfit{var}(\mathrmbf{A})$}}}
\put(160,-18){\makebox(0,0){\footnotesize{$\underset{\textstyle\mathrmbfit{univ}(\mathrmbf{A})}
{\underbrace{\mathrmbfit{inst}(\mathrmbfit{ent}(\mathrmbf{A}))}}$}}}
\put(-10,40){\makebox(0,0)[r]{\small{$=$}}}
\put(170,40){\makebox(0,0)[l]{\scriptsize{$\models_{\mathrmbfit{typ}(\mathrmbfit{ent}(\mathrmbf{A}))}$}}}
\put(80,95){\makebox(0,0){\scriptsize{$\ast$}}}
\put(160,15){\line(0,1){50}}
\put(40,80){\vector(1,0){60}}
\put(80,50){\makebox(0,0){\small{$\tau_{\mathrmbf{A}}=\mathrmbfit{refer}(\mathrmbf{A})$}}}
\put(80,30){\makebox(0,0){\scriptsize{$semidesignation$}}}
\put(260,40){\makebox(0,0)[l]{\footnotesize{$\left.\rule{0pt}{25pt}\right\}\mathrmbfit{ent}(\mathrmbf{A})$}}}
\end{picture}
\\ \\ \\ \\
\setlength{\unitlength}{0.5pt}
\begin{picture}(160,80)(0,0)
\put(0,80){\makebox(0,0){\footnotesize{$\mathrmbfit{typ}(\mathrmbfit{rel}(\mathrmbf{A}))$}}}
\put(160,80){\makebox(0,0){\footnotesize{$\mathrmbfit{sign}(\ast)$}}}
\put(0,-18){\makebox(0,0){\footnotesize{$\underset{\textstyle\mathrmbfit{tuple}(\mathrmbf{A})}
{\underbrace{\mathrmbfit{inst}(\mathrmbfit{rel}(\mathrmbf{A}))}}$}}}
\put(160,0){\makebox(0,0){\footnotesize{$\mathrmbfit{tuple}(\tau_{\mathrmbf{A}})$}}}
\put(-10,40){\makebox(0,0)[r]{\scriptsize{$\models_{\mathrmbfit{rel}(\mathrmbf{A})}$}}}
\put(170,40){\makebox(0,0)[l]{\scriptsize{$\models_{\mathrmbfit{sign}(\tau_{\mathrmbf{A}})}$}}}
\put(85,95){\makebox(0,0){\scriptsize{$\delta_{1}$}}}
\put(85,-15){\makebox(0,0){\scriptsize{$\delta_{0}$}}}
\put(0,15){\line(0,1){50}}
\put(160,15){\line(0,1){50}}
\put(54,80){\vector(1,0){60}}
\put(54,0){\vector(1,0){60}}
\put(80,50){\makebox(0,0){\small{$\delta_{\mathrmbf{A}}=\mathrmbfit{sign}(\mathrmbf{A})$}}}
\put(80,30){\makebox(0,0){\scriptsize{$designation$}}}
\put(-80,40){\makebox(0,0)[r]{\footnotesize{$\mathrmbfit{rel}(\mathrmbf{A})\left\{\rule{0pt}{25pt}\right.$}}}
\put(260,40){\makebox(0,0)[l]{\footnotesize{$\left.\rule{0pt}{25pt}\right\}\mathrmbfit{sign}(\tau_{\mathrmbf{A}})$}}}
\end{picture}
\\ \\
\end{tabular}
\end{center}
\caption{FOL Structure}
\label{fol:structure}
\end{figure}


A structure 
$\mathrmbf{A} = 
{\langle{\mathrmbfit{refer}(\mathrmbf{A}),\mathrmbfit{sign}(\mathrmbf{A})}\rangle}$ 
is a hypergraph of classifications
-- a two dimensional construction (Figure~\ref{fol:structure}) consisting of
a reference semidesignation $\tau_{\mathrmbf{A}} = \mathrmbfit{refer}(\mathrmbf{A})$ and
a signature designation $\delta_{\mathrmbf{A}} = \mathrmbfit{sign}(\mathrmbf{A})$,
where the signature classification of the reference semidesignation is the target classification of the signature designation
$\mathrmbfit{sign}(\tau_{\mathrmbf{A}}) = \mathrmbfit{tgt}(\delta_{\mathrmbf{A}})$.
For convenience of practical reference, 
we introduce intuitive terminology for various subcomponents. 
The source of the signature designation is called the relation classification of $\mathrmbf{A}$ 
and denoted by $\mathrmbfit{rel}(\mathrmbf{A})$. 
The classification of the reference semidesignation is called the entity classification of $\mathrmbf{A}$ 
and denoted by $\mathrmbfit{ent}(\mathrmbf{A})$. 
The set of the reference semidesignation is called the set of logical variables of $\mathrmbf{A}$ 
and denoted by $\mathrmbfit{var}(\mathrmbf{A})$. 
The reference semidesignation and signature and arity~\footnote{The carinality of the arity is the valence.
It is important to make this distinction.}
designations are
\begin{center}
$\footnotesize{\begin{array}{r@{\hspace{5pt}=\hspace{5pt}}l@{\hspace{5pt}:\hspace{5pt}}l}
\tau_{\mathrmbf{A}} 
& \mathrmbfit{refer}(\mathrmbf{A}) 
& \mathrmbfit{var}(\mathrmbf{A}) \looparrowright \mathrmbfit{ent}(\mathrmbf{A})
\\
\delta_{\mathrmbf{A}} 
& \mathrmbfit{sign}(\mathrmbf{A}) \;\;\, = {\langle{\delta_{0},\delta_{1}}\rangle} 
& \mathrmbfit{rel}(\mathrmbf{A}) \rightrightarrows \mathrmbfit{sign}(\tau_{\mathrmbf{A}} )
\\
\alpha_{\mathrmbf{A}} 
& \mathrmbfit{arity}(\tau_{\mathrmbf{A}}) = {\langle{\alpha_{0},\alpha_{1}}\rangle}
& \mathrmbfit{rel}(\mathrmbf{A}) \rightrightarrows \mathrmbfit{sup}(\mathrmbfit{var}(\mathrmbf{A})),
\end{array}}$
\end{center}


Let us define some further common terminology for models. 
The set 
$\mathrmbfit{univ}(\mathrmbf{A}) = \mathrmbfit{inst}(\mathrmbfit{ent}(\mathrmbf{A}))$ 
of entity instances is called the universe (or universal domain) for $\mathrmbf{A}$. 
In the entity classification $\mathrmbfit{ent}(\mathrmbf{A})$, 
the elements of the universe are classified by entity types. 
Elements of the set of relational instances 
$\mathrmbfit{tuple}(\mathrmbf{A}) = \mathrmbfit{inst}(\mathrmbfit{rel}(\mathrmbf{A}))$ 
are called tuples or (relational) arguments. 
In the relation classification $\mathrmbfit{rel}(\mathrmbf{A})$,
tuples are classified by relation types. 

Although we do not specify it, 
usually there is a subset of individuals $\mathrmbfit{indiv}(\mathrmbf{A}) \subseteq \mathrmbfit{univ}(\mathrmbf{A})$, 
whose elements are called individual designators. 
We use some terminology from conceptual graphs here. 
Individual designators, thought of as identifiers for individuals, 
represent either objects or data values. 
A locator represents an object: 
a marker `{\ttfamily ISBN-0-521-58386-1}', an indexical `{\ttfamily you}' or a name `{\ttfamily K. Jon Barwise}'. 
A data value is represented as a literal: 
a string `{\ttfamily xyz}', a number `{\ttfamily 3.14159}', a date `{\ttfamily 1776/07/04}', etc.

Tuples in $r \in \mathrmbfit{tuple}(\mathrmbf{A})$ are abstract, and by themselves are typeless. 
They are associated with concrete reference tuples, 
and thus elements of the universe, via the tuple function
$\delta_{0}
: \mathrmbfit{tuple}(\mathrmbf{A}) \rightarrow \mathrmbfit{tuple}(\mathrmbfit{refer}(\mathrmbf{A}))$.
A concrete tuple 
$\delta_{0}(r) \in \mathrmbfit{tuple}(\mathrmbfit{refer}(\mathrmbf{A}))$ 
is a tuple of entity instances $\delta_{0}(r) \in \mathrmbfit{univ}(\mathrmbf{A})^{\mathrmbfit{arity}(r)}$. 
Thus, 
it assigns elements of the universe (entity instances) to the variables in the subset $\mathrmbfit{arity}(r)$. 
The arity is associated with each abstract tuple $r \in \mathrmbfit{tuple}(\mathrmbf{A})$ 
via the arity function
$\alpha_{0}
: \mathrmbfit{tuple}(\mathrmbf{A}) \rightarrow {\wp}\mathrmbfit{var}(\mathrmbf{A})$.

Amongst the five basic sets
---
variables $\mathrmbfit{var}(\mathrmbf{A})$,
entity types $\mathrmbfit{typ}(\mathrmbfit{ent}(\mathrmbf{A}))$,
universe $\mathrmbfit{univ}(\mathrmbf{A})$,
relation types $\mathrmbfit{typ}(\mathrmbfit{rel}(\mathrmbf{A}))$ and
abstract tuples $\mathrmbfit{tuple}(\mathrmbf{A})$
---
we make no a priori assumption about their connections
other than the linkages provided by the 
reference, relation signature, tuple, relation arity, and tuple arity functions.
\begin{center}
$\footnotesize{\begin{array}{r@{\hspace{5pt}\doteq\hspace{5pt}}r@{\hspace{5pt}:\hspace{5pt}}l}
\ast & \mathrmbfit{typ}(\tau_{\mathrmbf{A}}) &
\mathrmbfit{var}(\mathrmbf{A}) \rightarrow \mathrmbfit{typ}(\mathrmbfit{ent}(\mathrmbf{A}))
\\
\delta_{1} & \mathrmbfit{typ}(\delta_{\mathrmbf{A}}) &
\mathrmbfit{typ}(\mathrmbfit{rel}(\mathrmbf{A})) \rightarrow \mathrmbfit{sign}(\ast)
\\
\delta_{0} & \mathrmbfit{inst}(\delta_{\mathrmbf{A}}) &
\mathrmbfit{tuple}(\mathrmbf{A}) \rightarrow \mathrmbfit{tuple}(\tau_{\mathrmbf{A}})
\\
\alpha_{1} & \delta_{1} \cdot \mathrmbfit{arity}(\ast) &
\mathrmbfit{typ}(\mathrmbfit{rel}(\mathrmbf{A})) \rightarrow {\wp}\mathrmbfit{var}(\mathrmbf{A})
\\
\multicolumn{1}{r@{\hspace{5pt}\rule{7pt}{0pt}\hspace{5pt}}}{\alpha_{0}} & \multicolumn{1}{r@{\hspace{5pt}:\hspace{5pt}}}{} & 
\mathrmbfit{tuple}(\mathrmbf{A}) \rightarrow {\wp}\mathrmbfit{var}(\mathrmbf{A}).
\end{array}}$
\end{center}
They may or may not overlap or be disjoint.
Special assumptions can be used for special cases.
See the database-as-category example given below.

Tuples are called records in the context of database relations: 
if tuple $r$ has relational type $\rho$, $r \models_{\mathrmbfit{rel}(\mathrmbf{A})} \rho$, 
then the concrete tuple $\delta_{0}(r)$ has the signature type 
$\delta_{1}(\rho), \delta_{0}(r) \models_{\mathrmbfit{sign}(\tau_{\mathrmbf{A}})} \delta_{1}(\rho)$; 
this means that the arities are related as 
$\mathrmbfit{arity}(r) \supseteq \mathrmbfit{arity}(\rho)$, 
and 
$\delta_{0}(r)_{x} \models_{\mathrmbfit{ent}(\mathrmbf{A})} \delta_{1}(\rho)_{x}$ 
for each variable $x \in \mathrmbfit{arity}(\rho)$. 
Of course, 
although a tuple $r$ has a fixed arity $\mathrmbfit{arity}(r) = \{x_{1},x_{2},\cdots x_{m}\}$ 
and a fixed signature 
$\delta_{0}(r) 
= \{ \delta_{0}(r)_{x} \mid x \in \mathrmbfit{arity}(r) \} 
= (\delta_{0}(r)_{x_{1}},\delta_{0}(r)_{x_{2}},\cdots \delta_{0}(r)_{x_{m}})$, 
a different relational classification incidence 
$r \models_{\mathrmbfit{rel}(\mathrmbf{A})} \rho'$ 
may assert $\mathrmbfit{arity}(r) \supseteq \mathrmbfit{arity}(\rho')$, 
and $\delta_{0}(r)_{x'} \models_{\mathrmbfit{ent}(\mathrmbf{A})} \delta_{1}(\rho')_{x'}$ 
for each variable $x' \in \mathrmbfit{arity}(\rho')$, 
where even the cardinalities (valences) differ $|\mathrmbfit{arity}(\rho)| \not= |\mathrmbfit{arity}(\rho')|$. 
For an example with the same valences but different arities, 
the concrete tuple $(\mathtt{Sam},70)$ could represent an age or a height, 
corresponding to two different type signatures $(\mathtt{Person},\mathtt{Years})$ and $(\mathtt{Person},\mathtt{Inches})$.

\begin{table}
\begin{center}
{\footnotesize \begin{tabular}[t]{c}
\begin{minipage}{300pt}
\begin{wrapfigure}{r}{80pt}
\setlength{\extrarowheight}{1.5pt}
\fbox{\scriptsize \begin{tabular}{|@{\hspace{2pt}}>{\itshape}l@{\hspace{2pt}}|@{\hspace{2pt}}>{\rmfamily}l@{\hspace{2pt}}|} \hline
\multicolumn{2}{|l|}{{\bfseries{\sffamily frame:} {\rmfamily send}}}	\\ \hline
{\sffamily\slshape role}	&	{\sffamily filler}	\\ \hline\hline
agent					&	Adam		\\ \hline
patient				&	Eve			\\ \hline
object				&	flowers	\\ \hline
instrument		&	email		\\ \hline
\end{tabular}}
\end{wrapfigure}
The frame name `{\bfseries\rmfamily send}' is a relation type.
Each role 
`{\itshape agent}', `{\itshape patient}', `{\itshape object}' and `{\itshape instrument}' 
is a variable (entity type name).
The set of roles $\{\text{`{\itshape agent}'}, \text{`{\itshape patient}'}, \text{`{\itshape object}'}, \text{`{\itshape instrument}'} \}$ is an arity.
Since this kind of frame is untyped, 
by using a universal entity type `Entity', 
the signature of the relation type `{\bfseries\rmfamily send}' is
({\itshape agent}: Entity, {\itshape patient}: Entity, {\itshape object}: Entity, {\itshape instrument}: Entity).
Each role filler `Adam', `Eve', `flowers' and `email' is an element of the universe (an entity instance).
There is an abstract tuple of relation type `{\bfseries\rmfamily send}', 
which is assigned this tuple of instances,
(agent: Adam, patient: Eve, object: flowers, instrument: email).
Each role filling, 
such as `{\itshape agent}: Adam' 
is a pointwise classification incidence 
`Adam' $\models$ `{\itshape agent}'.
\end{minipage}
\end{tabular}}
\end{center}
\caption{Role Frame}
\label{role:frame}
\end{table}

\paragraph{Frames.}
Frames use linguistic roles, 
such as `{\itshape agent}', `{\itshape patient}' and `{\itshape instrument}', 
to fill verb-specific frame slots.
For example, 
the sentence ``Adam sent the flowers to Eve by email'' 
would fill the `{\bfseries\rmfamily send}' frame shown in Table~\ref{role:frame}. 
This table illustrates one way to represent role frames in FOL structures.

Frames are used to represent objects, events, abstract relationships, etc. 
A frame has (at least) the following parts:
(1) a name,
(2) zero or more abstractions or superclasses (classes to which the concept belongs)
(3) zero or more slots or attributes that describe particular properties of the concept.
By using the role frame example in Table~\ref{role:frame},
we see that 
frame names are represented by relation type symbols, 
roles are represented by variables, and 
role fillers are represented by entity instances. 
The frame itself represents 
the classification incidence between the tuple and the relation type `{\bfseries\rmfamily send}'. 
Frame abstractions (sub-classification) 
can be specified by 
using sequents in the IF theories that correspond to classifications.

\newpage
\subsection{Datasets as $\mathrmbf{Set}$-Functors}

Assume our models are unified (entity classification = relation classification)
$\mathrmbfit{ent}(\mathrmbf{A}) = \mathrmbfit{rel}(\mathrmbf{A})$
and trim (abstract tuple arity = concrete tuple arity)
$\alpha_{0}(r) = \alpha_{1}(\rho)$ if $r \models_{\mathrmbf{A}} \rho$.
Note in particular that abstract tuples are identical to objects in the universe
$\mathrmbfit{tuple}(\mathrmbf{A}) = \mathrmbfit{univ}(\mathrmbf{A})$.
\begin{center}
$\footnotesize{\begin{array}{r@{\hspace{5pt}\doteq\hspace{5pt}}r@{\hspace{5pt}:\hspace{5pt}}l}
\ast & \mathrmbfit{typ}(\tau_{\mathrmbf{A}}) &
\mathrmbfit{var}(\mathrmbf{A}) \rightarrow \mathrmbfit{typ}(\mathrmbfit{rel}(\mathrmbf{A}))
\\
\delta_{1} & \mathrmbfit{typ}(\delta_{\mathrmbf{A}}) &
\mathrmbfit{typ}(\mathrmbfit{rel}(\mathrmbf{A})) \rightarrow \mathrmbfit{sign}(\ast)
\\
\delta_{0} & \mathrmbfit{inst}(\delta_{\mathrmbf{A}}) &
\mathrmbfit{univ}(\mathrmbf{A}) \rightarrow \mathrmbfit{tuple}(\tau_{\mathrmbf{A}})
\\
\alpha_{1} & \delta_{1} \cdot \mathrmbfit{arity}(\ast) &
\mathrmbfit{typ}(\mathrmbfit{rel}(\mathrmbf{A})) \rightarrow {\wp}\mathrmbfit{var}(\mathrmbf{A})
\\
\multicolumn{1}{r@{\hspace{5pt}\rule{7pt}{0pt}\hspace{5pt}}}{\alpha_{0}} & \multicolumn{1}{r@{\hspace{5pt}:\hspace{5pt}}}{} & 
\mathrmbfit{univ}(\mathrmbf{A}) \rightarrow {\wp}\mathrmbfit{var}(\mathrmbf{A}).
\end{array}}$
\end{center}


\begin{center}
\begin{tabular}{c}
\\ \\
\setlength{\unitlength}{0.5pt}
\begin{picture}(160,80)(-50,0)
\put(-160,0){\begin{picture}(160,80)(0,0)
\put(0,80){\makebox(0,0){\footnotesize{$\mathrmbfit{var}(\mathrmbf{A})$}}}
\put(160,80){\makebox(0,0){\footnotesize{$\mathrmbfit{typ}(\mathrmbfit{rel}(\mathrmbf{A}))$}}}
\put(0,0){\makebox(0,0){\footnotesize{$\mathrmbfit{var}(\mathrmbf{A})$}}}
\put(160,0){\makebox(0,0){\footnotesize{$\mathrmbfit{univ}(\mathrmbf{A})$}}}
\put(-10,40){\makebox(0,0)[r]{\small{$=$}}}
\put(170,40){\makebox(0,0)[l]{\scriptsize{$\models_{\mathrmbfit{rel}(\mathrmbf{A})}$}}}
\put(70,95){\makebox(0,0){\scriptsize{$\ast$}}}
\put(160,15){\line(0,1){50}}
\put(40,80){\vector(1,0){64}}
\put(80,50){\makebox(0,0){\small{$\mathrmbfit{refer}(\mathrmbf{A})$}}}
\put(80,30){\makebox(0,0){\scriptsize{$semidesignation$}}}
\end{picture}}
\put(0,80){\makebox(0,0){\footnotesize{$\mathrmbfit{typ}(\mathrmbfit{rel}(\mathrmbf{A}))$}}}
\put(160,80){\makebox(0,0){\footnotesize{$\mathrmbfit{sign}(\ast)$}}}
\put(0,0){\makebox(0,0){\footnotesize{$\mathrmbfit{univ}(\mathrmbf{A})$}}}
\put(160,0){\makebox(0,0){\footnotesize{$\mathrmbfit{tuple}(\mathrmbfit{refer}(\mathrmbf{A}))$}}}
\put(170,40){\makebox(0,0)[l]{\scriptsize{$\models_{\mathrmbfit{sign}(\mathrmbfit{refer}(\mathrmbf{A}))}$}}}
\put(85,95){\makebox(0,0){\scriptsize{$\delta_{1}$}}}
\put(76,-15){\makebox(0,0){\scriptsize{$\delta_{0}$}}}
\put(0,15){\line(0,1){50}}
\put(160,15){\line(0,1){50}}
\put(54,80){\vector(1,0){66}}
\put(46,0){\vector(1,0){40}}
\put(100,50){\makebox(0,0){\small{$\mathrmbfit{sign}(\mathrmbf{A})$}}}
\put(100,30){\makebox(0,0){\scriptsize{$designation$}}}
\end{picture}
\\ \\ \\
\end{tabular}
\end{center}


\begin{center}
{\scriptsize $
\mathrmbf{A}={\langle{\mathrmbfit{refer}(\mathrmbf{A}),\mathrmbfit{sign}(\mathrmbf{A})}\rangle} 
\stackrel{\mathrmbfit{h}}{\rightleftarrows}
{\langle{\mathrmbfit{refer}(\mathrmbf{B}),\mathrmbfit{sign}(\mathrmbf{B})}\rangle}=\mathrmbf{B}$}
\newline
{\scriptsize $\mathrmbfit{univ}(\mathrmbfit{h})(s)\in\mathrmbfit{ext}_{\mathrmbfit{F}}(\rho)$ iff
$s\in\mathrmbfit{ext}_{\mathrmbfit{G}}(\mathrmbfit{rel}(\mathrmbfit{h})(\rho))$}
\newline
{\scriptsize $\mathrmbfit{univ}(\mathrmbfit{h})^{-1}(\mathrmbfit{ext}_{\mathrmbfit{F}}(\rho))=
\mathrmbfit{ext}_{\mathrmbfit{G}}(\mathrmbfit{rel}(\mathrmbfit{h})(\rho))$}
\end{center}

Assume structure morphism
$\mathrmbf{A}\stackrel{\mathrmbfit{h}}{\rightleftarrows}\mathrmbf{B}$
has identity universe component.
Then
$\mathrmbfit{rel}(\mathrmbfit{h})$ corresponds to $\mathrmbfit{obj}(\mathrmbfit{H})$
and
$\mathrmbfit{var}(\mathrmbfit{h})$ corresponds to $\mathrmbfit{mor}(\mathrmbfit{H})$.
\newline
Is this a limitation, since $\mathrmbfit{var}(\mathrmbfit{h})$ must be an isomorphism.


\begin{center}
\begin{tabular}{c@{\hspace{20pt}}c}
\\ \\
\setlength{\unitlength}{0.5pt}
\begin{picture}(160,0)(0,0)
\put(-40,80){\makebox(0,0){\footnotesize{$\mathrmbf{A}$}}}
\put(40,80){\makebox(0,0){\footnotesize{$\mathrmbf{B}$}}}
\put(0,0){\makebox(0,0){\footnotesize{$\mathrmbf{Set}$}}}
\put(0,90){\makebox(0,0){\scriptsize{$\mathrmbfit{H}$}}}
\put(-30,40){\makebox(0,0)[r]{\scriptsize{$\mathrmbfit{F}$}}}
\put(30,40){\makebox(0,0)[l]{\scriptsize{$\mathrmbfit{G}$}}}
\put(-25,80){\vector(1,0){50}}
\put(-30,60){\vector(1,-2){20}}
\put(30,60){\vector(-1,-2){20}}
\put(0,120){\makebox(0,0){\scriptsize{$\mathrmbfit{ext}_{\mathrmbfit{F}}(\rho)=\mathrmbfit{ext}_{\mathrmbfit{G}}(\mathrmbfit{H}(\rho))$}}}
\end{picture}
&
\setlength{\unitlength}{0.45pt}
\begin{picture}(120,80)(0,0)
\put(0,80){\makebox(0,0){\footnotesize{$\int\mathrmbfit{F}$}}}
\put(120,80){\makebox(0,0){\footnotesize{$\int\mathrmbfit{G}$}}}
\put(0,0){\makebox(0,0){\footnotesize{$\mathrmbf{A}$}}}
\put(120,0){\makebox(0,0){\footnotesize{$\mathrmbf{B}$}}}
\put(-10,40){\makebox(0,0)[r]{\scriptsize{$\mathrmbfit{pr}$}}}
\put(130,40){\makebox(0,0)[l]{\scriptsize{$\mathrmbfit{pr}$}}}
\put(60,90){\makebox(0,0){\scriptsize{$\int\mathrmbfit{H}$}}}
\put(60,-10){\makebox(0,0){\scriptsize{$\mathrmbfit{H}$}}}
\put(0,60){\vector(0,-1){40}}
\put(120,60){\vector(0,-1){40}}
\put(20,80){\vector(1,0){80}}
\put(20,0){\vector(1,0){80}}
\put(-40,110){\begin{picture}(0,30)(0,0)
\put(0,30){\makebox(0,0){\scriptsize{$x:{\langle{\rho,r}\rangle}\rightarrow{\langle{\rho',r'}\rangle}$}}}
\put(0,15){\makebox(0,0){\scriptsize{$x:\rho\rightarrow\rho'$}}}
\put(0,0){\makebox(0,0){\scriptsize{$\mathrmbfit{H}(x)(r)=r'$}}}
\end{picture}}
\put(60,120){\makebox(0,0){\large{$\mapsto$}}}
\put(200,110){\begin{picture}(0,30)(0,0)
\put(0,30){\makebox(0,0){\scriptsize{$\mathrmbfit{H}(x):{\langle{\mathrmbfit{H}(\rho),r}\rangle}\rightarrow{\langle{\mathrmbfit{H}(\rho'),r'}\rangle}$}}}
\put(0,15){\makebox(0,0){\scriptsize{$\mathrmbfit{H}(x):\mathrmbfit{H}(\rho)\rightarrow\mathrmbfit{H}(\rho')$}}}
\put(0,0){\makebox(0,0){\scriptsize{$\mathrmbfit{F}(x)=\mathrmbfit{G}(\mathrmbfit{H}((x))(r)=r'$}}}
\end{picture}}
\end{picture}
\\ & \\
\end{tabular}
\end{center}


\begin{center}
\setlength{\extrarowheight}{3pt}
{\footnotesize $\begin{array}{@{\hspace{-100pt}}r@{\hspace{5pt}}|c@{\hspace{5pt}}|@{\hspace{5pt}}c|}
\multicolumn{1}{r}{} & \multicolumn{1}{c}{\;\;\;\;\mathsfbf{IFF}\;\;\;\;\;} & \multicolumn{1}{c}{\mathsfbf{Db=Cat}}
\\ \cline{2-3}
& \text{ structure}\; \mathrmbf{A} 
= {\langle{\mathrmbfit{refer}(\mathrmbf{A}),\mathrmbfit{sign}(\mathrmbf{A})}\rangle}
& \text{database state}\; \mathrmbfit{F} : \mathrmbf{A} \rightarrow \mathrmbf{Set}
\\ \cline{2-3}
\text{database schema} 
& {\langle{\mathrmbfit{var}(\mathrmbf{A}),\mathrmbfit{typ}(\mathrmbfit{rel}(\mathrmbf{A})),\alpha_{1},\ast}\rangle}
& \mathrmbf{A}
= {\langle{\mathrmbfit{mor}(\mathrmbf{A}),\mathrmbfit{obj}(\mathrmbf{A}),\mathrmbfit{src}_{A},\mathrmbfit{tgt}_{A}}\rangle}
\\
\text{relational type or table} 
& \rho \in \mathrmbfit{typ}(\mathrmbfit{rel}(\mathrmbf{A})) 
& \rho \in \mathrmbfit{obj}(\mathrmbf{A})
\\
\cline{2-3}
\text{relational columns}
& x \in \mathrmbfit{var}(\mathrmbf{A})
& x \in \mathrmbfit{mor}(\mathrmbf{A})
\\
x \text{ is col of table } \rho 
& x \in {\alpha}(\rho)
&
\\
\text{column type}
& \ast(x)
& \mathrmbfit{tgt}_{A}(x)
\\
\text{ col } x \text{ has type } \rho' 
& x \in {\ast}^{-1}(\rho')
&
\\ 
\text{col of } \rho \text{ valued in } \rho'
& \delta_{1}(\rho)_{x} = \rho'
& x : \rho \rightarrow \rho'
\\ \cline{2-3}
\text{relational rows}
& \mathrmbfit{univ}(\mathrmbf{A})
& \mathrmbfit{obj}(\int\mathrmbfit{F}) = \{ (\rho,r) \mid \rho \in \mathrmbfit{obj}(A), r \in \mathrmbfit{F}(\rho) \}
\\
\text{relational rows of } \rho
& \mathrmbfit{ext}_{\mathrmbfit{F}}(\rho)
& \mathrmbfit{F}(\rho)
\\ \cline{2-3}
\text{relational arity (columns)}
& \alpha_{1}(\rho)
& \{ x \mid \mathrmbfit{src}_{\mathrmbf{A}}(x) = \rho \}
\\
\text{relational signature}
& \delta_{1}(\rho) : \alpha_{1}(\rho) \rightarrow \mathrmbfit{typ}(\mathrmbfit{rel}(\mathrmbf{A})) 
: x \mapsto \delta_{1}(\rho)_{x} = \rho'
& 
(\mathrmbfit{tgt}_{\mathrmbf{A}}(x) \mid \mathrmbfit{src}_{\mathrmbf{A}}(x) = \rho)
\\ 
\text{tupling arity}
& \alpha_{0}(r)
& 
\mathrmbfit{arity}(r) = \mathrmbfit{arity}(\rho) \text{ if } r \in \mathrmbfit{F}(\rho)
\\
\text{row record}
& \delta_{0}(r) : \alpha_{0}(r) \rightarrow \mathrmbfit{univ}(\mathrmbf{A}) 
& 
(\mathrmbfit{F}(x)(r) \mid \mathrmbfit{src}_{\mathrmbf{A}}(x) = \rho)
\\ 
\cline{2-3}
\text{cell indices of } \rho
& (r,x) \text{ where } r \models_{\mathrmbf{A}} \rho, x \in \alpha_{0}(r)
& (r,x) \text{ where } r \in \mathrmbfit{F}(\rho), \rho = \mathrmbfit{src}_{A}(x)
\\
\text{set of cell values (universe)}
& \mathrmbfit{univ}(\mathrmbf{A})
& \coprod_{r,x} \mathrmbfit{F}(x)(r)
\\
\cline{2-3}
\end{array}$}
\end{center}
%

\newpage
\section{Framework}\label{framework}

\begin{meta-axiom}\label{specification:preorder}
{\normalsize$\blacksquare$}
There is an adjointly indexed preorder of specifications
\footnotesize\[
\mathrmbfit{spec} : \mathrmbf{Lang}^{\mathrm{op}} \rightarrow \mathrmbf{Adj}.
\]\normalsize
\end{meta-axiom}
For any language $\Sigma$,
there is a fiber preorder $\mathrmbfit{spec}(\Sigma) 
= {\langle{\mathrmbfit{spec}(\Sigma),\geq_{\Sigma}}\rangle}$,
where an element $T \in \mathrmbfit{spec}(\Sigma)$ is called a $\Sigma$-specification
and an ordering $T \geq_{\Sigma} T'$ in $\mathrmbfit{spec}(\Sigma)$ is called 
a reverse entailment (concept lattice, or generalization-specialization) $\Sigma$-specification ordering. 
For any language morphism $\sigma : \Sigma_{1} \rightarrow \Sigma_{2}$,
there is a inverse image fiber monotonic function
$\mathrmbfit{spec}(\sigma) = \mathrmbfit{inv}(\sigma) 
: \mathrmbfit{spec}(\Sigma_{2}) \rightarrow \mathrmbfit{spec}(\sigma_{1})$,
which is right adjoint 
to a direct image fiber monotonic function
$\mathrmbfit{dir}(\sigma)
: \mathrmbfit{spec}(\Sigma_{1}) \rightarrow \mathrmbfit{spec}(\sigma_{2})$:
\footnotesize\[
\mathrmbfit{dir}(\sigma)(T_{1}) \geq_{\Sigma_{2}} T_{2}
\;\;\text{\underline{iff}}\;\;
T_{1} \geq_{\Sigma_{1}} \mathrmbfit{inv}(\sigma)(T_{2})
\]\normalsize
for each source specification $T_{1} \in \mathrmbfit{spec}(\sigma_{1})$
and target specification $T_{2} \in \mathrmbfit{spec}(\sigma_{2})$.

The homogenization (Grothendieck construction) is a fibered category $\mathrmbf{Spec}$ 
with index projection functor $\mathrmbfit{pr} : \mathrmbf{Spec} \rightarrow \mathrmbf{Lang}$.
An object in $\mathrmbf{Spec}$,
called a specification,
is a pair $\mathcal{T} = {\langle{\Sigma,T}\rangle}$,
where $\Sigma$ is a language and 
$T \in \mathrmbfit{spec}(\Sigma)$ is a $\Sigma$-specification.
A morphism $\sigma : {\langle{\Sigma_{1},T_{1}}\rangle} \rightarrow {\langle{\Sigma_{2},T_{2}}\rangle}$ in $\mathrmbf{Spec}$, 
called a specification morphism,
is a language morphism $\sigma : \Sigma_{1} \rightarrow \Sigma_{2}$,
where $T_{1} \geq_{\Sigma_{1}} \mathrmbfit{inv}(\sigma)(T_{2})=\mathrmbfit{spec}(\sigma)(T_{2})$
is a generalization-specialization ordering in the source fiber preorder $\mathrmbfit{spec}(\Sigma_{1})$;
equivalently $\mathrmbfit{dir}(\sigma)(T_{1}) \geq_{\Sigma_{2}} T_{2}$
is a generalization-specialization ordering in the target fiber preorder $\mathrmbfit{spec}(\Sigma_{2})$.

\begin{meta-axiom}\label{structure:category}
{\normalsize$\blacksquare$}
There is an indexed category of structures
\footnotesize\[
\mathrmbfit{struc} : \mathrmbf{Lang}^{\mathrm{op}} \rightarrow \mathrmbf{Cat}.
\]\normalsize
\end{meta-axiom}
For any language $\Sigma$,
there is a fiber category $\mathrmbfit{struc}(\Sigma_{1})$,
where an object $M \in \mathrmbfit{struc}(\Sigma)$ is called a $\Sigma$-structure 
and a morphism $f : M \rightarrow M'$ in $\mathrmbfit{struc}(\Sigma)$ is called a $\Sigma$-structure morphism~\footnote{For each $\Sigma$-structure morphism $f : M \rightarrow M'$,
the intent indexed functor (see subsection~\ref{intent})
associates a generalization-specialization $\Sigma$-specification ordering
$\mathrmbfit{int}_{\Sigma}(M) \geq_{\Sigma} \mathrmbfit{int}_{\Sigma}(M')$.
By means of the intent functor,
we regard the fiber category $\mathrmbfit{struc}(\Sigma_{1})$
to be a semantic analog of the concept lattice at $\Sigma$ generated by the satisfaction relation,
with the $\Sigma$-structure morphism $f : M \rightarrow M'$
as an abstract generalization-specialization $\Sigma$-structure ordering pointing downward in the concept lattice.}. 
For any language morphism $\sigma : \Sigma_{1} \rightarrow \Sigma_{2}$,
there is a fiber functor
$\mathrmbfit{struc}(\sigma) : \mathrmbfit{struc}(\Sigma_{2}) \rightarrow \mathrmbfit{struc}(\Sigma_{1})$,
which maps a $\Sigma_{2}$-structure $M_{2}$
to the $\Sigma_{1}$-structure $\mathrmbfit{struc}(\sigma)(M_{2})$
and maps a $\Sigma_{2}$-structure morphism $f_{2} : M_{2} \rightarrow M'_{2}$
to the $\Sigma_{1}$-structure morphism 
$\mathrmbfit{struc}(\sigma)(f_{2}) : \mathrmbfit{struc}(\sigma)(M_{2}) \rightarrow \mathrmbfit{struc}(\sigma)(M'_{2})$.

The homogenization (Grothendieck construction) is a fibered category $\mathrmbf{Struc}$ 
with index projection functor $\mathrmbfit{pr} : \mathrmbf{Struc} \rightarrow \mathrmbf{Lang}$.
An object in $\mathrmbf{Struc}$,
called a structure,
is a pair $\mathcal{M} = (\Sigma,M)$,
where $\Sigma$ is a language and 
$M \in \mathrmbfit{struc}(\Sigma)$ is a $\Sigma$-structure.
A morphism in $\mathrmbf{Struc}$, 
called a structure morphism,
is a pair $(\sigma,h) : (\Sigma_{1},M_{1}) \rightarrow (\Sigma_{2},M_{2})$, 
where $\sigma : \Sigma_{1} \rightarrow \Sigma_{2}$ is a language morphism 
and $h : M_{1} \rightarrow \mathrmbfit{struc}(\sigma)(M_{2})$ is a $\Sigma_{1}$-structure morphism,
a morphism in the fiber category $\mathrmbfit{struc}(\Sigma_{1})$.~\footnote{Giving 
the generalization-specialization $\Sigma_{1}$-specification ordering
\newline
$\mathrmbfit{int}_{\Sigma_{1}}(M_{1}) 
\geq_{\Sigma_{1}} \mathrmbfit{int}_{\Sigma_{1}}(\mathrmbfit{struc}(\sigma)(M_{2}))
= \mathrmbfit{spec}(\sigma)(\mathrmbfit{int}_{\Sigma_{2}}(M_{2}))$,
hence implying that
$\sigma : (\Sigma_{1},\mathrmbfit{int}_{\Sigma_{1}}(M_{1})) \rightarrow (\Sigma_{2},\mathrmbfit{int}_{\Sigma_{1}}(M_{2}))$
is a specification morphism.
This defines the intent fibered functor,
the homogenization of the intent indexed functor.}

\begin{definition}\label{logic:category}
{\normalsize$\blacksquare$}
The logic indexed category 
is the product of the structure and specification indexed categories
\footnotesize\[
\mathrmbfit{log} = \mathrmbfit{struc}{\times}\mathrmbfit{spec} : 
\mathrmbf{Lang}^{\mathrm{op}} \rightarrow \mathrmbf{Cat}.
\]\normalsize
There are indexed projection functors
\footnotesize\[
\mathrmbfit{struc} \stackrel{\mathrmbfit{pr}_{0}}{\Longleftarrow} \mathrmbfit{log} \stackrel{\mathrmbfit{pr}_{1}}{\Longrightarrow} \mathrmbfit{spec}.
\]\normalsize
\end{definition}
For any language $\Sigma$,
there is a fiber category $\mathrmbfit{log}(\Sigma_{1})=\mathrmbfit{struc}(\Sigma_{1}){\times}\mathrmbfit{spec}(\Sigma_{1})$,
where an object $L = {\langle{M,T}\rangle} \in \mathrmbfit{log}(\Sigma)$,
called a $\Sigma$-logic,
consists of a $\Sigma$-structure $M$ and a $\Sigma$-specification $T$, and
a morphism 
$f : L = {\langle{M,T}\rangle} \rightarrow {\langle{M',T'}\rangle} = L'$,
called a $\Sigma$-logic morphism,
consists of a $\Sigma$-structure morphism $f : M \rightarrow M'$
and a $\Sigma$-specification ordering $T \geq_{\Sigma} T'$~\footnote{In general,
this ordering is independent of the $\Sigma$-specification ordering
$\mathrmbfit{int}_{\Sigma}(M) \geq_{\Sigma} \mathrmbfit{int}_{\Sigma}(M')$
associated with the $\Sigma$-structure morphism $f : M \rightarrow M'$.}.
The projection functors 
map a logic $L = {\langle{M,T}\rangle}$
to its components $\mathrmbfit{pr}_{0}(L) = M$ and $\mathrmbfit{pr}_{1}(L) = T$, and 
map a logic morphism $f$
to its components $\mathrmbfit{pr}_{0}(f) = f : M \rightarrow M'$ and $\mathrmbfit{pr}_{1}(f) = (T \geq_{\Sigma} T')$.
For any language morphism $\sigma : \Sigma_{1} \rightarrow \Sigma_{2}$,
there is a fiber functor
$\mathrmbfit{log}(\sigma) : \mathrmbfit{log}(\Sigma_{2}) \rightarrow \mathrmbfit{log}(\Sigma_{1})$,
which maps a $\Sigma_{2}$-logic $L_{2} = {\langle{M_{2},T_{2}}\rangle}$
to the $\Sigma_{1}$-logic $\mathrmbfit{log}(\sigma)(L_{2}) = {\langle{\mathrmbfit{struc}(\sigma)(M_{2}),\mathrmbfit{spec}(\sigma)(T_{2})}\rangle}$
and maps a $\Sigma_{2}$-logic morphism $f_{2} : L_{2} \rightarrow L'_{2}$
consisting of a $\Sigma_{2}$-structure morphism $f_{2} : M_{2} \rightarrow M'_{2}$ 
and a $\Sigma_{2}$-specification ordering $T_{2} \geq_{\Sigma} T'_{2}$
to the $\Sigma_{1}$-logic morphism 
$\mathrmbfit{log}(\sigma)(f_{2}) : {\langle{\mathrmbfit{struc}(\sigma)(M_{2}),\mathrmbfit{spec}(\sigma)(T_{2})}\rangle} \rightarrow {\langle{\mathrmbfit{struc}(\sigma)(M'_{2}),\mathrmbfit{spec}(\sigma)(T'_{2})}\rangle}$
consisting of a $\Sigma_{1}$-structure morphism 
$\mathrmbfit{struc}(\sigma)(f_{2} : M_{2} \rightarrow M'_{2}) = 
\mathrmbfit{struc}(\sigma)(f_{2})  : \mathrmbfit{struc}(\sigma)(M_{2}) \rightarrow \mathrmbfit{struc}(\sigma)(M'_{2})$
and a $\Sigma_{1}$-specification ordering
$\mathrmbfit{spec}(\sigma)(T_{2} \geq_{\Sigma_{2}} T'_{2}) = 
(\mathrmbfit{spec}(\sigma)(T_{2}) \geq_{\Sigma_{1}} \mathrmbfit{spec}(\sigma)(T'_{2}))$.
For any language morphism $\sigma : \Sigma_{1} \rightarrow \Sigma_{2}$,
projection satisfies the naturality commutative diagrams
$\mathrmbfit{log}(\sigma) \circ \mathrmbfit{pr}_{0,\Sigma_{1}} = 
\mathrmbfit{pr}_{0,\Sigma_{2}} \circ \mathrmbfit{struc}(\sigma)$
and
$\mathrmbfit{log}(\sigma) \circ \mathrmbfit{pr}_{1,\Sigma_{1}} = 
\mathrmbfit{pr}_{1,\Sigma_{2}} \circ \mathrmbfit{spec}(\sigma)$~\footnote{since,
$\mathrmbfit{pr}_{0,\Sigma_{1}}(\mathrmbfit{log}(\sigma)(f_{2} : L_{2} \rightarrow L'_{2}))
= \mathrmbfit{struc}(\sigma)(f_{2}) : \mathrmbfit{struc}(\sigma)(M_{2}) \rightarrow \mathrmbfit{struc}(\sigma)(M'_{2})
= \mathrmbfit{struc}(\sigma)(f_{2} : M_{2} \rightarrow M'_{2})
= \mathrmbfit{struc}(\sigma)(\mathrmbfit{pr}_{\mathrmbfit{struc},\Sigma_{2}}(f_{2} : L_{2} \rightarrow L'_{2}))$
and
$\mathrmbfit{pr}_{1,\Sigma_{1}}(\mathrmbfit{log}(\sigma)(f_{2} : L_{2} \rightarrow L'_{2}))
= (\mathrmbfit{spec}(\sigma)(T_{2}) \geq_{\Sigma_{1}} \mathrmbfit{spec}(\sigma)(T'_{2}))
= \mathrmbfit{spec}(\sigma)(T_{2} \geq_{\Sigma_{2}} T'_{2})
= \mathrmbfit{spec}(\sigma)(\mathrmbfit{pr}_{1,\Sigma_{2}}(f_{2} : L_{2} \rightarrow L'_{2}))$.}.

The homogenization (Grothendieck construction) is the fibered product category $\mathrmbf{Log}=\mathrmbf{Struc}{\times}\mathrmbf{Spec}$ 
with an index projection $\mathrmbfit{pr} : \mathrmbf{Log} \rightarrow \mathrmbf{Lang}$
and
product projection fibered functors
\footnotesize\[
\mathrmbf{Struc} \stackrel{\mathrmbfit{pr}_{0}}{\longleftarrow} \mathrmbf{Log} \stackrel{\mathrmbfit{pr}_{1}}{\longrightarrow} \mathrmbfit{Spec},
\]\normalsize
where product projections commute with fibered projections 
$\mathrmbfit{pr}_{0} \circ \mathrmbfit{pr} 
= \mathrmbfit{pr}
= \mathrmbfit{pr}_{1} \circ \mathrmbfit{pr}$.
An object in $\mathrmbf{Log}$,
called a logic,
is a triple $\mathcal{L} = (\Sigma,M,T)$,
where $\Sigma$ is a language, 
$M \in \mathrmbfit{struc}(\Sigma)$ is a $\Sigma$-structure
and 
$T \in \mathrmbfit{spec}(\Sigma)$ is a $\Sigma$-specification.
A morphism in $\mathrmbf{Log}$, 
called a logic morphism,
is a pair $(\sigma,f) : (\Sigma_{1},M_{1},T_{1}) \rightarrow (\Sigma_{2},M_{2},T_{2})$, 
where $\sigma : \Sigma_{1} \rightarrow \Sigma_{2}$ is a language morphism, and
$f : (M_{1},T_{1}) \rightarrow 
\mathrmbfit{log}(\sigma)(M_{2},T_{2}) = (\mathrmbfit{struc}(\sigma)(M_{2}),\mathrmbfit{inv}(\sigma)(T_{2}))$
is a $\Sigma_{1}$-logic morphism; 
that is,
$f : M_{1} \rightarrow \mathrmbfit{struc}(\sigma)(M_{2})$ is a $\Sigma_{1}$-structure morphism, and 
$T_{1} \geq_{\Sigma_{1}} \mathrmbfit{inv}(\sigma)(T_{2})$ is a $\Sigma_{1}$-specification ordering.
Hence,
a logic morphism $(\sigma,f) : (\Sigma_{1},M_{1},T_{1}) \rightarrow (\Sigma_{2},M_{2},T_{2})$
consists of
a structure morphism $(\sigma,f) : (\Sigma_{1},M_{1}) \rightarrow (\Sigma_{2},M_{2})$ and
a specification morphism $\sigma : (\Sigma_{1},T_{1}) \rightarrow (\Sigma_{2},T_{2})$. 


\begin{definition}\label{sound:logic:category}
{\normalsize$\blacksquare$}
The indexed category of sound logics (Figure~\ref{truth:architecture})
is the full indexed subcategory of the logic indexed category
consisting of only sound logics and all logic morphisms between them
\footnotesize\[
\mathrmbfit{inc} 
: \mathrmbfit{snd} \Rightarrow \mathrmbfit{log} 
: \mathrmbf{Lang}^{\mathrm{op}} \rightarrow \mathrmbf{Cat}.
\]\normalsize
There are indexed projection functors
~\footnote{Institutions do not have an intent functor, 
but only an intent object function.
The fiber subcategory $\mathrmbfit{snd}(\Sigma)$ is definable, but 
the fiber functor
$\mathrmbfit{snd}(\sigma) : \mathrmbfit{snd}(\Sigma_{2}) \rightarrow \mathrmbfit{snd}(\Sigma_{1})$
is not.
And
the naturality commutative diagram
is true only objectwise
$|\mathrmbfit{inc}_{\Sigma_{2}}| \cdot |\mathrmbfit{log}(\sigma)| 
= |\mathrmbfit{snd}(\sigma)| \cdot |\mathrmbfit{inc}_{\Sigma_{1}}|$.
Thus,
although the fibered category $\mathrmbf{Snd}$ is definable,
it is not the homogenization of an indexed category
$\mathrmbfit{snd} : \mathrmbf{Lang}^{\mathrm{op}} \rightarrow \mathrmbf{Cat}$.

But,
also of interest are the sound fibers $\mathrmbfit{snd}(\Sigma,T)$ over a specification ${\langle{\Sigma,T}\rangle}$
and the sound fiber functors
$\mathrmbfit{snd}(\sigma) : \mathrmbfit{snd}(\Sigma_{2},T_{2}) \rightarrow \mathrmbfit{snd}(\Sigma_{1},T_{1})$
over a specification morphism $\sigma : (\Sigma_{1},T_{1}) \rightarrow (\Sigma_{2},T_{2})$ in $\mathrmbf{Spec}$.}
\footnotesize\[
\mathrmbfit{struc} \stackrel{\mathrmbfit{pr}_{0}}{\Longleftarrow} \mathrmbfit{snd} \stackrel{\mathrmbfit{pr}_{1}}{\Longrightarrow} \mathrmbfit{spec},
\]\normalsize
defined by composition with the logic indexed projection functors
\footnotesize\[
\mathrmbfit{pr}_{0} = \mathrmbfit{inc} \circ \mathrmbfit{pr}_{0} 
\;\;\text{and}\;\;
\mathrmbfit{pr}_{1} = \mathrmbfit{inc} \circ \mathrmbfit{pr}_{1}.
\]\normalsize
\end{definition}
For any language $\Sigma$,
there is a fiber subcategory $\mathrmbfit{snd}(\Sigma) \hookrightarrow \mathrmbfit{log}(\Sigma)$
of sound logics,
where an object ${\langle{M,T}\rangle} \in \mathrmbfit{snd}(\Sigma)$
is of a $\Sigma$-logic 
that is sound $M \models_{\Sigma} T$, and
a morphism 
$f : {\langle{M,T}\rangle} \rightarrow {\langle{M',T'}\rangle}$
is a $\Sigma$-logic morphism,
consisting of a $\Sigma$-structure morphism $f : M \rightarrow M'$, 
so that 
$\mathrmbfit{int}_{\Sigma}(M) \geq_{\Sigma} \mathrmbfit{int}_{\Sigma}(M')$,
and a $\Sigma$-specification ordering $T \geq_{\Sigma} T'$. 
The projections 
map a sound logic ${\langle{M,T}\rangle}$
to its components $\mathrmbfit{pr}_{0}(M,T) = M$ and 
$\mathrmbfit{pr}_{1}(M,T) = T$, and 
map a logic morphism $f$
to its components $\mathrmbfit{pr}_{0}(f) = f : M \rightarrow M'$ and $\mathrmbfit{pr}_{1}(f) = (T \geq_{\Sigma} T')$.
For any language morphism $\sigma : \Sigma_{1} \rightarrow \Sigma_{2}$,
there is a fiber functor
$\mathrmbfit{snd}(\sigma) : \mathrmbfit{snd}(\Sigma_{2}) \rightarrow \mathrmbfit{snd}(\Sigma_{1})$,
which maps a sound $\Sigma_{2}$-logic ${\langle{M_{2},T_{2}}\rangle}$
to the sound $\Sigma_{1}$-logic 
$\mathrmbfit{snd}(\sigma)(M_{2},T_{2}) = \mathrmbfit{log}(\sigma)(M_{2},T_{2}) = {\langle{\mathrmbfit{struc}(\sigma)(M_{2}),\mathrmbfit{spec}(\sigma)(T_{2})}\rangle}$~\footnote{since 
$T_{2} \geq_{\Sigma_{2}} \mathrmbfit{int}_{\Sigma_{2}}(M_{2})$ implies
$\mathrmbfit{spec}(\sigma)(T_{2}) \geq_{\Sigma_{1}} 
\mathrmbfit{spec}(\sigma)(\mathrmbfit{int}_{\Sigma_{2}}(M_{2})) = 
\mathrmbfit{int}_{\Sigma_{1}}(\mathrmbfit{struc}(\sigma)(M_{2}))$}
and maps a $\Sigma_{2}$-logic morphism 
$f_{2} : {\langle{M_{2},T_{2}}\rangle} \rightarrow {\langle{M'_{2},T'_{2}}\rangle}$
between sound logics
to the $\Sigma_{1}$-logic morphism 
$\mathrmbfit{snd}(\sigma)(f_{2}) = \mathrmbfit{log}(\sigma)(f_{2}) : {\langle{\mathrmbfit{struc}(\sigma)(M_{2}),\mathrmbfit{spec}(\sigma)(T_{2})}\rangle} \rightarrow {\langle{\mathrmbfit{struc}(\sigma)(M'_{2}),\mathrmbfit{spec}(\sigma)(T'_{2})}\rangle}$
between sound logics.
For any language morphism $\sigma : \Sigma_{1} \rightarrow \Sigma_{2}$,
inclusion satisfies the naturality commutative diagram
$\mathrmbfit{inc}_{\Sigma_{2}} \circ \mathrmbfit{log}(\sigma) 
= \mathrmbfit{snd}(\sigma) \circ \mathrmbfit{inc}_{\Sigma_{1}}$;
or pointwise,
$\mathrmbfit{inc}_{\Sigma_{1}}(\mathrmbfit{snd}(\sigma)(M_{2},T_{2}))
= \mathrmbfit{log}(\sigma)(M_{2},T_{2})
= \mathrmbfit{log}(\sigma)(\mathrmbfit{inc}_{\Sigma_{2}}(M_{2},T_{2}))$
for every sound target logic ${\langle{M_{2},T_{2}}\rangle} \in \mathrmbfit{snd}(\Sigma_{2})$.
\begin{center}
\begin{tabular}{c}
\\ \\ \\
\setlength{\unitlength}{0.55pt}
\begin{picture}(130,80)(0,-5)
\put(0,0){\begin{picture}(0,0)(0,0)
\qbezier[40](-45,40)(-45,110)(0,110)
\qbezier[40](45,40)(45,110)(0,110)
\qbezier[40](-45,40)(-45,-30)(0,-30)
\qbezier[40](45,40)(45,-30)(0,-30)
\put(0,125){\makebox(0,0){$\overset{\text{\scriptsize\itshape{soundness}}}{\overbrace{\hspace{30pt}}}$}}
\put(0,80){\makebox(0,0){\footnotesize{$T$}}}
\put(5,40){\makebox(0,0){\rule[-0.5pt]{0.5pt}{6pt}\,\footnotesize{$\vee_{\Sigma}$}}}
\put(0,0){\makebox(0,0){\footnotesize{$\mathrmbfit{int}_{\Sigma}(M)$}}}
\put(0,-65){\makebox(0,0){$\underset{
\mbox{\scriptsize{\shortstack{${\langle{M,T}\rangle}$\\where\\$M\;\models_{\Sigma}\;T$}}}
}{\underbrace{\hspace{30pt}}}$}}
\end{picture}}
\put(130,0){\begin{picture}(0,0)(0,0)
\qbezier[40](-45,40)(-45,110)(0,110)
\qbezier[40](45,40)(45,110)(0,110)
\qbezier[40](-45,40)(-45,-30)(0,-30)
\qbezier[40](45,40)(45,-30)(0,-30)
\put(0,125){\makebox(0,0){$\overset{\text{\scriptsize\itshape{soundness}}}{\overbrace{\hspace{30pt}}}$}}
\put(0,80){\makebox(0,0){\footnotesize{$T'$}}}
\put(5,40){\makebox(0,0){\rule[-0.5pt]{0.5pt}{6pt}\,\footnotesize{$\vee_{\Sigma}$}}}
\put(0,0){\makebox(0,0){\footnotesize{$\mathrmbfit{int}_{\Sigma}(M')$}}}
\put(0,-65){\makebox(0,0){$\underset{
\mbox{\scriptsize{\shortstack{${\langle{M',T'}\rangle}$\\where\\$M'\;\models_{\Sigma}\;T'$}}}
}{\underbrace{\hspace{30pt}}}$}}
\end{picture}}
\put(-50,80){\makebox(0,0)[r]{$
\shortstack{\footnotesize{$\Sigma$}\text{-\scriptsize\itshape{specification}}\\\text{\scriptsize\itshape{morphism}}}
\;\left\{\rule{0pt}{14pt}\right.$}}
\put(-50,0){\makebox(0,0)[r]{$
\shortstack{\footnotesize{$\Sigma$}\text{-\scriptsize\itshape{structure}}\\\text{\scriptsize\itshape{morphism}}}
\;\left\{\rule{0pt}{14pt}\right.$}}
\put(65,80){\makebox(0,0){\footnotesize{$\geq_{\Sigma}$}}}
\put(65,0){\makebox(0,0){\footnotesize{$\geq_{\Sigma}$}}}
\put(65,-51){\makebox(0,0){\footnotesize{$\stackrel{f}{\longrightarrow}$}}}
\end{picture}
\\ \\ \\ \\ \\
\end{tabular}
\end{center}
The homogenization (Grothendieck construction) is the fibered subcategory 
$\mathrmbfit{inc} : \mathrmbf{Snd} \hookrightarrow \mathrmbf{Log}$ 
with an index projection 
$\mathrmbfit{pr} = \mathrmbfit{inc} \circ \mathrmbfit{pr} : \mathrmbf{Snd} \rightarrow \mathrmbf{Lang}$
and
product projection fibered functors
\footnotesize\[
\mathrmbf{Struc} \stackrel{\mathrmbfit{pr}_{0}}{\longleftarrow} \mathrmbf{Snd} \stackrel{\mathrmbfit{pr}_{1}}{\longrightarrow} \mathrmbfit{Spec},
\]\normalsize
define by
$\mathrmbfit{pr}_{0} = \mathrmbfit{inc} \circ \mathrmbfit{pr}_{0}$ and
$\mathrmbfit{pr}_{1} = \mathrmbfit{inc} \circ \mathrmbfit{pr}_{1}$,
where product projections commute with fibered projections 
$\mathrmbfit{pr}_{0} \circ \mathrmbfit{pr} 
= \mathrmbfit{pr}
= \mathrmbfit{pr}_{1} \circ \mathrmbfit{pr}$.
An object in $\mathrmbf{Snd}$,
called a sound logic,
is a triple $\mathcal{L} = (\Sigma,M,T)$,
where $\Sigma$ is a language, 
$M \in \mathrmbfit{struc}(\Sigma)$ is a $\Sigma$-structure
and 
$T \in \mathrmbfit{spec}(\Sigma)$ is a $\Sigma$-specification
that is true in $M$:
$M \models_{\Sigma} T$.
A morphism in $\mathrmbf{Snd}$
is a logic morphism between sound logics;
it is a pair $(\sigma,f) : (\Sigma_{1},M_{1},T_{1}) \rightarrow (\Sigma_{2},M_{2},T_{2})$, 
where $\sigma : \Sigma_{1} \rightarrow \Sigma_{2}$ is a language morphism, and
$f : (M_{1},T_{1}) \rightarrow 
\mathrmbfit{snd}(\sigma)(M_{2},T_{2}) = 
(\mathrmbfit{struc}(\sigma)(M_{2}),\mathrmbfit{inv}(\sigma)(T_{2}))$
is a $\Sigma_{1}$-logic morphism between sound logics; 
that is,
$f : M_{1} \rightarrow \mathrmbfit{struc}(\sigma)(M_{2})$ is a $\Sigma_{1}$-structure morphism
so that
$\mathrmbfit{int}_{\Sigma_{1}}(M_{1}) \geq_{\Sigma_{1}} 
\mathrmbfit{int}_{\Sigma_{1}}(\mathrmbfit{struc}(\sigma)(M_{2})) =
\mathrmbfit{spec}(\sigma)(\mathrmbfit{int}_{\Sigma_{2}}(M_{2}))$, and 
$T_{1} \geq_{\Sigma_{1}} \mathrmbfit{spec}(\sigma)(T_{2})$ is a $\Sigma_{1}$-specification ordering.
\begin{center}
\begin{tabular}{c}
\\ \\ \\
\setlength{\unitlength}{0.55pt}
\begin{picture}(130,80)(0,-5)
\put(0,0){\begin{picture}(0,0)(0,0)
\qbezier[40](-45,40)(-45,110)(0,110)
\qbezier[40](45,40)(45,110)(0,110)
\qbezier[40](-45,40)(-45,-30)(0,-30)
\qbezier[40](45,40)(45,-30)(0,-30)
\put(0,125){\makebox(0,0){$\overset{\text{\scriptsize\itshape{soundness}}}{\overbrace{\hspace{30pt}}}$}}
\put(0,80){\makebox(0,0){\footnotesize{$T_{1}$}}}
\put(5,40){\makebox(0,0){\rule[-0.5pt]{0.5pt}{6pt}\,\footnotesize{$\vee_{\Sigma_{1}}$}}}
\put(0,0){\makebox(0,0){\footnotesize{$\mathrmbfit{int}_{\Sigma_{1}}(M_{1})$}}}
\put(0,-65){\makebox(0,0){$\underset{
\mbox{\scriptsize{\shortstack{${\langle{M_{1},T_{1}}\rangle}$\\where\\$M_{1}\;\models_{\Sigma_{1}}\;T_{1}$}}}
}{\underbrace{\hspace{30pt}}}$}}
\end{picture}}
\put(130,0){\begin{picture}(0,0)(0,0)
\qbezier[40](-45,40)(-45,110)(0,110)
\qbezier[40](45,40)(45,110)(0,110)
\qbezier[40](-45,40)(-45,-30)(0,-30)
\qbezier[40](45,40)(45,-30)(0,-30)
\put(0,125){\makebox(0,0){$\overset{\text{\scriptsize\itshape{soundness}}}{\overbrace{\hspace{30pt}}}$}}
\put(0,80){\makebox(0,0){\footnotesize{$\mathrmbfit{spec}(\sigma)(T_{2})$}}}
\put(5,40){\makebox(0,0){\rule[-0.5pt]{0.5pt}{6pt}\,\footnotesize{$\vee_{\Sigma_{1}}$}}}
\put(-44,0){\makebox(0,0)[l]{\footnotesize{$
\mathrmbfit{int}_{\Sigma_{1}}(\mathrmbfit{struc}(\sigma)(M_{2}))=
\mathrmbfit{spec}(\sigma)(\mathrmbfit{int}_{\Sigma_{2}}(M_{2}))$}}}
\put(0,-65){\makebox(0,0){$\underset{
\mbox{\scriptsize{\shortstack{
\hspace{50pt}$
(\mathrmbfit{struc}(\sigma)(M_{2}),\mathrmbfit{spec}(\sigma)(T_{2}))
$\\where\\\hspace{30pt}$\mathrmbfit{struc}(\sigma)(M_{2})\;\models_{\Sigma_{1}}\;\mathrmbfit{spec}(\sigma)(T_{2})$}}}
}{\underbrace{\hspace{30pt}}}$}}
\end{picture}}
\put(-50,80){\makebox(0,0)[r]{$
\shortstack{\footnotesize{$\Sigma$}\text{-\scriptsize\itshape{specification}}\\\text{\scriptsize\itshape{morphism}}}
\;\left\{\rule{0pt}{14pt}\right.$}}
\put(-50,0){\makebox(0,0)[r]{$
\shortstack{\footnotesize{$\Sigma$}\text{-\scriptsize\itshape{structure}}\\\text{\scriptsize\itshape{morphism}}}
\;\left\{\rule{0pt}{14pt}\right.$}}
\put(65,80){\makebox(0,0){\footnotesize{$\geq_{\Sigma_{1}}$}}}
\put(65,0){\makebox(0,0){\footnotesize{$\geq_{\Sigma_{1}}$}}}
\put(65,-51){\makebox(0,0){\footnotesize{$\stackrel{f}{\longrightarrow}$}}}
\end{picture}
\\ \\ \\ \\ \\
\end{tabular}
\end{center}

\newpage
\begin{definition}\label{model:category}
{\normalsize$\blacksquare$}
A model is a structure that satisfies (models) a specification.
Since sound logics represent satisfaction (truth),
they provide a background for models.
Conversely,
models elaborate sound logics.
There is an indexed category 
\footnotesize\[
\mathrmbfit{mod} 
: \mathrmbf{Spec}^{\mathrm{op}} \rightarrow \mathrmbf{Cat}.
\]\normalsize
\end{definition}
For any specification ${\langle{\Sigma,T}\rangle}$,
there is a fiber category 
$\mathrmbfit{mod}(\Sigma,T) \hookrightarrow \mathrmbfit{snd}(\Sigma) : M \mapsto (M.T)$
of ${\langle{\Sigma,T}\rangle}$-models.
An object $M \in \mathrmbfit{mod}(\Sigma,T)$
is a ${\langle{\Sigma,T}\rangle}$-model,
a $\Sigma$-structure 
that satisfies (models) $T$, $M \models_{\Sigma} T$.
A morphism $f : M \rightarrow M'$ is a ${\langle{\Sigma,T}\rangle}$-model morphism,
a $\Sigma$-structure morphism $f : M \rightarrow M'$, 
so that 
$\mathrmbfit{int}_{\Sigma}(M) \geq_{\Sigma} \mathrmbfit{int}_{\Sigma}(M')$. 
For any specification morphism $\sigma : {\langle{\Sigma_{1},T_{1}}\rangle} \rightarrow {\langle{\Sigma_{2},T_{2}}\rangle}$
consisting of a language morphism $\sigma : \Sigma_{1} \rightarrow \Sigma_{2}$
with $T_{1} \geq_{\Sigma_{1}} \mathrmbfit{spec}(\sigma)(T_{2})$,
there is a fiber functor
$\mathrmbfit{mod}(\sigma) : \mathrmbfit{mod}(\Sigma_{2},T_{2}) \rightarrow \mathrmbfit{mod}(\Sigma_{1},T_{1})$,
which maps a ${\langle{\Sigma_{2},T_{2}}\rangle}$-model $M_{2}$
to the ${\langle{\Sigma_{1},T_{1}}\rangle}$-model $\mathrmbfit{struc}(\sigma)(M_{2})$
and maps a ${\langle{\Sigma_{2},T_{2}}\rangle}$-model morphism $f_{2} : M_{2} \rightarrow M_{2}'$
to the ${\langle{\Sigma_{1},T_{1}}\rangle}$-model morphism 
$\mathrmbfit{struc}(\sigma)(f_{2}) : \mathrmbfit{struc}(\sigma)(M_{2}) \rightarrow \mathrmbfit{struc}(\sigma)(M_{2}')$~\footnote{\raggedright
$T_{1} \geq_{\Sigma_{1}} \mathrmbfit{spec}(\sigma)(T_{2}) \geq_{\Sigma_{1}} \mathrmbfit{spec}(\sigma)(\mathrmbfit{int}_{\Sigma_{2}}(M_{2}))
= \mathrmbfit{int}_{\Sigma_{1}}(\mathrmbfit{struc}(\sigma)(M_{2}))$,
since $\sigma$ is a specification morphism and $M_{2}$ models $T_{2}$;
$\mathrmbfit{int}_{\Sigma_{1}}(\mathrmbfit{struc}(\sigma)(M_{2}))
= \mathrmbfit{spec}(\sigma)(\mathrmbfit{int}_{\Sigma_{2}}(M_{2}))
\geq_{\Sigma_{1}} \mathrmbfit{spec}(\sigma)(\mathrmbfit{int}_{\Sigma_{2}}(M_{2}'))
= \mathrmbfit{int}_{\Sigma_{1}}(\mathrmbfit{struc}(\sigma)(M_{2}'))$,
since $\mathrmbfit{spec}(\sigma)$ is monotonic
and $f_{2}$ is a structure morphism.}.


The homogenization (Grothendieck construction) is the fibered subcategory $\mathrmbf{Mod}$ 
with an index projection functor $\mathrmbfit{pr} : \mathrmbf{Mod} \rightarrow \mathrmbf{Spec}$.
An object in $\mathrmbf{Mod}$ is a model
consisting of a triple ${\langle{\Sigma,T,M}\rangle}$,
where ${\langle{\Sigma,T}\rangle}$ is a specification
and $M \in \mathrmbfit{mod}(\Sigma,T)$ is a ${\langle{\Sigma,T}\rangle}$-model,
$M \models_{\Sigma} T$.
A morphism in $\mathrmbf{Mod}$ is a model morphism
$(\sigma,f) : {\langle{\Sigma_{1},T_{1},M_{1}}\rangle} \rightarrow {\langle{\Sigma_{2},T_{2},M_{2}}\rangle}$
consisting of a theory morphism 
$\sigma : (\Sigma_{1},T_{1}) \rightarrow (\Sigma_{2},T_{2})$
and a ${\langle{\Sigma_{1},T_{1}}\rangle}$-model morphism
$f : M_{1} \rightarrow \mathrmbfit{mod}(\sigma)(M_{2})$;
hence,
$\sigma : \Sigma_{1} \rightarrow \Sigma_{2}$ is a language morphism
with 
$T_{1} \geq_{\Sigma_{1}} \mathrmbfit{spec}(\sigma)(T_{2})$ is a $\Sigma_{1}$-specification ordering,
and
$f : M_{1} \rightarrow \mathrmbfit{struc}(\sigma)(M_{2})$ is a $\Sigma_{1}$-structure morphism,
so that
$\mathrmbfit{int}_{\Sigma_{1}}(M_{1}) \geq_{\Sigma_{1}} 
\mathrmbfit{int}_{\Sigma_{1}}(\mathrmbfit{struc}(\sigma)(M_{2})) =
\mathrmbfit{spec}(\sigma)(\mathrmbfit{int}_{\Sigma_{2}}(M_{2}))$.
Clearly,
we have the isomorphism
$\mathrmbf{Mod} \cong \mathrmbf{Snd}$,
the fibered category of models is the fibered category of sound logics.

\begin{fact}
The extent $\mathrmbfit{ext}(\Sigma,T)$
of any specification ${\langle{\Sigma,T}\rangle}$
is initial in the fiber category $\mathrmbfit{mod}(\Sigma,T)$.
\end{fact}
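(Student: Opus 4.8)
The plan is to exhibit $\mathrmbfit{ext}(\Sigma,T)$ as the value at $T$ of the left adjoint to the intent functor $\mathrmbfit{int}_{\Sigma} : \mathrmbfit{struc}(\Sigma) \rightarrow \mathrmbfit{spec}(\Sigma)$, and to read off initiality directly from the universal property of that adjoint. Recall first that a morphism in the preorder fiber $\mathrmbfit{spec}(\Sigma)$ is precisely an ordering $T \geq_{\Sigma} T'$, and that (by the soundness diagram) the assertion $M \models_{\Sigma} T$ is the very same thing as the ordering $T \geq_{\Sigma} \mathrmbfit{int}_{\Sigma}(M)$. The truth adjunction $\mathrmbfit{ext}_{\Sigma} \dashv \mathrmbfit{int}_{\Sigma}$ then furnishes, naturally in $M$, a bijection
\[
\mathrmbfit{struc}(\Sigma)\bigl(\mathrmbfit{ext}(\Sigma,T),\,M\bigr)
\;\cong\;
\mathrmbfit{spec}(\Sigma)\bigl(T,\,\mathrmbfit{int}_{\Sigma}(M)\bigr),
\]
so that each $\Sigma$-structure morphism out of $\mathrmbfit{ext}(\Sigma,T)$ corresponds to exactly one soundness ordering, and conversely.

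First I would check that $\mathrmbfit{ext}(\Sigma,T)$ is an object of $\mathrmbfit{mod}(\Sigma,T)$. The unit of the adjunction at $T$ is an ordering $\eta_{T} : T \geq_{\Sigma} \mathrmbfit{int}_{\Sigma}(\mathrmbfit{ext}(\Sigma,T))$, which is exactly the statement $\mathrmbfit{ext}(\Sigma,T) \models_{\Sigma} T$. Hence $\mathrmbfit{ext}(\Sigma,T)$ is a ${\langle{\Sigma,T}\rangle}$-model and lives in the fiber $\mathrmbfit{mod}(\Sigma,T)$.

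Next I would establish initiality. Fix any model $M \in \mathrmbfit{mod}(\Sigma,T)$, so that $M \models_{\Sigma} T$, i.e. $T \geq_{\Sigma} \mathrmbfit{int}_{\Sigma}(M)$. Because $\mathrmbfit{spec}(\Sigma)$ is a preorder, the right-hand hom-set above is a singleton; by the adjunction bijection the left-hand hom-set $\mathrmbfit{struc}(\Sigma)(\mathrmbfit{ext}(\Sigma,T),M)$ is therefore also a singleton, yielding a unique $\Sigma$-structure morphism $\mathrmbfit{ext}(\Sigma,T) \rightarrow M$. It remains only to see that this structure morphism is a morphism of the fiber category of models: a morphism of $\mathrmbfit{mod}(\Sigma,T)$ is a structure morphism subject to the side condition $\mathrmbfit{int}_{\Sigma}(\mathrmbfit{ext}(\Sigma,T)) \geq_{\Sigma} \mathrmbfit{int}_{\Sigma}(M)$, but this condition is automatic, being nothing but the image under the covariant intent functor of the morphism itself. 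Thus $\mathrmbfit{mod}(\Sigma,T)$ is a full subcategory of $\mathrmbfit{struc}(\Sigma)$, the unique structure morphism is the unique model morphism, and $\mathrmbfit{ext}(\Sigma,T)$ is initial.

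The main obstacle is not the initiality bookkeeping but the input on which it rests: that the intent functor genuinely has a left adjoint whose unit witnesses soundness, i.e. that the Galois connection generated by the satisfaction relation at $\Sigma$ lifts to the categorical truth adjunction $\mathrmbfit{ext}_{\Sigma} \dashv \mathrmbfit{int}_{\Sigma}$. In a general institution this is exactly the extra structure that a logical environment axiomatizes; once it is in hand, the only remaining point is to confirm that the adjunction transpose is a genuine bijection of sets, so that uniqueness is secured in $\mathrmbfit{struc}(\Sigma)$ and not merely up to the preorder on specifications. This last point is immediate from naturality of the transpose together with the fact that hom-sets in the specification preorder are subsingletons.
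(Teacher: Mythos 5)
Your proposal is correct and follows essentially the same route as the paper: the paper's one-line proof is exactly the observation that $M \models_{\Sigma} T$, i.e.\ $T \geq_{\Sigma} \mathrmbfit{int}_{\Sigma}(M)$, holds iff there is a unique $\Sigma$-structure morphism $\mathrmbfit{ext}_{\Sigma}(T) \rightarrow M$, which is the fiberwise truth adjunction $\mathrmbfit{ext}_{\Sigma} \dashv \mathrmbfit{int}_{\Sigma}$ you invoke. You merely spell out the details the paper leaves implicit (the unit placing $\mathrmbfit{ext}(\Sigma,T)$ in the fiber $\mathrmbfit{mod}(\Sigma,T)$, and uniqueness via the subsingleton hom-sets of the specification preorder, which is the paper's cohesion axiom in disguise).
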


\begin{proof}
A $\Sigma$-model $M$ satisfies the $\Sigma$-specification $T$
when $T \geq_{\Sigma} \mathrmbfit{int}_{\Sigma}(M)$
iff there is a unique morphism $\mathrmbfit{ext}_{\Sigma}(T) \xrightarrow{\gtrdot} M$. 
\end{proof}


\begin{figure}
\begin{center}
\begin{tabular}{@{\hspace{60pt}}c@{\hspace{120pt}}c}
\setlength{\unitlength}{0.8pt}
\begin{picture}(0,130)(0,0)
\put(60,0){\begin{picture}(0,0)(0,0)
\put(40,130){\makebox(0,0){\footnotesize{$\mathrmbfit{log}(\Sigma_{1})$}}}
\put(18,100){\makebox(0,0)[l]{\footnotesize{$\mathrmbfit{snd}(\Sigma_{1})$}}}
\put(0,60){\makebox(0,0){\footnotesize{$\mathrmbfit{struc}(\Sigma_{1})$}}}
\put(80,60){\makebox(0,0){\footnotesize{$\mathrmbfit{spec}(\Sigma_{1})$}}}
\put(41,52){\makebox(0,0){\scriptsize{$\models_{\Sigma_{1}}$}}}
\put(30,60){\line(1,0){20}}
\put(40,115){\makebox(0,0){\footnotesize{$\cup$}}}
\put(24,120){\vector(-1,-2){24}}
\put(56,120){\vector(1,-2){24}}
\put(30,90){\vector(-1,-1){20}}
\put(50,90){\vector(1,-1){20}}
\qbezier[12](40,100)(40,115)(40,130)
\qbezier[24](0,60)(20,80)(40,100)
\qbezier[24](80,60)(60,80)(40,100)
\qbezier[34](0,60)(40,60)(80,60)
\end{picture}}
\put(0,0){\begin{picture}(0,0)(0,0)
\put(32,110){\makebox(0,0){\footnotesize{$\mathrmbfit{log}(\sigma)$}}}
\put(32,80){\makebox(0,0){\footnotesize{$\mathrmbfit{snd}(\sigma)$}}}
\put(14,32){\makebox(0,0)[l]{\footnotesize{$\mathrmbfit{struc}(\sigma)$}}}
\put(78,32){\makebox(0,0)[l]{\footnotesize{\shortstack{$\mathrmbfit{dir}(\sigma) 
\dashv 
\overset{\textstyle\mathrmbfit{spec}(\sigma)}{\overbrace{\mathrmbfit{inv}(\sigma)}}$}}}}
\put(0,80){\vector(2,1){80}}
\put(0,50){\vector(2,1){80}}
\put(120,50){\vector(-2,-1){80}}
\put(-40,10){\vector(2,1){80}}
\qbezier[60](100,130)(40,100)(-20,70)
\qbezier[60](100,100)(40,70)(-20,40)
\qbezier[60](60,60)(0,30)(-60,0)
\qbezier[60](140,60)(80,30)(20,0)
\qbezier[34](-60,0)(-20,0)(20,0)
\end{picture}}
\put(-60,-60){\begin{picture}(0,0)(0,0)
\put(40,130){\makebox(0,0){\footnotesize{$\mathrmbfit{log}(\Sigma_{2})$}}}
\put(18,100){\makebox(0,0)[l]{\footnotesize{$\mathrmbfit{snd}(\Sigma_{2})$}}}
\put(0,60){\makebox(0,0){\footnotesize{$\mathrmbfit{struc}(\Sigma_{2})$}}}
\put(80,60){\makebox(0,0){\footnotesize{$\mathrmbfit{spec}(\Sigma_{2})$}}}
\put(43,52){\makebox(0,0){\scriptsize{$\models_{\Sigma_{2}}$}}}
\put(30,60){\line(1,0){20}}
\put(40,115){\makebox(0,0){\footnotesize{$\cup$}}}
\put(24,120){\vector(-1,-2){24}}
\put(56,120){\vector(1,-2){24}}
\put(30,90){\vector(-1,-1){20}}
\put(50,90){\vector(1,-1){20}}
\qbezier[12](40,100)(40,115)(40,130)
\qbezier[24](0,60)(20,80)(40,100)
\qbezier[24](80,60)(60,80)(40,100)
\qbezier[34](0,60)(40,60)(80,60)
\end{picture}}
\end{picture}

&

\setlength{\unitlength}{1.0pt}
\begin{picture}(0,120)(0,0)
\put(-40,10){\begin{picture}(0,120)(0,0)
\put(40,70){\makebox(0,0){\normalsize{$\mathrmbf{Log}$}}}
\put(40,40){\makebox(0,0){\normalsize{$\mathrmbf{Snd}$}}}
\put(0,0){\makebox(0,0){\normalsize{$\mathrmbf{Struc}$}}}
\put(80,0){\makebox(0,0){\normalsize{$\mathrmbf{Spec}$}}}
\put(40,-8){\makebox(0,0){\small{$\mathrmbfit{int}$}}}
\thicklines
\put(22,0){\vector(1,0){36}}
\put(42,47){\vector(0,1){16}}\qbezier(42,47)(39,47)(36,47)\qbezier(36,47)(36,48.5)(36,50)
\put(27,60){\vector(-1,-2){24}}
\put(53,60){\vector(1,-2){24}}
\put(30,30){\vector(-1,-1){20}}
\put(50,30){\vector(1,-1){20}}
\thinlines
\qbezier[12](40,40)(40,55)(40,70)
\qbezier[24](0,0)(20,20)(40,40)
\qbezier[24](80,0)(60,20)(40,40)
\qbezier[34](0,0)(40,0)(80,0)
\end{picture}}
\end{picture}
\\
& \\
& \\
\itshape{indexed} & \itshape{fibered}
\\
& \\
& \\
\multicolumn{2}{p{360pt}}{
\begin{minipage}{360pt}
\begin{itemize}
\item 
For each language $\Sigma$,
truth (i.e., satisfaction) is concretely represented 
by the classification
$\mathrmbfit{cls}(\Sigma)={\langle{\mathrmbfit{spec}(\Sigma),\mathrmbfit{struc}(\Sigma),\models_{\Sigma}}\rangle}$
with the classification relation
between collections of objects of indexed categories.
An instance of classification 
$M \models_{\Sigma} T$
holds when the specification is more general than the intent of the structure
$T \geq_{\Sigma} \mathrmbfit{int}_{\Sigma}(M)$.
This asserts soundness for the $\Sigma$-logic
${\langle{M,T}\rangle} \in \mathrmbfit{log}(\Sigma) = \mathrmbfit{struc}(\Sigma){\times}\mathrmbfit{spec}(\Sigma)$.
The set of all sound $\Sigma$-logics,
the underlying set for the $\Sigma$-satisfaction relation, 
is also denoted by $\mathrmbfit{snd}(\Sigma) = {\models}_{\Sigma} \subseteq \mathrmbfit{log}(\Sigma)$.
Truth (i.e., satisfaction) is abstractly represented by membership 
${\langle{\Sigma,M,T}\rangle} \in \mathrmbf{Snd}$
in the fibered category of sound logics.
\newline
\item 
For each language morphism $\sigma : \Sigma_{1} \rightarrow \Sigma_{2}$,
truth (i.e., satisfaction) invariancy under change of notation is concretely represented 
by the infomorphism
\footnotesize\[
{\langle{\mathrmbfit{dir}(\sigma),\mathrmbfit{struc}(\sigma)}\rangle} :
\mathrmbfit{cls}(\Sigma_{1}) \rightleftarrows \mathrmbfit{cls}(\Sigma_{2}),
\]\normalsize
whose defining condition is equivalent to the naturality condition
$\mathrmbfit{struc}(\sigma) \circ \mathrmbfit{spec}(\Sigma_{1}) = 
 \mathrmbfit{spec}(\Sigma_{2}) \circ \mathrmbfit{spec}(\sigma)$
for the intent indexed functor 
$\mathrmbfit{int} : \mathrmbfit{struc} \Rightarrow \mathrmbfit{spec}$ 
(see subsection~\ref{intent})~\footnote{$T_1 \geq_{\Sigma_{1}} \mathrmbfit{int}_{\Sigma_{1}}(\mathrmbfit{struc}(\sigma)(M_{2}))
\;\text{\underline{iff}}\; \mathrmbfit{struc}(\sigma)(M_{2}) \models_{\Sigma_{1}} T_1
\;\text{\underline{iff}}\; M_{2} \models_{\Sigma_{2}} \mathrmbfit{dir}(\sigma)(T_1)
\newline
\;\text{\underline{iff}}\; \mathrmbfit{dir}(\sigma)(T_1) \geq_{\Sigma_{2}} \mathrmbfit{int}_{\Sigma_{2}}(M_{2})
\;\text{\underline{iff}}\; T_1 \geq_{\Sigma_{1}} \mathrmbfit{inv}(\sigma)(\mathrmbfit{int}_{\Sigma_{2}}(M_{2}))$}.
Truth (i.e., satisfaction) invariancy is abstractly represented by the intent fibered functor 
$\mathrmbfit{int} : \mathrmbf{Struc} \rightarrow \mathrmbf{Spec}$~\footnote{ These comments demonstrate that 
the classification logical environment $\mathtt{Cls}$ 
is applicable at the meta-level to any logical environment.}.
\newline
\end{itemize}
\end{minipage}
}
\\ & \\
\end{tabular}
\end{center}
\caption{Truth Architecture}
\label{truth:architecture}
\end{figure}

\newpage
\section{The Strict Aspect}

\subsection{Intent}\label{intent}

\begin{meta-axiom}\label{intent:functor}
{\normalsize$\blacksquare$}
There is an intent indexed functor
\footnotesize\[
\mathrmbfit{int} 
: \mathrmbfit{struc} \Rightarrow \mathrmbfit{spec} 
: \mathrmbf{Lang}^{\mathrm{op}} \rightarrow \mathrmbf{Cat}.
\]\normalsize
\end{meta-axiom}
For any language $\Sigma$,
there is a fiber functor
$\mathrmbfit{int}_{\Sigma} : \mathrmbfit{struc}(\Sigma) \rightarrow \mathrmbfit{spec}(\Sigma)$
that maps 
a $\Sigma$-structure $M \in \mathrmbfit{struc}(\Sigma)$ 
to its intent $\Sigma$-specification 
$\mathrmbfit{int}_{\Sigma}(M) \in \mathrmbfit{spec}(\Sigma)$, and
maps a $\Sigma$-structure morphism
$f : M \rightarrow M'$ in $\mathrmbfit{struc}(\Sigma)$
to its generalization-specialization $\Sigma$-specification ordering
$\mathrmbfit{int}_{\Sigma}(M) \geq_{\Sigma} \mathrmbfit{int}_{\Sigma}(M')$.
We interpret the latter as the assertion that $f : M \rightarrow M'$ points downward in the concept lattice.
Hence,
the fiber category $\mathrmbfit{struc}(\Sigma)$ 
generalizes 
the concept lattice of closed $\Sigma$-specifications.
We say that a $\Sigma$-structure $M$ \emph{satisfies} a $\Sigma$-specification $T$,
denoted
$M \models_{\Sigma} T$,
when $T$ is more general than the intent of $M$:
$T \geq_{\Sigma} \mathrmbfit{int}_{\Sigma}(M)$~\footnote{This is 
an abstract (point-less) axiomatization of specifications and the intents of structures.
The usual version of specifications given in the theory of institutions is point-wise, 
and starts with the relation of satisfaction.
In this paper truth semantics and satisfaction are given more algebraically.}.
For any language morphism $\sigma : \Sigma_{1} \rightarrow \Sigma_{2}$ in $\mathrmbf{Lang}$,
intent satisfies the naturality commutative diagram
\begin{center}
\begin{tabular}{c}
\\
\setlength{\unitlength}{0.65pt}
\begin{picture}(100,80)(0,0)
\put(0,80){\makebox(0,0){\footnotesize{$\mathrmbfit{struc}(\Sigma_{1})$}}}
\put(100,80){\makebox(0,0){\footnotesize{$\mathrmbfit{spec}(\Sigma_{1})$}}}
\put(0,0){\makebox(0,0){\footnotesize{$\mathrmbfit{struc}(\Sigma_{2})$}}}
\put(100,0){\makebox(0,0){\footnotesize{$\mathrmbfit{spec}(\Sigma_{2})$}}}
\put(50,92){\makebox(0,0){\scriptsize{$\mathrmbfit{int}_{\Sigma_{1}}$}}}
\put(105,40){\makebox(0,0)[l]{\scriptsize{$\mathrmbfit{spec}(\sigma)$}}}
\put(-5,40){\makebox(0,0)[r]{\scriptsize{$\mathrmbfit{struc}(\sigma)$}}}
\put(50,-14){\makebox(0,0){\scriptsize{$\mathrmbfit{int}_{\Sigma_{2}}$}}}
\put(0,20){\vector(0,1){40}}
\put(100,20){\vector(0,1){40}}
\put(35,80){\vector(1,0){30}}
\put(35,0){\vector(1,0){30}}
\end{picture}
\\ \\
\end{tabular}
\end{center}
or pointwise,
$\mathrmbfit{int}_{\Sigma_{1}}(\mathrmbfit{struc}(\sigma)(M_{2}))
= \mathrmbfit{inv}(\sigma)(\mathrmbfit{int}_{\Sigma_{2}}(M_{2}))$
for every target structure $M_{2} \in \mathrmbfit{struc}(\Sigma_{2})$.
This asserts the invariancy of truth (i.e., satisfaction) under change of notation
\footnotesize\[
\mathrmbfit{struc}(\sigma)(M_{2}) \models_{\Sigma_{1}} T_{1} 
\;\;\text{iff}\;\;
M_{2} \models_{\Sigma_{2}} \mathrmbfit{dir}(\sigma)(T_{1})
\]\normalsize
for every source specification $T_{2} \in \mathrmbfit{spec}(\Sigma_{1})$,
since
$T_{1} \geq_{\Sigma_{1}} \mathrmbfit{int}_{\Sigma_{1}}(\mathrmbfit{struc}(\sigma)(M_{2}))$ iff 
$T_{1} \geq_{\Sigma_{1}} \mathrmbfit{inv}(\sigma)(\mathrmbfit{int}_{\Sigma_{2}}(M_{2}))$ iff 
$\mathrmbfit{dir}(\sigma)(T_{1}) \geq_{\Sigma_{2}} \mathrmbfit{int}_{\Sigma_{2}}(M_{2})$.

The homogenization (Grothendieck construction) is the intent fibered functor 
$\mathrmbfit{int} : \mathrmbf{Struc} \rightarrow \mathrmbf{Spec}$ 
that commutes with index projections
$\mathrmbfit{int} \circ \mathrmbfit{pr} = \mathrmbfit{pr}$.
Intent
maps a structure ${\langle{\Sigma,M}\rangle}$
to the specification
$\mathrmbfit{int}(\Sigma,M) = {\langle{\Sigma,\mathrmbfit{int}_{\Sigma}(M)}\rangle}$,
and maps a structure morphism~\footnote{Recall that
a structure morphism,
is a pair $(\sigma,h) : (\Sigma_{1},M_{1}) \rightarrow (\Sigma_{2},M_{2})$, 
where $\sigma : \Sigma_{1} \rightarrow \Sigma_{2}$ is a language morphism 
and $h : M_{1} \rightarrow \mathrmbfit{struc}(\sigma)(M_{2})$ is a $\Sigma_{1}$-structure morphism,
a morphism in the fiber category $\mathrmbfit{struc}(\Sigma_{1})$,
implying the ordering
$\mathrmbfit{int}_{\Sigma_{1}}(M_{1}) \geq_{\Sigma_{1}} 
\mathrmbfit{int}_{\Sigma_{1}}(\mathrmbfit{struc}(\sigma)(M_{2}))$.}
$(\sigma,h) : (\Sigma_{1},M_{1}) \rightarrow (\Sigma_{2},M_{2})$
to the specification morphism
$\mathrmbfit{int}(\sigma,h) = \sigma :
\mathrmbfit{int}(\Sigma_{1},M_{1}) = {\langle{\Sigma_{1},\mathrmbfit{int}_{\Sigma_{1}}(M_{1})}\rangle} \rightarrow 
{\langle{\Sigma_{2},\mathrmbfit{int}_{\Sigma_{2}}(M_{2})}\rangle} = \mathrmbfit{int}(\Sigma_{2},M_{2})$,
where
$\mathrmbfit{int}_{\Sigma_{1}}(M_{1}) \geq_{\Sigma_{1}} 
\mathrmbfit{inv}(\sigma)(\mathrmbfit{int}_{\Sigma_{2}}(M_{2})) =
\mathrmbfit{int}_{\Sigma_{1}}(\mathrmbfit{struc}(\sigma)(M_{2}))$.


\begin{definition}\label{natural:logic:functor}
{\normalsize$\blacksquare$}
There is a natural logic indexed functor
\footnotesize\[
\widehat{\mathrmbfit{nat}} 
= (\mathrmbfit{1}_{\mathrmbf{struc}},\mathrmbfit{int}) 
: \mathrmbfit{struc} \Rightarrow \mathrmbfit{log}
: \mathrmbf{Lang}^{\mathrm{op}} \rightarrow \mathrmbf{Cat}.
\]\normalsize
that mediates the identity and intent indexed functors,
so that 
\footnotesize\[
\widehat{\mathrmbfit{nat}} \bullet \mathrmbfit{pr}_{0} 
= \mathrmbfit{1}_{\mathrmbf{struc}}
\;\;\text{and}\;\; 
\widehat{\mathrmbfit{nat}} \bullet \mathrmbfit{pr}_{1} 
= \mathrmbfit{int}.
\]\normalsize
\end{definition}
For any language $\Sigma$,
there is a natural logic fiber functor
$\widehat{\mathrmbfit{nat}}_{\Sigma} 
= (\mathrmbfit{1}_{\mathrmbf{struc}(\Sigma)},\mathrmbfit{int}_{\Sigma}) 
: \mathrmbfit{struc}(\Sigma) \rightarrow \mathrmbfit{log}(\Sigma)$
that maps a $\Sigma$-structure $M \in \mathrmbfit{struc}(\Sigma)$ 
to the (sound) $\Sigma$-logic $(M,\mathrmbfit{int}_{\Sigma}(M)) \in \mathrmbfit{log}(\Sigma)$, and
maps a $\Sigma$-structure morphism
$f : M \rightarrow M'$ in $\mathrmbfit{struc}(\Sigma)$
to the $\Sigma$-logic morphism
$\widehat{\mathrmbfit{nat}}(f) = f : 
(M,\mathrmbfit{int}_{\Sigma}(M)) \rightarrow
(M',\mathrmbfit{int}_{\Sigma}(M')) 
$.
For any language morphism $\sigma : \Sigma_{1} \rightarrow \Sigma_{2}$ in $\mathrmbf{Lang}$,
natural logic satisfies the naturality commutative diagram
\begin{center}
\begin{tabular}{c}
\\
\setlength{\unitlength}{0.6pt}
\begin{picture}(100,80)(0,0)
\put(-5,80){\makebox(0,0){\footnotesize{$\mathrmbfit{struc}(\Sigma_{1})$}}}
\put(100,80){\makebox(0,0){\footnotesize{$\mathrmbfit{log}(\Sigma_{1})$}}}
\put(-5,0){\makebox(0,0){\footnotesize{$\mathrmbfit{struc}(\Sigma_{2})$}}}
\put(100,0){\makebox(0,0){\footnotesize{$\mathrmbfit{log}(\Sigma_{2})$}}}
\put(55,92){\makebox(0,0){\scriptsize{$\widehat{\mathrmbfit{nat}}_{\Sigma_{1}}$}}}
\put(105,40){\makebox(0,0)[l]{\scriptsize{$\mathrmbfit{log}(\sigma)$}}}
\put(-5,40){\makebox(0,0)[r]{\scriptsize{$\mathrmbfit{struc}(\sigma)$}}}
\put(55,-14){\makebox(0,0){\scriptsize{$\widehat{\mathrmbfit{nat}}_{\Sigma_{2}}$}}}
\put(0,20){\vector(0,1){40}}
\put(100,20){\vector(0,1){40}}
\put(35,80){\vector(1,0){30}}
\put(35,0){\vector(1,0){30}}
\end{picture}
\\ \\
\end{tabular}
\end{center}
or pointwise,
$
\widehat{\mathrmbfit{nat}}_{\Sigma_{1}}(\mathrmbfit{struc}(\sigma)(M_{2}))
= (\mathrmbfit{struc}(\sigma)(M_{2}),\mathrmbfit{int}_{\Sigma_{1}}(\mathrmbfit{struc}(\sigma)(M_{2})))
= (\mathrmbfit{struc}(\sigma)(M_{2}),\mathrmbfit{inv}(\sigma)(\mathrmbfit{int}_{\Sigma_{2}}(M_{2}))))
= \mathrmbfit{log}(\sigma)(M_{2},\mathrmbfit{int}_{\Sigma_{2}}(M_{2}))
= \mathrmbfit{log}(\sigma)(\widehat{\mathrmbfit{nat}}_{\Sigma_{2}}(M_{2}))
$
for every target structure $M_{2} \in \mathrmbfit{struc}(\Sigma_{2})$.

The homogenization (Grothendieck construction) is the natural logic fibered functor 
$\widehat{\mathrmbfit{nat}} : \mathrmbf{Struc} \rightarrow \mathrmbf{Log}$ 
that commutes with index projections
$\widehat{\mathrmbfit{nat}} \circ \mathrmbfit{pr} = \mathrmbfit{pr}$.
Natural logic
maps a structure ${\langle{\Sigma,M}\rangle}$
to the logic
$\widehat{\mathrmbfit{nat}}(\Sigma,M) = {\langle{\Sigma,M,\mathrmbfit{int}_{\Sigma}(M)}\rangle}$,
and maps a structure morphism~\footnote{Recall that
a structure morphism,
is a pair $(\sigma,h) : (\Sigma_{1},M_{1}) \rightarrow (\Sigma_{2},M_{2})$, 
where $\sigma : \Sigma_{1} \rightarrow \Sigma_{2}$ is a language morphism 
and $h : M_{1} \rightarrow \mathrmbfit{struc}(\sigma)(M_{2})$ is a $\Sigma_{1}$-structure morphism,
a morphism in the fiber category $\mathrmbfit{struc}(\Sigma_{1})$,
implying the ordering
$\mathrmbfit{int}_{\Sigma_{1}}(M_{1}) \geq_{\Sigma_{1}} 
\mathrmbfit{int}_{\Sigma_{1}}(\mathrmbfit{struc}(\sigma)(M_{2}))$.}
$(\sigma,h) : (\Sigma_{1},M_{1}) \rightarrow (\Sigma_{2},M_{2})$
to the logic morphism
$\widehat{\mathrmbfit{nat}}(\sigma,h) = {\langle{\sigma,h}\rangle} :
\widehat{\mathrmbfit{nat}}(\Sigma_{1},M_{1}) = {\langle{\Sigma_{1},M_{1},\mathrmbfit{int}_{\Sigma_{1}}(M_{1})}\rangle} \rightarrow 
{\langle{\Sigma_{2},M_{2},\mathrmbfit{int}_{\Sigma_{2}}(M_{2})}\rangle} = \widehat{\mathrmbfit{nat}}(\Sigma_{2},M_{2})$,
where
$\mathrmbfit{int}_{\Sigma_{1}}(M_{1}) \geq_{\Sigma_{1}} 
\mathrmbfit{inv}(\sigma)(\mathrmbfit{int}_{\Sigma_{2}}(M_{2})) =
\mathrmbfit{int}_{\Sigma_{1}}(\mathrmbfit{struc}(\sigma)(M_{2}))$.

\begin{definition}\label{restricted:natural:logic:functor}
{\normalsize$\blacksquare$}
There is a restricted natural logic indexed functor
\footnotesize\[
\mathrmbfit{nat}
: \mathrmbfit{struc} \Rightarrow \mathrmbfit{snd}
: \mathrmbf{Lang}^{\mathrm{op}} \rightarrow \mathrmbf{Cat}.
\]\normalsize
that is the restriction of natural logic to sound logics,
so that
\footnotesize\[
\mathrmbfit{nat} \bullet \mathrmbfit{inc} = \widehat{\mathrmbfit{nat}} 
: \mathrmbfit{struc} \Rightarrow \mathrmbfit{log}.
\]\normalsize
\end{definition}
Clearly,
\footnotesize\[
\mathrmbfit{nat} \bullet \mathrmbfit{pr}_{0} 
= \mathrmbfit{1}_{\mathrmbf{Struc}}
\;\;\text{and}\;\; 
\mathrmbfit{nat} \bullet \mathrmbfit{pr}_{1} 
= \mathrmbfit{int}.
\]\normalsize

\begin{proposition}
There is an adjunction (reflection)
\footnotesize\[
\pi 
= {\langle{\mathrmbfit{pr}_{0} \dashv \mathrmbfit{nat},\eta,1}\rangle} 
: \mathrmbf{Snd} \rightarrow \mathrmbf{Struc}.
\]\normalsize
\end{proposition}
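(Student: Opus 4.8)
The plan is to exhibit the unit and counit explicitly and then verify the triangle identities, since the structure of the problem makes every ingredient concrete. First I would pin down the counit: because $\mathrmbfit{nat}\bullet\mathrmbfit{pr}_{0}=\mathrmbfit{1}_{\mathrmbf{Struc}}$ (recorded just after Definition~\ref{restricted:natural:logic:functor}), the composite $\mathrmbfit{pr}_{0}\circ\mathrmbfit{nat}$ is literally the identity endofunctor on $\mathrmbf{Struc}$, so I take the counit to be $\mathrmbfit{1}$, exactly as the statement announces. This is also what earns the adjunction the name \emph{reflection}: the right adjoint $\mathrmbfit{nat}$ is fully faithful precisely because its counit is an isomorphism, here an outright identity.

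For the unit, at a sound logic $\mathcal{L}={\langle{\Sigma,M,T}\rangle}$ I would set $\eta_{\mathcal{L}}=(\mathrmbfit{1}_{\Sigma},\mathrmbfit{1}_{M}):{\langle{\Sigma,M,T}\rangle}\rightarrow\mathrmbfit{nat}(\mathrmbfit{pr}_{0}(\mathcal{L}))={\langle{\Sigma,M,\mathrmbfit{int}_{\Sigma}(M)}\rangle}$, carried by the identity language morphism and the identity structure morphism. The one condition to check for this to be a legitimate $\mathrmbf{Snd}$-morphism is the target specification ordering $T\geq_{\Sigma}\mathrmbfit{spec}(\mathrmbfit{1}_{\Sigma})(\mathrmbfit{int}_{\Sigma}(M))=\mathrmbfit{int}_{\Sigma}(M)$, and this is exactly the soundness hypothesis $M\models_{\Sigma}T$ built into objects of $\mathrmbf{Snd}$. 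Thus the unit exists \emph{because} we restricted from $\mathrmbf{Log}$ to $\mathrmbf{Snd}$; over $\mathrmbf{Log}$ there would be no such arrow in general.

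Naturality of $\eta$ I expect to be routine: for a logic morphism $(\sigma,f):{\langle{\Sigma_{1},M_{1},T_{1}}\rangle}\rightarrow{\langle{\Sigma_{2},M_{2},T_{2}}\rangle}$ both composites around the square are carried by the same underlying pair $(\sigma,f)$, and since each fiber $\mathrmbfit{spec}(\Sigma)$ is a preorder the specification-ordering components coincide automatically. The two triangle identities then collapse because the counit is an identity: one reduces to $\mathrmbfit{pr}_{0}(\eta_{\mathcal{L}})=\mathrmbfit{1}_{(\Sigma,M)}$, which holds since $\eta_{\mathcal{L}}$ is carried by identities, and the other reduces to $\eta_{\mathrmbfit{nat}(\Sigma,M)}=\mathrmbfit{1}$, which holds because at $T=\mathrmbfit{int}_{\Sigma}(M)$ the defining ordering is mere reflexivity.

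The one substantive step, and hence the only place where I would slow down, is confirming that passing a $\mathrmbf{Struc}$-morphism to the corresponding $\mathrmbf{Snd}$-morphism always respects the target soundness constraint; equivalently, in hom-set form, that the identity-on-underlying-data map $\mathrmbf{Struc}(\mathrmbfit{pr}_{0}\mathcal{L},(\Sigma',M'))\rightarrow\mathrmbf{Snd}(\mathcal{L},\mathrmbfit{nat}(\Sigma',M'))$ is a bijection. Its well-definedness is the chain $T\geq_{\Sigma}\mathrmbfit{int}_{\Sigma}(M)\geq_{\Sigma}\mathrmbfit{int}_{\Sigma}(\mathrmbfit{struc}(\sigma)(M'))=\mathrmbfit{spec}(\sigma)(\mathrmbfit{int}_{\Sigma'}(M'))$, invoking soundness of $\mathcal{L}$, the intent ordering implied by the structure morphism $f$, and the naturality of $\mathrmbfit{int}$ from Meta-axiom~\ref{intent:functor}; transitivity of the fiber preorder closes the gap. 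Because specification orderings are unique, the inverse map is the evident forgetting of that ordering, and naturality of the bijection in both variables is immediate from the fact that composition in $\mathrmbf{Snd}$ and in $\mathrmbf{Struc}$ agrees on underlying data.
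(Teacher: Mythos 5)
Your proposal is correct and follows essentially the same route as the paper's own proof: the identity counit extracted from $\mathrmbfit{nat}\bullet\mathrmbfit{pr}_{0}=\mathrmbfit{1}_{\mathrmbf{Struc}}$, the unit component ${\langle{1_{\Sigma},1_{M}}\rangle}:{\langle{\Sigma,M,T}\rangle}\rightarrow{\langle{\Sigma,M,\mathrmbfit{int}_{\Sigma}(M)}\rangle}$ legitimized by soundness, and the universal property reduced to the uniqueness of the projected structure morphism. You simply make explicit several checks the paper leaves tacit (naturality of $\eta$, the triangle identities, and the ordering chain $T\geq_{\Sigma}\mathrmbfit{int}_{\Sigma}(M)\geq_{\Sigma}\mathrmbfit{spec}(\sigma)(\mathrmbfit{int}_{\Sigma'}(M'))$ behind the hom-set bijection), which is consistent with and strengthens the paper's argument.
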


\begin{proof}
The counit is the identity
$\mathrmbfit{nat} \bullet \mathrmbfit{pr}_{0} = \mathrmbfit{1}_{\mathrmbf{Struc}}$.
The unit
$\eta : \mathrmbfit{1}_{\mathrmbf{Snd}} \Rightarrow \mathrmbfit{pr}_{0} \bullet \mathrmbfit{nat}$
has 
the sound logic morphism
$\eta_{\langle{\Sigma,M,T}\rangle} = {\langle{1_{\Sigma},1_{M}}\rangle} :
{\langle{\Sigma,M,T}\rangle} \rightarrow {\langle{\Sigma,M,\mathrmbfit{int}_{\Sigma}(M)}\rangle}$
as its ${\langle{\Sigma,M,T}\rangle}^{\mathrm{th}}$ component.
For any sound logic morphism~\footnote{Recall that
a logic morphism is a pair 
$(\sigma,f) : (\Sigma_{1},M_{1},T_{1}) \rightarrow (\Sigma_{2},M_{2},T_{2})$, 
where $\sigma : \Sigma_{1} \rightarrow \Sigma_{2}$ is a language morphism, and
$f : (M_{1},T_{1}) \rightarrow 
\mathrmbfit{log}(\sigma)(M_{2},T_{2}) = (\mathrmbfit{struc}(\sigma)(M_{2}),\mathrmbfit{inv}(\sigma)(T_{2}))$
is a $\Sigma_{1}$-logic morphism; 
that is,
$f : M_{1} \rightarrow \mathrmbfit{struc}(\sigma)(M_{2})$ is a $\Sigma_{1}$-structure morphism, and 
$T_{1} \geq_{\Sigma_{1}} \mathrmbfit{inv}(\sigma)(T_{2})$ is a $\Sigma_{1}$-specification ordering.
Hence,
a logic morphism $(\sigma,f) : (\Sigma_{1},M_{1},T_{1}) \rightarrow (\Sigma_{2},M_{2},T_{2})$
consists of
a structure morphism $(\sigma,f) : (\Sigma_{1},M_{1}) \rightarrow (\Sigma_{2},M_{2})$ and
a specification morphism $\sigma : (\Sigma_{1},T_{1}) \rightarrow (\Sigma_{2},T_{2})$.}
$(\sigma,f) : 
{\langle{\Sigma_{1},M_{1},T_{1}}\rangle} \rightarrow 
{\langle{\Sigma_{2},M_{2},\mathrmbfit{int}_{\Sigma_{2}}(M_{2})}\rangle} = \mathrmbfit{nat}(\Sigma_{2},M_{2})$,
the structure morphism
$\mathrmbfit{pr}_{0}(\sigma,f) = (\sigma,f)
: {\langle{\Sigma_{1},M_{1}}\rangle} \rightarrow {\langle{\Sigma_{2},M_{2}}\rangle}$
is the unique structure morphism
$\mathrmbfit{pr}_{0}(\Sigma_{1},M_{1},T_{1}) = {\langle{\Sigma_{1},M_{1}}\rangle}
\rightarrow {\langle{\Sigma_{2},M_{2}}\rangle}$
such that
$\eta_{\Sigma_{1},M_{1},T_{1}} \cdot \mathrmbfit{nat}(\sigma,f) 
= (\sigma,f)$.
\rule{5pt}{5pt}
\end{proof}

\newpage

\begin{eg}
Consider the classification logical environment $\mathtt{Cls}$.

\begin{itemize}

\item 
In the classification logical environment $\mathtt{Cls}$ 
(used in the metatheory of information flow),
the fibered category of $\mathrmbf{Th}$ theories and theory morphisms
with the underlying type projection functor
$\mathrmbfit{typ} : \mathrmbf{Th} \rightarrow \mathrmbf{Set}$
serves as the 
adjointly indexed preorder of specifications.
An object (specification) in $\mathrmbf{Th}$ is a theory ${\langle{Y,T}\rangle}={\langle{Y,{\vdash}}\rangle}$,
and a (specification) morphism in $\mathrmbf{Th}$ is a theory morphism
$f : {\langle{Y_{1},T_{1}}\rangle}={\langle{Y_{1},{\vdash}_{1}}\rangle} \rightarrow {\langle{Y_{2},{\vdash}_{2}}\rangle}={\langle{Y_{2},T_{2}}\rangle}$.
The category $\mathrmbf{Th}$ is the homogenization (Grothendieck construction) of 
an adjointly indexed preorder of theories
$\mathrmbfit{th} : \mathrmbf{Set}^{\mathrm{op}} \rightarrow \mathrmbf{Adj}$.
For any set $Y$,
there is a fiber preorder $\mathrmbfit{th}(Y) 
= {\langle{\mathrmbfit{th}(\Sigma),\geq_{Y}}\rangle}$,
where an object $T = {\vdash} \in \mathrmbfit{th}(Y)$ is a $Y$-theory~\footnote{
Let $Y$ be any (type) set.
A $Y$-sequent is a pair $(\Gamma,\Delta) \in \mathrmbfit{seq}(Y)$ of subsets of types $\Gamma,\Delta \subseteq Y$.
A $Y$-theory $T = {\vdash} \in \mathrmbfit{th}(Y) = {\wp}\mathrmbfit{seq}(Y)$ is a subset of $Y$-sequents.
We use the notation $\Gamma \vdash \Delta$ for a sequent of the theory $(\Gamma,\Delta) \in T$.
Given a $Y$-classification ${\langle{X,\models}\rangle}$,
define the type set of an instance $x \in X$ 
to be $\tau(x) = \{y \in Y \mid x \models y\}$.
An instance $x \in X$ satisfies a $Y$-sequent $(\Gamma,\Delta)$,
denoted $x \models_{Y} (\Gamma,\Delta)$,
when $x \models^{\forall} \Gamma$ implies $x \models^{\exists} \Delta$;
that is,
when $\tau(x) \supseteq \Gamma$ implies $\tau(x) \cap \Delta \not= \emptyset$
(otherwise, it is a counterexample to the sequent).
A $Y$-classification ${\langle{X,\models}\rangle}$ satisfies a $Y$-sequent $(\Gamma,\Delta)$,
denoted ${\langle{X,\models}\rangle} \models_{Y} (\Gamma,\Delta)$,
when $x \models_{Y} (\Gamma,\Delta)$ for all instances $x \in X$.
The intersection of all regular theories containing a $Y$-theory $T = {\vdash}$, 
denoted $T^{\scriptscriptstyle\bullet} = {\vdash}^{\scriptscriptstyle\bullet}$,
is called its closure. 
This is the smallest regular theory containing $T$.
Two $Y$-theories $T_{1},T_{2} \in \mathrmbfit{th}(Y)$ are ordered by entailment, 
symbolized by $T_{1} \leq_{Y} T_{2}$
when $T_{1}^{\scriptscriptstyle\bullet} \supseteq T_{2}^{\scriptscriptstyle\bullet}$.
This defines a fiber preorder $\mathrmbfit{th}(Y) 
= {\langle{\mathrmbfit{th}(\Sigma),\geq_{Y}}\rangle}$.
using the reverse entailment ordering on $Y$-theories. 
}
and a morphism 
$f : {\langle{Y_{1},T_{1}}\rangle}={\langle{Y_{1},{\vdash}_{1}}\rangle} \rightarrow {\langle{Y_{2},{\vdash}_{2}}\rangle}={\langle{Y_{2},T_{2}}\rangle}$
is a (type) function $f : Y_{1} \rightarrow Y_{2}$ where
$\mathrmbfit{dir}(f)(T_{1}) \geq_{2} T_{2}$
equivalently
$T_{1} \geq_{1} \mathrmbfit{inv}(f)(T_{2})$~\footnote{
For each function (language translation) $f : Y_{1} \rightarrow Y_{2}$,
sentence translation along $f$ is direct image squared on types
$\mathrmbfit{seq}(f)
: \mathrmbfit{seq}(Y_{1}) \rightarrow \mathrmbfit{seq}(Y_{2})
: (\Gamma_{1},\Delta_{1}) \mapsto ({\wp}f(\Gamma_{1}),{\wp}f(\Delta_{1}))$.
The inverse flow of a $Y_{2}$-theory $T_{2} = {\vdash}_{2}$ under $f$,
written $\mathrmbfit{inv}(f)(T_{2})$,
is the $Y_{1}$-theory $T_{1} = {\vdash}_{1}$
with $\Gamma \vdash_{1} \Delta$ when ${\wp}f(\Gamma) \vdash_{2}^{\scriptscriptstyle\bullet} {\wp}f(\Delta)$.
The direct flow of a $Y_{1}$-theory $T_{1} = {\vdash}_{1}$ under $f$,
written $\mathrmbfit{dir}(f)(T_{1})$,
is the $Y_{2}$-theory $T_{2} = {\vdash}_{2}$
with 
${\wp}f(\Gamma_{1}) \vdash_{Y_{2}} {\wp}f(\Delta_{1})$
when $\Gamma_{1} \vdash_{Y_{1}} \Delta_{1}$.
}.

\item 
In the classification logical environment $\mathtt{Cls}$,
for any set $Y$,
there is a fiber functor
$\mathrmbfit{int}_{Y} : \mathrmbfit{cls}(Y) \rightarrow \mathrmbfit{th}(Y)$
that maps 
a $Y$-classification ${\langle{X,\models}\rangle} \in \mathrmbfit{cls}(Y)$ 
to its intent $Y$-theory 
$\mathrmbfit{int}_{Y}(X,\models) \in \mathrmbfit{th}(Y)$
consisting of the collection of 
all $Y$-sequents $(\Gamma,\Delta)$ satisfied by all instances $x \in X$~\footnote{
The $Y$-theory generated by a $Y$-classification ${\langle{X,\models}\rangle}$,
denoted $\mathrmbfit{int}_{Y}(X,\models)$,
is the set of all $Y$-sequents satisfied by ${\langle{X,\models}\rangle}$.
The $Y$-theory $\mathrmbfit{int}_{Y}(X,\models)$
satisfies various axioms, such as identity, weakening, global cut, finite cut and partition.
A theory is regular when it satisfies these axioms.
It can be shown that all regular theories are of the form $\mathrmbfit{int}_{Y}(X,\models)$.
}, and
maps a $Y$-infomorphism
$g : {\langle{X,\models}\rangle} \rightleftarrows {\langle{X',\models'}\rangle}$
in $\mathrmbfit{cls}(Y)$
to its generalization-specialization $Y$-theory ordering
$\mathrmbfit{int}_{Y}(X,\models) = {\vdash} \geq_{Y} {\vdash}' = \mathrmbfit{int}_{Y}(X',\models')$;
that is,
${\vdash} \subseteq {\vdash}'$~\footnote{
Since $g(x') \models y$ iff $x' \models' y$,
the $Y$-theories
$\mathrmbfit{int}_{Y}(X,\models) = {\vdash}$ and
$\mathrmbfit{int}_{Y}(X',\models') = {\vdash}'$
are ordered
${\vdash} \subseteq {\vdash}'$.
For suppose
$\Gamma {\vdash} \Delta$;
that is,
$x \models (\Gamma,\Delta)$ for all $x \in X$.
In particular,
$g(x') \models (\Gamma,\Delta)$ for all $x' \in X'$.
Now,
$x' \models'^{\forall} \Gamma$ iff $g(x') \models^{\forall} \Gamma$
implies 
$g(x') \models^{\exists} \Delta$ iff $x' \models'^{\exists} \Delta$.
Hence,
$x' \models' (\Gamma,\Delta)$
for all $x' \in X'$. 
Hence,
$\Gamma {\vdash}' \Delta$.}

\end{itemize}

\end{eg}

\begin{eg}
Consider the diagram logical environment $\mathtt{Dgm}$ 
(used in the metatheory of sketches).
For any graph $G$,
a diagram $\langle D, \mathrmbf{C} \rangle \in \mathrmbf{dgm}(G)$ 
with adjoint $\mathrmbfit{D} : G^{\ast} \rightarrow \mathrmbf{C}$
satisfies an equation $\epsilon \in \mathrmbfit{eqn}(G)$,
symbolized by 
\[
\langle D, \mathrmbf{C} \rangle \models_{G} \epsilon,
\]
when $\mathrmbfit{D}(\epsilon_{0}) = \mathrmbfit{D}(\epsilon_{1})$.
\footnote{We also say that
$\epsilon$ holds in $\langle D, \mathrmbf{C} \rangle$, 
$\epsilon$ is true in $\langle D, \mathrmbf{C} \rangle$, 
or $\langle D, \mathrmbf{C} \rangle$ models $\epsilon$.}
We could use the notation `$\epsilon_{0} =_{\mathcal{D}} \epsilon_{1}$'
for such satisfaction
in terms of the indexed diagram $\mathcal{D} = \langle G, D, \mathrmbf{C} \rangle$.
This notion of satisfaction (for ``linear'' sketches) is like a reduced version of satisfaction for arbitrary sketchs.
The diagram $D : G \rightarrow |\mathrmbf{C}|$ is commutative when $D$ satisfies all equations of $G$;
that is, $D \models_G \epsilon$ for all $\epsilon \in \mathrmbfit{eqn}(G)$.
For any graph morphism $H : G \rightarrow G'$, 
if $\langle D', \mathrmbf{C} \rangle \in \mathrmbf{dgm}(G')$ is a target diagram,  
and $\epsilon \in \mathrmbfit{eqn}(G)$ is a source equation,
then
\[
\mathrmbfit{dgm}(H)(D', \mathrmbf{C}) \models_{G} \epsilon,
\text{ iff }
\langle D', \mathrmbf{C} \rangle \models_{G'} \mathrmbfit{eqn}(H)(\epsilon),
\]
since
$\mathrmbfit{dgm}(H)(D', \mathrmbf{C}) = \langle H \circ D', \mathrmbf{C} \rangle$,
$\mathrmbfit{eqn}(H)(\epsilon) = \epsilon \circ |H^{\ast}|$,
and hence
$(H^{\ast} \circ \mathrmbfit{D}')(\epsilon_{0}) = (H^{\ast} \circ \mathrmbfit{D}')(\epsilon_{1}) 
\text{ iff }
\mathrmbfit{D}'(H^{\ast}(\epsilon_{0})) = \mathrmbfit{D}'(H^{\ast}(\epsilon_{1}))$.
Thus,
satisfaction is invariant under change of notation.
\begin{center}
\begin{tabular}{c}
\\
\setlength{\unitlength}{0.6pt}
\begin{picture}(120,80)(-30,0)
\put(-100,38){\makebox(0,0){\small{$\mathrmbf{ppr}$}}}
\put(4,80){\makebox(0,0){\small{$|G^{\ast}|$}}}
\put(4,0){\makebox(0,0){\small{$|G'^{\ast}|$}}}
\put(120,38){\makebox(0,0){\small{$|\mathrmbf{C}|$}}}
\put(-80,52){\vector(3,1){60}}
\put(-80,28){\vector(3,-1){60}}
\put(23,73){\vector(3,-1){74}}
\put(23,5){\vector(3,1){74}}
\put(0,65){\vector(0,-1){50}}
\put(-50,76){\makebox(0,0){\footnotesize$\epsilon$}}
\put(-64,0){\makebox(0,0){\footnotesize$\epsilon \circ |H^{\ast}|$}}
\put(-6,40){\makebox(0,0)[r]{\footnotesize$|H^{\ast}|$}}
\put(70,82){\makebox(0,0){\footnotesize$|H^{\ast} {\circ}\, \mathrmbfit{D}'|$}}
\put(63,0){\makebox(0,0){\footnotesize$|\mathrmbfit{D}'|$}}
\end{picture}
\end{tabular}
\end{center}
\end{eg}

\subsection{Fiber Meets}

In the usual theory of institutions,
the specification fiber preorders for any language
have all finite meets and joins.
However,
logical environments are an abstraction of institutions,
and only the minimal collection of specification meets~\footnote{Just as for 
the concept lattice intent notation $\mathrmbfit{int}_{\Sigma}$,
in the fiber ${\langle{\mathrmbfit{spec}(\Sigma),\geq_{\Sigma}}\rangle}$ 
of all $\Sigma$-specifications with reverse entailment order
(generalized reverse concept lattice order),
we also use 
the concept lattice bottom notation ${\bot}_{\Sigma}$ for the empty meet and 
the concept lattice join notation $\vee_{\Sigma}$ for binary meets.}.
are axiomatized.

\begin{meta-axiom}
{\normalsize$\blacksquare$}
There is 
a bottom indexed functor
\footnotesize\[
{\bot} 
: \mathrmbfit{lang} \Rightarrow \mathrmbfit{spec} 
: \mathrmbf{Lang}^{\mathrm{op}} \rightarrow \mathrmbf{Cat}.
\]\normalsize
\end{meta-axiom}
For any language $\Sigma$,
there is a fiber functor
${\bot}_{\Sigma} : \mathrmbfit{lang}(\Sigma) = \mathrmbf{1} \rightarrow \mathrmbfit{spec}(\Sigma)$
that maps 
the single object $\bullet \in \mathrmbfit{lang}(\Sigma) = \mathrmbf{1}$ 
to the bottom $\Sigma$-specification 
${\bot}_{\Sigma}(\bullet) \in \mathrmbfit{spec}(\Sigma)$, and
the single (identity) morphism
$1_{\bullet} : {\bullet} \rightarrow {\bullet}$ in $\mathrmbfit{lang}(\Sigma) = \mathrmbf{1}$
to the identity $\Sigma$-specification ordering
${\bot}_{\Sigma}(\bullet) \geq_{\Sigma} {\bot}_{\Sigma}(\bullet)$.
For any language morphism $\sigma : \Sigma_{1} \rightarrow \Sigma_{2}$,
bottom satisfies the naturality commutative diagram
\begin{center}
\begin{tabular}{c}
\\
\setlength{\unitlength}{0.65pt}
\begin{picture}(100,80)(0,0)
\put(-10,80){\makebox(0,0){\footnotesize{$\mathrmbf{1}=\mathrmbfit{lang}(\Sigma_{1})$}}}
\put(100,80){\makebox(0,0){\footnotesize{$\mathrmbfit{spec}(\Sigma_{1})$}}}
\put(-10,0){\makebox(0,0){\footnotesize{$\mathrmbf{1}=\mathrmbfit{lang}(\Sigma_{2})$}}}
\put(100,0){\makebox(0,0){\footnotesize{$\mathrmbfit{spec}(\Sigma_{2})$}}}
\put(50,92){\makebox(0,0){\scriptsize{${\bot}_{\Sigma_{1}}$}}}
\put(106,40){\makebox(0,0)[l]{\scriptsize{$\mathrmbfit{spec}(\sigma)=\mathrmbfit{inv}(\sigma)$}}}
\put(-4,40){\makebox(0,0)[r]{\scriptsize{$\mathrmit{1}_{\mathrmbf{1}}=\mathrmbfit{lang}(\sigma)$}}}
\put(50,-14){\makebox(0,0){\scriptsize{${\bot}_{\Sigma_{2}}$}}}
\put(0,20){\vector(0,1){40}}
\put(100,20){\vector(0,1){40}}
\put(35,80){\vector(1,0){30}}
\put(35,0){\vector(1,0){30}}
\end{picture}
\\ \\
\end{tabular}
\end{center}
or pointwise,
${\bot}_{\Sigma_{1}}(\mathrmbfit{lang}(\sigma)(\bullet))
= {\bot}_{\Sigma_{1}}(\bullet)
= \mathrmbfit{inv}(\sigma)({\bot}_{\Sigma_{2}}(\bullet))$
the single object $\bullet \in \mathrmbfit{lang}(\Sigma_{2})$.
This asserts that the specification inverse image operator preserves bottom,
${\bot}_{\Sigma_{1}} = 
\mathrmbfit{inv}(\sigma)({\bot}_{\Sigma_{2}})$.

The homogenization (Grothendieck construction) is the bottom fibered functor 
${\bot} : \mathrmbf{Lang} \rightarrow \mathrmbf{Spec}$ 
that commutes with index projections
${\bot} \circ \mathrmbfit{pr} = \mathrmit{1}_{\mathrmbf{Lang}}$.
Bottom
maps a language $\Sigma$
to the specification
${\bot}(\Sigma) = {\langle{\Sigma,{\bot}_{\Sigma}}\rangle}$,
and maps a language morphism
$\sigma : \Sigma_{1} \rightarrow \Sigma_{2}$
to the specification morphism
${\bot}(\sigma) = \sigma :
{\bot}(\Sigma_{1}) = {\langle{\Sigma_{1},{\bot}_{\Sigma_{1}}}\rangle} \rightarrow 
{\langle{\Sigma_{2},{\bot}_{\Sigma_{2}}}\rangle} = {\bot}(\Sigma_{2})$.
Clearly,
the bottom functor commutes with projections
${\bot} \circ \mathrmbfit{pr} = \mathrmbfit{1}_{\mathrmbf{Lang}}$.
Terminality, 
the fact that bottom
${\bot}_{\Sigma}(\bullet_{\Sigma})$ is the terminal or top object in the specification fiber (reverse entailment) preorder $\mathrmbf{Spec}(\Sigma)$,
is axiomatized by the following.
\begin{meta-axiom}
{\normalsize$\blacksquare$}
There is an adjunction (reflection)
\footnotesize\[
\kappa 
= {\langle{\mathrmbfit{pr} \dashv \bot,\eta,1}\rangle} 
: \mathrmbf{Spec} \rightarrow \mathrmbf{Lang}.
\]\normalsize
\end{meta-axiom}
The counit is the identity
${\bot} \circ \mathrmbfit{pr} = \mathrmbfit{1}_{\mathrmbf{Lang}}$.
The unit is a natural transformation
$\eta : \mathrmit{1}_{\mathrmbf{Spec}} \Rightarrow \mathrmbfit{pr} \circ {\bot}$.
For any specification ${\langle{\Sigma,T}\rangle}$,
the ${\langle{\Sigma,T}\rangle}^{\mathrm{th}}$ component of the unit natural transformation $\eta$
is the specification morphism
$\eta_{\Sigma,T} = 1_{\Sigma} : {\langle{\Sigma,T}\rangle} \rightarrow {\langle{\Sigma,{\bot}_{\Sigma}}\rangle}$,
which asserts the $\Sigma$-ordering $T \geq_{\Sigma} {\bot}_{\Sigma}$.
Existence of this specification morphism alone
shows that there is a $\Sigma$-specification ordering from any $\Sigma$-structure $T$ to ${\bot}_{\Sigma}$.
The full statement of universality (or couniversality) adds nothing more.

\begin{definition}
{\normalsize$\blacksquare$}
There is a bottom indexed functor
\footnotesize\[
{\bot} = \mathrmbfit{pr} \circ {\bot}
: \mathrmbfit{struc} \Rightarrow \mathrmbfit{lang} \Rightarrow \mathrmbfit{spec} 
: \mathrmbf{Lang}^{\mathrm{op}} \rightarrow \mathrmbf{Cat}.
\]\normalsize
\end{definition}
For any language $\Sigma$,
there is a (constant) bottom fiber functor
${\bot}_{\Sigma} : \mathrmbfit{struc}(\Sigma) \rightarrow \mathrmbfit{spec}(\Sigma)$
that maps 
a $\Sigma$-structure $M \in \mathrmbfit{struc}(\Sigma)$ 
to the bottom $\Sigma$-specification ${\bot}_{\Sigma}$, and
maps a $\Sigma$-structure morphism
$f : M \rightarrow M'$ in $\mathrmbfit{struc}(\Sigma)$
with
$\mathrmbfit{int}_{\Sigma}(M) \geq_{\Sigma} \mathrmbfit{int}_{\Sigma}(M')$,
to the trivial identity $\Sigma$-specification ordering
${\bot}_{\Sigma} \geq_{\Sigma} {\bot}_{\Sigma}$.
For any language morphism $\sigma : \Sigma_{1} \rightarrow \Sigma_{2}$ in $\mathrmbf{Lang}$,
bottom satisfies the naturality commutative diagram
$\mathrmbfit{struc}(\sigma) \cdot {\bot}_{\Sigma_{1}} = {\bot}_{\Sigma_{2}} \cdot \mathrmbfit{spec}(\sigma)$;
or pointwise,
${\bot}_{\Sigma_{1}}(\mathrmbfit{struc}(\sigma)(M_{2}))
= {\bot}_{\Sigma_{1}}
= \mathrmbfit{inv}(\sigma)({\bot}_{\Sigma_{2}})
= \mathrmbfit{inv}(\sigma)({\bot}_{\Sigma_{2}}(M_{2}))$
for every target structure $M_{2} \in \mathrmbfit{struc}(\Sigma_{2})$.

The homogenization (Grothendieck construction) is the bottom fibered functor 
${\bot} = \mathrmbfit{pr} \circ {\bot} :	\mathrmbf{Struc} \rightarrow \mathrmbf{Spec}$
that commutes with index projections
${\bot} \circ \mathrmbfit{pr} = \mathrmbfit{pr}$.

\begin{definition}
{\normalsize$\blacksquare$}
There is a bottom logic indexed functor
\footnotesize\[
{\bot} 
= (\mathrmbfit{1}_{\mathrmbf{struc}},{\bot}) 
: \mathrmbfit{struc} \Rightarrow \mathrmbfit{log}
: \mathrmbf{Lang}^{\mathrm{op}} \rightarrow \mathrmbf{Cat}.
\]\normalsize
that mediates the identity and bottom indexed functors,
so that 
\footnotesize\[
{\bot} \bullet \mathrmbfit{pr}_{0} 
= \mathrmbfit{1}_{\mathrmbf{struc}}
\;\;\text{and}\;\; 
{\bot} \bullet \mathrmbfit{pr}_{1} 
= {\bot}.
\]\normalsize
\end{definition}
For any language $\Sigma$,
there is a bottom logic fiber functor
${\bot}_{\Sigma} 
= (\mathrmbfit{1}_{\mathrmbf{struc}(\Sigma)},{\bot}_{\Sigma}) 
: \mathrmbfit{struc}(\Sigma) \rightarrow \mathrmbfit{log}(\Sigma)$
that maps a $\Sigma$-structure $M \in \mathrmbfit{struc}(\Sigma)$ 
to the $\Sigma$-logic $(M,{\bot}_{\Sigma}) \in \mathrmbfit{log}(\Sigma)$, and
maps a $\Sigma$-structure morphism
$f : M \rightarrow M'$ in $\mathrmbfit{struc}(\Sigma)$
to the $\Sigma$-logic morphism
${\bot}(f) = f : (M,{\bot}_{\Sigma}) \rightarrow (M',{\bot}_{\Sigma})$.

For any language morphism $\sigma : \Sigma_{1} \rightarrow \Sigma_{2}$ in $\mathrmbf{Lang}$,
bottom logic satisfies the naturality commutative diagram
\begin{center}
\begin{tabular}{c}
\\
\setlength{\unitlength}{0.6pt}
\begin{picture}(100,80)(0,0)
\put(0,80){\makebox(0,0){\footnotesize{$\mathrmbfit{struc}(\Sigma_{1})$}}}
\put(100,80){\makebox(0,0){\footnotesize{$\mathrmbfit{log}(\Sigma_{1})$}}}
\put(0,0){\makebox(0,0){\footnotesize{$\mathrmbfit{struc}(\Sigma_{2})$}}}
\put(100,0){\makebox(0,0){\footnotesize{$\mathrmbfit{log}(\Sigma_{2})$}}}
\put(50,92){\makebox(0,0){\scriptsize{${\bot}_{\Sigma_{1}}$}}}
\put(105,40){\makebox(0,0)[l]{\scriptsize{$\mathrmbfit{log}(\sigma)$}}}
\put(-5,40){\makebox(0,0)[r]{\scriptsize{$\mathrmbfit{struc}(\sigma)$}}}
\put(50,-14){\makebox(0,0){\scriptsize{${\bot}_{\Sigma_{2}}$}}}
\put(0,20){\vector(0,1){40}}
\put(100,20){\vector(0,1){40}}
\put(35,80){\vector(1,0){30}}
\put(35,0){\vector(1,0){30}}
\end{picture}
\\ \\
\end{tabular}
\end{center}
or pointwise,
$
{\bot}_{\Sigma_{1}}(\mathrmbfit{struc}(\sigma)(M_{2}))
= (\mathrmbfit{struc}(\sigma)(M_{2}),{\bot}_{\Sigma_{1}})
= (\mathrmbfit{struc}(\sigma)(M_{2}),\mathrmbfit{inv}(\sigma)({\bot}_{\Sigma_{2}})))
= \mathrmbfit{log}(\sigma)(M_{2},{\bot}_{\Sigma_{2}})
= \mathrmbfit{log}(\sigma)({\bot}_{\Sigma_{2}}(M_{2}))
$
for every target structure $M_{2} \in \mathrmbfit{struc}(\Sigma_{2})$.

The homogenization (Grothendieck construction) is the bottom logic fibered functor 
${\bot} : \mathrmbf{Struc} \rightarrow \mathrmbf{Log}$ 
that commutes with index projections
${\bot} \circ \mathrmbfit{pr} = \mathrmbfit{pr}$.
Bottom logic
maps a structure ${\langle{\Sigma,M}\rangle}$
to the logic
${\bot}(\Sigma,M) = {\langle{\Sigma,M,{\bot}_{\Sigma}}\rangle}$,
and maps a structure morphism
$(\sigma,h) : (\Sigma_{1},M_{1}) \rightarrow (\Sigma_{2},M_{2})$
to the logic morphism
${\bot}(\sigma,h) = {\langle{\sigma,h}\rangle} :
{\bot}(\Sigma_{1},M_{1}) = {\langle{\Sigma_{1},M_{1},{\bot}_{\Sigma_{1}}}\rangle} \rightarrow 
{\langle{\Sigma_{2},M_{2},{\bot}_{\Sigma_{2}}}\rangle} = {\bot}(\Sigma_{2},M_{2})$,
where
${\bot}_{\Sigma_{1}} \geq_{\Sigma_{1}} 
\mathrmbfit{inv}(\sigma)({\bot}_{\Sigma_{2}}) =
{\bot}_{\Sigma_{1}}$.

\begin{proposition}
There is an adjunction (reflection)
\footnotesize\[
\pi^{\!\scriptscriptstyle\bot} 
= {\langle{\mathrmbfit{pr}_{0} \dashv {\bot},\eta,1}\rangle} 
: \mathrmbf{Log} \rightarrow \mathrmbf{Struc}.
\]\normalsize
\end{proposition}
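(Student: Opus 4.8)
The plan is to mirror, \emph{mutatis mutandis}, the proof of the reflection $\pi = \langle \mathrmbfit{pr}_0 \dashv \mathrmbfit{nat}, \eta, 1\rangle : \mathrmbf{Snd} \rightarrow \mathrmbf{Struc}$ given above, replacing the natural logic functor $\mathrmbfit{nat}$ by the bottom logic functor $\bot$ and the intent $\mathrmbfit{int}_{\Sigma}(M)$ by the bottom specification $\bot_{\Sigma}$. First I would observe that the counit is already the identity: the defining equation of the bottom logic indexed functor, $\bot \bullet \mathrmbfit{pr}_0 = \mathrmbfit{1}_{\mathrmbf{struc}}$, homogenizes to $\bot \bullet \mathrmbfit{pr}_0 = \mathrmbfit{1}_{\mathrmbf{Struc}}$, since $\bot$ sends a structure $\langle \Sigma, M\rangle$ to the logic $\langle \Sigma, M, \bot_{\Sigma}\rangle$, which $\mathrmbfit{pr}_0$ projects back to $\langle \Sigma, M\rangle$, and likewise on morphisms. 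This is what makes the adjunction a reflection.

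Next I would exhibit the unit $\eta : \mathrmbfit{1}_{\mathrmbf{Log}} \Rightarrow \mathrmbfit{pr}_0 \bullet \bot$, whose component at a logic $\langle \Sigma, M, T\rangle$ is the logic morphism $\eta_{\langle \Sigma, M, T\rangle} = \langle 1_{\Sigma}, 1_{M}\rangle : \langle \Sigma, M, T\rangle \rightarrow \langle \Sigma, M, \bot_{\Sigma}\rangle$. The only thing to check is that this is indeed a logic morphism, i.e. that the required $\Sigma$-specification ordering $T \geq_{\Sigma} \mathrmbfit{inv}(1_{\Sigma})(\bot_{\Sigma}) = \bot_{\Sigma}$ holds; but this is exactly the terminality of $\bot_{\Sigma}$ in the reverse-entailment fiber preorder $\mathrmbfit{spec}(\Sigma)$, which is the content of the reflection $\kappa = \langle \mathrmbfit{pr} \dashv \bot, \eta, 1\rangle$ established above, whose unit component asserts precisely $T \geq_{\Sigma} \bot_{\Sigma}$. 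Naturality of $\eta$ is then automatic: for a logic morphism $\langle \sigma, f\rangle : \langle \Sigma_1, M_1, T_1\rangle \rightarrow \langle \Sigma_2, M_2, T_2\rangle$ both composites around the naturality square have underlying pair $\langle \sigma, f\rangle$ and common codomain $\langle \Sigma_2, M_2, \bot_{\Sigma_2}\rangle$, and any two parallel specification orderings coincide because $\mathrmbfit{spec}(\Sigma_2)$ is a preorder.

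Finally I would verify the universal property. Given any logic morphism $\langle \sigma, f\rangle : \langle \Sigma_1, M_1, T_1\rangle \rightarrow \langle \Sigma_2, M_2, \bot_{\Sigma_2}\rangle = \bot(\Sigma_2, M_2)$, its defining specification condition $T_1 \geq_{\Sigma_1} \mathrmbfit{inv}(\sigma)(\bot_{\Sigma_2}) = \bot_{\Sigma_1}$ is vacuous, so such logic morphisms are in bijection with structure morphisms $\langle \sigma, f\rangle : \langle \Sigma_1, M_1\rangle \rightarrow \langle \Sigma_2, M_2\rangle$; the candidate lift is $\mathrmbfit{pr}_0(\sigma, f) = \langle \sigma, f\rangle$. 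I would then confirm the triangle $\eta_{\langle \Sigma_1, M_1, T_1\rangle} \cdot \bot(\sigma, f) = \langle \sigma, f\rangle$, which is immediate since $\eta$ is carried by identities, and argue uniqueness from the fact that $\bot$ is faithful and leaves the underlying structure-morphism data untouched (the specification component being forced). I do not expect a genuine obstacle: the one substantive point is the well-definedness of $\eta$, and that is handed to us by the terminality meta-axiom for $\bot_{\Sigma}$; everything else is the same bookkeeping as in the $\mathrmbfit{nat}$ reflection, now trivialized by the constancy of $\bot$.
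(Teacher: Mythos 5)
Your proposal is correct and follows the same route as the paper, whose proof consists solely of exhibiting the unit component $\eta_{\langle{\Sigma,M,T}\rangle} = {\langle{1_{\Sigma},1_{M}}\rangle} : {\langle{\Sigma,M,T}\rangle} \rightarrow {\langle{\Sigma,M,{\bot}_{\Sigma}}\rangle}$ and leaving the rest implicit. You supply exactly the details the paper omits — the identity counit from ${\bot} \bullet \mathrmbfit{pr}_{0} = \mathrmbfit{1}_{\mathrmbf{Struc}}$, the well-definedness of $\eta$ via the terminality of ${\bot}_{\Sigma}$ from the $\kappa$ meta-axiom, and the universal property — all of which are the intended justifications.
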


\begin{proof}
The unit
$\eta : \mathrmbfit{1}_{\mathrmbf{Snd}} \Rightarrow \mathrmbfit{pr}_{0} \circ {\bot}$
has 
the logic morphism
$\eta_{\langle{\Sigma,M,T}\rangle} 
= {\langle{1_{\Sigma},1_{M}}\rangle} :
{\langle{\Sigma,M,T}\rangle} \rightarrow {\langle{\Sigma,M,{\bot}_{\Sigma}}\rangle}$
as its ${\langle{\Sigma,M,T}\rangle}^{\mathrm{th}}$ component.
\rule{5pt}{5pt}
\end{proof}

\begin{meta-axiom}
{\normalsize$\blacksquare$}
There is a join indexed functor
\footnotesize\[
\mathrmbfit{join} 
: \mathrmbfit{log} \Rightarrow \mathrmbfit{spec} 
: \mathrmbf{Lang}^{\mathrm{op}} \rightarrow \mathrmbf{Cat}.
\]\normalsize
\end{meta-axiom}
For any language $\Sigma$,
there is a fiber functor
$\mathrmbfit{join}_{\Sigma} : \mathrmbfit{log}(\Sigma) \rightarrow \mathrmbfit{spec}(\Sigma)$
that maps 
a $\Sigma$-logic $(M,T) \in \mathrmbfit{log}(\Sigma)$ 
to the join $\Sigma$-specification
$\mathrmbfit{int}_{\Sigma}(M) \vee_{\Sigma} T$, and
maps a $\Sigma$-logic morphism
$f : (M,T) \rightarrow (M',T')$ in $\mathrmbfit{log}(\Sigma)$,
consisting of a $\Sigma$-structure morphism $f : M \rightarrow M'$
with
$\mathrmbfit{int}_{\Sigma}(M) \geq_{\Sigma} \mathrmbfit{int}_{\Sigma}(M')$ 
and a $\Sigma$-specification ordering $T \geq_{\Sigma} T'$,
to the $\Sigma$-specification ordering
$\mathrmbfit{join}_{\Sigma}(M,T) = \mathrmbfit{int}_{\Sigma}(M){\vee_{\Sigma}}T \geq_{\Sigma} 
\mathrmbfit{int}_{\Sigma}(M'){\vee_{\Sigma}}T' = \mathrmbfit{join}_{\Sigma}(M',T')$.
Note that a $\Sigma$-logic $(M,T)$ is sound iff
$\mathrmbfit{join}_{\Sigma}(M,T) = \mathrmbfit{int}_{\Sigma}(M) \vee_{\Sigma} T \cong T$.

For any language morphism $\sigma : \Sigma_{1} \rightarrow \Sigma_{2}$ in $\mathrmbf{Lang}$,
join satisfies the naturality commutative diagram
\begin{center}
\begin{tabular}{c}
\\
\setlength{\unitlength}{0.6pt}
\begin{picture}(100,80)(0,0)
\put(0,80){\makebox(0,0){\footnotesize{$\mathrmbfit{log}(\Sigma_{1})$}}}
\put(100,80){\makebox(0,0){\footnotesize{$\mathrmbfit{spec}(\Sigma_{1})$}}}
\put(0,0){\makebox(0,0){\footnotesize{$\mathrmbfit{log}(\Sigma_{2})$}}}
\put(100,0){\makebox(0,0){\footnotesize{$\mathrmbfit{spec}(\Sigma_{2});$}}}
\put(50,92){\makebox(0,0){\scriptsize{$\mathrmbfit{join}_{\Sigma_{1}}$}}}
\put(105,40){\makebox(0,0)[l]{\scriptsize{$\mathrmbfit{spec}(\sigma)$}}}
\put(-5,40){\makebox(0,0)[r]{\scriptsize{$\mathrmbfit{log}(\sigma)$}}}
\put(50,-14){\makebox(0,0){\scriptsize{$\mathrmbfit{join}_{\Sigma_{2}}$}}}
\put(0,20){\vector(0,1){40}}
\put(100,20){\vector(0,1){40}}
\put(35,80){\vector(1,0){30}}
\put(35,0){\vector(1,0){30}}
\end{picture}
\\ \\
\end{tabular}
\end{center}
or pointwise,
$
\mathrmbfit{join}_{\Sigma_{1}}(\mathrmbfit{log}(\sigma)(M_{2},T_{2}))
= \mathrmbfit{join}_{\Sigma_{1}}(\mathrmbfit{struc}(\sigma)(M_{2}),\mathrmbfit{inv}(\sigma)(T_{2}))
= \mathrmbfit{int}_{\Sigma_{1}}(\mathrmbfit{struc}(\sigma)(M_{2})) \vee_{\Sigma_{1}} \mathrmbfit{inv}(\sigma)(T_{2})
= \mathrmbfit{inv}(\sigma)(\mathrmbfit{int}_{\Sigma_{2}}(M_{2})) \vee_{\Sigma_{1}} \mathrmbfit{inv}(\sigma)(T_{2})
= \mathrmbfit{inv}(\sigma)(\mathrmbfit{int}_{\Sigma_{2}}(M_{2}) \vee_{\Sigma_{2}} T_{2})
= \mathrmbfit{inv}(\sigma)(\mathrmbfit{join}_{\Sigma_{2}}(M_{2},T_{2}))
$
for every target logic $(M_{2},T_{2}) \in \mathrmbfit{log}(\Sigma_{2})$.~\footnote{Recall that
direct and inverse flow of specifications
are adjoint monotonic functions w.r.t. specification order:
$\mathrmbfit{inv}(\sigma)(T_{2}) \leq_{\Sigma_{1}} T_{1}$ 
\underline{iff}
$T_{2} \leq_{\Sigma_{2}} \mathrmbfit{dir}(\sigma)(T_{1})$;
symbolized by
$\langle \mathrmbfit{inv}(\sigma) \dashv \mathrmbfit{dir}(\sigma) \rangle 
: \mathrmbfit{spec}(\Sigma_{2})^{\mathrm{op}} \rightarrow \mathrmbfit{spec}(\Sigma_{1})^{\mathrm{op}}$
between entailment preorders
or by
$\langle \mathrmbfit{dir}(\sigma) \dashv \mathrmbfit{inv}(\sigma) \rangle 
: \mathrmbfit{spec}(\Sigma_{1}) \rightarrow \mathrmbfit{spec}(\Sigma_{2})$
between opposite preorders.
Hence,
inverse flow preeserves all joins and direct flow preserves all meets (with respect to entailment order).}

The homogenization (Grothendieck construction) is the join fibered functor 
$\mathrmbfit{join} :	\mathrmbf{Log} \rightarrow \mathrmbf{Spec}$
that commutes with index projections
$\mathrmbfit{join} \circ \mathrmbfit{pr} = \mathrmbfit{pr}$.
Join maps a logic ${\langle{\Sigma,M,T}\rangle}$
to the specification ${\langle{\Sigma,\mathrmbfit{join}_{\Sigma}(M,T)}\rangle}$,
and
maps a logic morphism
$(\sigma,f) : (\Sigma_{1},M_{1},T_{1}) \rightarrow (\Sigma_{2},M_{2},T_{2})$, 
where $\sigma : \Sigma_{1} \rightarrow \Sigma_{2}$ is a language morphism, and
$\mathrmbfit{int}_{\Sigma_{1}}(M_{1}) \geq_{\Sigma_{1}} 
\mathrmbfit{int}_{\Sigma_{1}}(\mathrmbfit{struc}(\sigma)(M_{2}))=
\mathrmbfit{inv}(\sigma)(\mathrmbfit{int}_{\Sigma_{2}}(M_{2}))$ 
and 
$T_{1} \geq_{\Sigma_{1}} \mathrmbfit{inv}(\sigma)(T_{2})$ are $\Sigma_{1}$-specification orderings,
to the specification morphism
$\mathrmbfit{join}(\sigma,f) = \sigma : 
\mathrmbfit{join}(\Sigma_{1},M_{1},T_{1}) 
= {\langle{\Sigma_{1},\mathrmbfit{join}_{\Sigma_{1}}(M_{1},T_{1})}\rangle}
= {\langle{\Sigma_{1},\mathrmbfit{int}_{\Sigma_{1}}(M_{1}){\vee_{\Sigma_{1}}}T_{1}}\rangle}
\rightarrow 
{\langle{\Sigma_{2},\mathrmbfit{int}_{\Sigma_{2}}(M_{2}){\vee_{\Sigma_{2}}}T_{2}}\rangle}
= {\langle{\Sigma_{2},\mathrmbfit{join}_{\Sigma_{2}}(M_{2},T_{2})}\rangle}
= \mathrmbfit{join}(\Sigma_{2},M_{2},T_{2})$,
where $\sigma : \Sigma_{1} \rightarrow \Sigma_{2}$ is a language morphism, and
$\mathrmbfit{join}(\Sigma_{1},M_{1},T_{1})=
\mathrmbfit{int}_{\Sigma_{1}}(M_{1}){\vee_{\Sigma}}T_{1} 
\geq_{\Sigma_{1}} 
\mathrmbfit{inv}(\sigma)(\mathrmbfit{int}_{\Sigma_{2}}(M_{2})){\vee_{\Sigma_{1}}}\mathrmbfit{inv}(\sigma)(T_{2})=
\mathrmbfit{inv}(\sigma)(\mathrmbfit{int}_{\Sigma_{2}}(M_{2}){\vee_{\Sigma_{1}}}T_{2})=
\mathrmbfit{inv}(\sigma)(\mathrmbfit{join}(\Sigma_{2},M_{2},T_{2}))$ 
is a $\Sigma_{1}$-specification ordering.

\begin{definition}
{\normalsize$\blacksquare$}
There is a join logic indexed functor
\footnotesize\[
\widehat{\mathrmbfit{join}} 
= (\mathrmbfit{pr}_{0},\mathrmbfit{join}) 
: \mathrmbfit{log} \Rightarrow \mathrmbfit{log}
: \mathrmbf{Lang}^{\mathrm{op}} \rightarrow \mathrmbf{Cat}.
\]\normalsize
that mediates the projection and join indexed functors,
so that 
\footnotesize\[
\widehat{\mathrmbfit{join}} \bullet \mathrmbfit{pr}_{0} 
= \mathrmbfit{pr}_{0}
\;\;\text{and}\;\; 
\widehat{\mathrmbfit{join}} \bullet \mathrmbfit{pr}_{1} 
= \mathrmbfit{join}.
\]\normalsize
\end{definition}
For any language $\Sigma$,
there is a join logic fiber functor
$\widehat{\mathrmbfit{join}}_{\Sigma} 
= (\mathrmbfit{pr}_{\mathrmbf{struc},\Sigma},\mathrmbfit{join}_{\Sigma}) 
: \mathrmbfit{log}(\Sigma) \rightarrow \mathrmbfit{log}(\Sigma)$
that maps a $\Sigma$-logic $(M,T) \in \mathrmbfit{log}(\Sigma)$ 
to the join (sound) $\Sigma$-logic $(M,\mathrmbfit{join}_{\Sigma}(M,T)) \in \mathrmbfit{log}(\Sigma)$, and
maps a $\Sigma$-logic morphism
$f : (M,T) \rightarrow (M',T')$ in $\mathrmbfit{log}(\Sigma)$
to the $\Sigma$-logic morphism (between sound logics)
$\widehat{\mathrmbfit{join}}_{\Sigma}(f) = f : 
(M,\mathrmbfit{join}_{\Sigma}(M,T)) \rightarrow
(M',\mathrmbfit{join}_{\Sigma}(M',T')) 
$.

For any language morphism $\sigma : \Sigma_{1} \rightarrow \Sigma_{2}$ in $\mathrmbf{Lang}$,
join logic satisfies the naturality commutative diagram
\begin{center}
\begin{tabular}{c}
\\
\setlength{\unitlength}{0.6pt}
\begin{picture}(100,80)(0,0)
\put(0,80){\makebox(0,0){\footnotesize{$\mathrmbfit{log}(\Sigma_{1})$}}}
\put(100,80){\makebox(0,0){\footnotesize{$\mathrmbfit{log}(\Sigma_{1})$}}}
\put(0,0){\makebox(0,0){\footnotesize{$\mathrmbfit{log}(\Sigma_{2})$}}}
\put(100,0){\makebox(0,0){\footnotesize{$\mathrmbfit{log}(\Sigma_{2})$}}}
\put(50,92){\makebox(0,0){\scriptsize{$\widehat{\mathrmbfit{join}}_{\Sigma_{1}}$}}}
\put(105,40){\makebox(0,0)[l]{\scriptsize{$\mathrmbfit{log}(\sigma)$}}}
\put(-5,40){\makebox(0,0)[r]{\scriptsize{$\mathrmbfit{log}(\sigma)$}}}
\put(50,-14){\makebox(0,0){\scriptsize{$\widehat{\mathrmbfit{join}}_{\Sigma_{2}}$}}}
\put(0,20){\vector(0,1){40}}
\put(100,20){\vector(0,1){40}}
\put(35,80){\vector(1,0){30}}
\put(35,0){\vector(1,0){30}}
\end{picture}
\\ \\
\end{tabular}
\end{center}
or pointwise,
$\widehat{\mathrmbfit{join}}_{\Sigma_{1}}(\mathrmbfit{log}(\sigma)(M_{2},T_{2}))
= \widehat{\mathrmbfit{join}}_{\Sigma_{1}}(\mathrmbfit{struc}(\sigma)(M_{2}),
\mathrmbfit{inv}(\sigma)(T_{2}))
\newline
= (\mathrmbfit{struc}(\sigma)(M_{2}),\mathrmbfit{join}_{\Sigma_{1}}(\mathrmbfit{struc}(\sigma)(M_{2}),
\mathrmbfit{inv}(\sigma)(T_{2})))
\newline
= (\mathrmbfit{struc}(\sigma)(M_{2}),\mathrmbfit{join}_{\Sigma_{1}}(\mathrmbfit{log}(\sigma)(M_{2},T_{2})))
\newline
= (\mathrmbfit{struc}(\sigma)(M_{2}),\mathrmbfit{inv}(\sigma)(\mathrmbfit{join}_{\Sigma_{2}}(M_{2},T_{2})))
= \mathrmbfit{log}(\sigma)(M_{2},\mathrmbfit{join}_{\Sigma_{2}}(M_{2},T_{2}))
\newline
= \mathrmbfit{log}(\sigma)(\widehat{\mathrmbfit{join}}_{\Sigma_{2}}(M_{2},T_{2}))$
for every target logic $(M_{2},T_{2}) \in \mathrmbfit{log}(\Sigma_{2})$.

The homogenization (Grothendieck construction) is the
join logic fibered functor 
$\widehat{\mathrmbfit{join}} : \mathrmbf{Log} \rightarrow \mathrmbf{Log}$ 
that commutes with index projections
$\widehat{\mathrmbfit{join}} \circ \mathrmbfit{pr} = \mathrmbfit{pr}$.
Join maps a logic ${\langle{\Sigma,M,T}\rangle}$
to the (sound) logic 
${\langle{\Sigma,M,\mathrmbfit{join}_{\Sigma}(M,T)}\rangle}=
{\langle{\Sigma,M,\mathrmbfit{int}_{\Sigma}(M){\vee_{\Sigma}}T}\rangle}$,
and
maps a logic morphism
$(\sigma,f) : (\Sigma_{1},M_{1},T_{1}) \rightarrow (\Sigma_{2},M_{2},T_{2})$
to the morphism (between sound logics)
$\widehat{\mathrmbfit{join}}(\sigma,f) = (\sigma,f) : 
\widehat{\mathrmbfit{join}}(\Sigma_{1},M_{1},T_{1}) 
= {\langle{\Sigma_{1},M_{1},\mathrmbfit{join}_{\Sigma_{1}}(M_{1},T_{1})}\rangle}
= {\langle{\Sigma_{1},M_{1},\mathrmbfit{int}_{\Sigma_{1}}(M_{1}){\vee_{\Sigma}}T_{1}}\rangle}
\rightarrow 
{\langle{\Sigma_{2},M_{2},\mathrmbfit{int}_{\Sigma_{2}}(M_{2}){\vee_{\Sigma}}T_{2}}\rangle}
= {\langle{\Sigma_{2},M_{2},\mathrmbfit{join}_{\Sigma_{2}}(M_{2},T_{2})}\rangle}
= \widehat{\mathrmbfit{join}}(\Sigma_{2},M_{2},T_{2})$.


\begin{definition}
{\normalsize$\blacksquare$}
There is a restricted join logic indexed functor
\footnotesize\[
\overset{\scriptscriptstyle\vee}{\mathrmbfit{res}}
: \mathrmbfit{log} \Rightarrow \mathrmbfit{snd}
: \mathrmbf{Lang}^{\mathrm{op}} \rightarrow \mathrmbf{Cat}.
\]\normalsize
that is the restriction of join logic to sound logics,
so that
\footnotesize\[
\overset{\scriptscriptstyle\vee}{\mathrmbfit{res}} \bullet \mathrmbfit{inc} = \widehat{\mathrmbfit{join}} 
: \mathrmbfit{log} \Rightarrow \mathrmbfit{log}.
\]\normalsize
\end{definition}
Clearly,
\footnotesize\[
\overset{\scriptscriptstyle\vee}{\mathrmbfit{res}} \bullet \mathrmbfit{pr}_{0} 
= \mathrmbfit{pr}_{0} 
\;\;\text{and}\;\; 
\overset{\scriptscriptstyle\vee}{\mathrmbfit{res}} \bullet \mathrmbfit{pr}_{1} 
= \mathrmbfit{join}.
\]\normalsize

\begin{proposition}
There is an adjunction (coreflection)
\footnotesize\[
\overset{\scriptscriptstyle\vee}{\rho} 
= {\langle{\mathrmbfit{inc} \dashv \overset{\scriptscriptstyle\vee}{\mathrmbfit{res}},1,\varepsilon}\rangle} 
: \mathrmbf{Snd} \rightarrow \mathrmbf{Log}.
\]\normalsize
\end{proposition}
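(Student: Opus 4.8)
The plan is to lean on the fact that $\mathrmbfit{inc} : \mathrmbf{Snd} \hookrightarrow \mathrmbf{Log}$ is the inclusion of a \emph{full} subcategory, hence automatically full and faithful; this is precisely what makes the adjunction a coreflection, and it reduces the work to two things: checking that the stated unit really is the identity, and producing a couniversal counit $\varepsilon : \mathrmbfit{inc}\circ\overset{\scriptscriptstyle\vee}{\mathrmbfit{res}} \Rightarrow \mathrmbfit{1}_{\mathrmbf{Log}}$. The unit equation is the assertion $\overset{\scriptscriptstyle\vee}{\mathrmbfit{res}}\circ\mathrmbfit{inc} = \mathrmbfit{1}_{\mathrmbf{Snd}}$: for a sound logic ${\langle{\Sigma,M,T}\rangle}$ we have $\overset{\scriptscriptstyle\vee}{\mathrmbfit{res}}(\Sigma,M,T) = {\langle{\Sigma,M,\mathrmbfit{int}_{\Sigma}(M)\vee_{\Sigma}T}\rangle}$, and soundness $T \geq_{\Sigma} \mathrmbfit{int}_{\Sigma}(M)$ forces $\mathrmbfit{int}_{\Sigma}(M)\vee_{\Sigma}T \cong T$ (the join of two specifications, one dominating the other, is the dominating one), so the value is again ${\langle{\Sigma,M,T}\rangle}$.

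Next I would define, for each logic ${\langle{\Sigma,M,T}\rangle}$, the counit component $\varepsilon_{\langle{\Sigma,M,T}\rangle} = {\langle{1_{\Sigma},1_{M}}\rangle} : {\langle{\Sigma,M,\mathrmbfit{int}_{\Sigma}(M)\vee_{\Sigma}T}\rangle} \rightarrow {\langle{\Sigma,M,T}\rangle}$, a logic morphism lying in the fiber over $\Sigma$ with identity structure component. Its well-definedness needs exactly the $\Sigma$-specification ordering $\mathrmbfit{int}_{\Sigma}(M)\vee_{\Sigma}T \geq_{\Sigma} T$, which holds because the join is by construction an upper bound (in the reverse-entailment order $\geq_{\Sigma}$) of $T$. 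Naturality of $\varepsilon$ follows since $\widehat{\mathrmbfit{join}}$ acts as the identity on structure components and monotonically on specification components, so the requisite square commutes in each fiber and over each language morphism.

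The heart of the argument is the couniversal property of $\varepsilon$. Given a sound logic $\mathcal{S} = {\langle{\Sigma_{1},M_{1},T_{1}}\rangle}$ and any logic morphism $(\sigma,f) : \mathcal{S} \rightarrow {\langle{\Sigma_{2},M_{2},T_{2}}\rangle} = \mathcal{C}$, I must produce a unique $\mathrmbf{Snd}$-morphism $k : \mathcal{S} \rightarrow \overset{\scriptscriptstyle\vee}{\mathrmbfit{res}}(\mathcal{C})$ with $\varepsilon_{\mathcal{C}} \circ k = (\sigma,f)$. Because $\varepsilon_{\mathcal{C}}$ has identity language and structure components, fibered composition forces $k = (\sigma,f)$ — which both gives uniqueness at once and reduces the existence claim to verifying that $(\sigma,f)$ is a legitimate morphism of sound logics from $\mathcal{S}$ into $\overset{\scriptscriptstyle\vee}{\mathrmbfit{res}}(\mathcal{C}) = {\langle{\Sigma_{2},M_{2},\mathrmbfit{int}_{\Sigma_{2}}(M_{2})\vee_{\Sigma_{2}}T_{2}}\rangle}$. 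The structure component $f : M_{1} \rightarrow \mathrmbfit{struc}(\sigma)(M_{2})$ is unchanged, so all that remains is the ordering $T_{1} \geq_{\Sigma_{1}} \mathrmbfit{inv}(\sigma)(\mathrmbfit{int}_{\Sigma_{2}}(M_{2})\vee_{\Sigma_{2}}T_{2})$.

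The main obstacle is exactly this last ordering, and I would dispatch it by combining three facts already in hand. First, inverse flow preserves joins, so $\mathrmbfit{inv}(\sigma)(\mathrmbfit{int}_{\Sigma_{2}}(M_{2})\vee_{\Sigma_{2}}T_{2}) = \mathrmbfit{inv}(\sigma)(\mathrmbfit{int}_{\Sigma_{2}}(M_{2}))\vee_{\Sigma_{1}}\mathrmbfit{inv}(\sigma)(T_{2})$, and naturality of the intent functor rewrites the first summand as $\mathrmbfit{int}_{\Sigma_{1}}(\mathrmbfit{struc}(\sigma)(M_{2}))$. By the universal property of the join it then suffices to establish $T_{1} \geq_{\Sigma_{1}} \mathrmbfit{int}_{\Sigma_{1}}(\mathrmbfit{struc}(\sigma)(M_{2}))$ and $T_{1} \geq_{\Sigma_{1}} \mathrmbfit{inv}(\sigma)(T_{2})$ separately. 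The latter is precisely the hypothesis that $(\sigma,f)$ is a logic morphism; for the former, $f$ yields $\mathrmbfit{int}_{\Sigma_{1}}(M_{1}) \geq_{\Sigma_{1}} \mathrmbfit{int}_{\Sigma_{1}}(\mathrmbfit{struc}(\sigma)(M_{2}))$ through the intent functor, and soundness of $\mathcal{S}$ gives $T_{1} \geq_{\Sigma_{1}} \mathrmbfit{int}_{\Sigma_{1}}(M_{1})$, so transitivity closes the gap. One bookkeeping point worth flagging is that the unit is literally the identity only once specifications are taken up to reverse-entailment equivalence (as closed specifications); regarded this way $\mathrmbfit{spec}(\Sigma)$ is a genuine poset on which $\mathrmbfit{int}_{\Sigma}(M)\vee_{\Sigma}T$ and $T$ coincide for sound logics, so no coherence subtlety remains.
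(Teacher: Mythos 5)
Your proposal is correct and follows essentially the same route as the paper's own proof: identity unit (justified by $\mathrmbfit{int}_{\Sigma}(M)\vee_{\Sigma}T \cong T$ for sound logics), the counit component ${\langle{1_{\Sigma},1_{M}}\rangle}$ on each logic, uniqueness forced by the identity components of $\varepsilon$, and existence reduced to the ordering $T_{1} \geq_{\Sigma_{1}} \mathrmbfit{inv}(\sigma)(\mathrmbfit{int}_{\Sigma_{2}}(M_{2})\vee_{\Sigma_{2}}T_{2})$ via join-preservation of inverse flow, naturality of intent, and soundness. You actually spell out the final ordering argument more explicitly than the paper, which only gestures at ``the orderings above,'' but the content is the same.
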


\begin{proof}
The unit is the identity
$\mathrmbfit{1}_{\mathrmbf{Snd}}
= \mathrmbfit{inc} \bullet \overset{\scriptscriptstyle\vee}{\mathrmbfit{res}}$.
The counit
$\varepsilon : \overset{\scriptscriptstyle\vee}{\mathrmbfit{res}} \bullet \mathrmbfit{inc} = \widehat{\mathrmbfit{join}} 
\Rightarrow \mathrmbfit{1}_{\mathrmbf{Log}}$
has 
the logic morphism
$\varepsilon_{\langle{\Sigma,M,T}\rangle} = {\langle{1_{\Sigma},1_{M}}\rangle} :
{\langle{\Sigma,M,\mathrmbfit{join}_{\Sigma}(M,T)}\rangle}
= {\langle{\Sigma,M,\mathrmbfit{int}_{\Sigma}(M){\vee_{\Sigma}}T}\rangle}
\rightarrow {\langle{\Sigma,M,T}\rangle}$
as its ${\langle{\Sigma,M,T}\rangle}^{\mathrm{th}}$ component.
Let 
${\langle{\sigma,f}\rangle} : {\langle{\Sigma_{1},M_{1},T_{1}}\rangle} \rightarrow {\langle{\Sigma_{2},M_{2},T_{2}}\rangle}$ 
be any logic morphism
whose source logic is sound.
Hence,
$\sigma : \Sigma_{1} \rightarrow \Sigma_{2}$ is a language morphism,
$f : M_{1} \rightarrow \mathrmbfit{struc}(\sigma)(M_{2})$ is a $\Sigma_{1}$-structure morphism, with
$\mathrmbfit{int}_{\Sigma_{1}}(M_{1}) 
\geq_{\Sigma_{1}} \mathrmbfit{int}_{\Sigma_{1}}(\mathrmbfit{struc}(\sigma)(M_{2}))
= \mathrmbfit{inv}(\sigma)(\mathrmbfit{int}_{\Sigma_{2}}(M_{2}))$,
$T_{1} \geq_{\Sigma_{1}} \mathrmbfit{inv}(\sigma)(T_{2})$ is a $\Sigma_{1}$-specification ordering, and
$T_{1} \geq_{\Sigma_{1}} \mathrmbfit{int}_{\Sigma_{1}}(M_{1})$ (soundnesss).
We want to show that
there is a unique morphism between sound logics
${\langle{\sigma',f'}\rangle} : {\langle{\Sigma_{1},M_{1},T_{1}}\rangle} \rightarrow \overset{\scriptscriptstyle\vee}{\mathrmbfit{res}}(\Sigma_{2},M_{2},T_{2})
= {\langle{\Sigma_{2},M_{2},\mathrmbfit{join}_{\Sigma_{2}}(M_{2},T_{2})}\rangle}$
such that 
${\langle{\sigma',f'}\rangle} \cdot \varepsilon_{\langle{\Sigma_{2},M_{2},T_{2}}\rangle} 
= {\langle{\sigma',f'}\rangle} \cdot {\langle{1_{\Sigma_{2}},1_{M_{2}}}\rangle} 
= {\langle{\sigma,f}\rangle}$;
that is,
such that $\sigma' = \sigma$ and $f' = f$.
Clearly, uniqueness must hold.
Hence,
we want to show that
${\langle{\sigma,f}\rangle} : {\langle{\Sigma_{1},M_{1},T_{1}}\rangle} \rightarrow 
{\langle{\Sigma_{2},M_{2},\mathrmbfit{join}_{\Sigma_{2}}(M_{2},T_{2})}\rangle}$ 
is a logic morphism.
But,
$T_{1} 
\geq_{\Sigma_{1}} 
\mathrmbfit{inv}(\sigma)(\mathrmbfit{join}_{\Sigma_{2}}(M_{2},T_{2}))
= \mathrmbfit{inv}(\sigma)(\mathrmbfit{int}_{\Sigma_{2}}(M_{2}) {\vee_{\Sigma_{2}}} T_{2})
= \mathrmbfit{inv}(\sigma)(\mathrmbfit{int}_{\Sigma_{2}}(M_{2})) {\vee_{\Sigma_{1}}} \mathrmbfit{inv}(\sigma)(T_{2})$
by the orderings above.
\rule{5pt}{5pt}
\end{proof}


\newpage
\section{The Lax Aspect} 

\subsection{Extent}\label{extent}

\begin{meta-axiom}\label{extent:functor}
{\normalsize$\blacksquare$}
There is an extent lax indexed functor 
\footnotesize\[
\mathrmbfit{ext} 
: \mathrmbfit{spec} \Rightarrow \mathrmbfit{struc}
: \mathrmbf{Lang}^{\mathrm{op}} \rightarrow \mathrmbf{Cat}.
\]\normalsize
\end{meta-axiom}
%
For any language $\Sigma$,
there is a fiber functor
$\mathrmbfit{ext}_{\Sigma} : \mathrmbfit{spec}(\Sigma) \rightarrow \mathrmbfit{struc}(\Sigma)$
that maps 
a $\Sigma$-specification $T \in \mathrmbfit{spec}(\Sigma)$ 
to its extent $\Sigma$-structure 
$\mathrmbfit{ext}_{\Sigma}(T) \in \mathrmbfit{struc}(\Sigma)$, and
maps a $\Sigma$-specification ordering
$T \geq_{\Sigma} T'$ in $\mathrmbfit{spec}(\Sigma)$
to its extent generalization-specialization $\Sigma$-structure morphism~\footnote{We regard this morphism, 
as we do all $\Sigma$-structure morphisms,
as an abstract generalization-specialization $\Sigma$-structure ordering 
pointing downward in the concept lattice at $\Sigma$.}
$\mathrmbfit{ext}_{\Sigma}(T \geq_{\Sigma} T') = {\gtrdot}_{\Sigma}(T,T') :
\mathrmbfit{ext}_{\Sigma}(T) \rightarrow \mathrmbfit{ext}_{\Sigma}(T')$,
so that
$\mathrmbfit{int}_{\Sigma}(\mathrmbfit{ext}_{\Sigma}(T)) = T^{\scriptscriptstyle\bullet}
\geq_{\Sigma} 
T'^{\scriptscriptstyle\bullet} = \mathrmbfit{int}_{\Sigma}(\mathrmbfit{ext}_{\Sigma}(T'))$.
For any language morphism $\sigma : \Sigma_{1} \rightarrow \Sigma_{2}$,
extent satisfies the lax naturality commutative diagram
(asserting existence of a natural transformation $\mathrmbfit{ext}_{\sigma}$)
\begin{center}
\begin{tabular}{@{\hspace{15pt}}c@{\hspace{40pt}}c}
\\
\setlength{\unitlength}{0.65pt}
\begin{picture}(100,0)(10,95)
\put(0,80){\makebox(0,0){\footnotesize{$\mathrmbfit{spec}(\Sigma_{1})$}}}
\put(100,80){\makebox(0,0){\footnotesize{$\mathrmbfit{struc}(\Sigma_{1})$}}}
\put(0,0){\makebox(0,0){\footnotesize{$\mathrmbfit{spec}(\Sigma_{2})$}}}
\put(100,0){\makebox(0,0){\footnotesize{$\mathrmbfit{struc}(\Sigma_{2})$}}}
\put(50,92){\makebox(0,0){\scriptsize{$\mathrmbfit{ext}_{\Sigma_{1}}$}}}
\put(105,40){\makebox(0,0)[l]{\scriptsize{$\mathrmbfit{struc}(\sigma)$}}}
\put(-5,40){\makebox(0,0)[r]{\scriptsize{$\mathrmbfit{spec}(\sigma)$}}}
\put(50,-14){\makebox(0,0){\scriptsize{$\mathrmbfit{ext}_{\Sigma_{2}}$}}}
\put(0,20){\vector(0,1){40}}
\put(100,20){\vector(0,1){40}}
\put(35,80){\vector(1,0){30}}
\put(35,0){\vector(1,0){30}}
\put(35,50){\begin{picture}(0,0)(0,0)
\setlength{\unitlength}{0.5pt}
\put(30,-4){\makebox(0,0){\large{$\scriptstyle\mathrmbfit{ext}_{\sigma}$}}}
\drawline(3,0)(23,-20)
\drawline(0,-3)(20,-23)
\qbezier(23,-16)(23,-20)(25,-25)
\qbezier(16,-23)(20,-23)(25,-25)
\end{picture}}
\end{picture}
&
\begin{tabular}[b]{p{210pt}}
\footnotesize\noindent
The extent operator semantically characterizes specfications, whereas 
the intent operator formally characterizes structures.
The diagram to the left asserts that change of notation laxly preserves semantic characterization, whereas
the comparable (commuting) diagram for intent asserts the invariancy of truth under change of notation:
change of notation exactly preserves formal characterization.
\end{tabular}
\\ \\
\end{tabular}
\end{center}
or componentwise,
for every target specification $T_{2} \in \mathrmbfit{spec}(\Sigma_{2})$,
there is a laxification $\Sigma_{1}$-structure morphism
$\mathrmbfit{ext}_{\Sigma_{1}}(\mathrmbfit{spec}(\sigma)(T_{2}))
\stackrel{\mathrmbfit{ext}_{\sigma,T_{2}}}{\longrightarrow} 
\mathrmbfit{struc}(\sigma)(\mathrmbfit{ext}_{\Sigma_{2}}(T_{2}))$.
These diagrams vertically paste together under composition of language morphisms:
$\mathrmbfit{ext}_{\sigma_{1}\cdot\sigma_{2}}
= \mathrmbfit{spec}(\sigma_{2})\mathrmbfit{ext}_{\sigma_{1}} \bullet \mathrmbfit{ext}_{\sigma_{2}}\mathrmbfit{struc}(\sigma_{1})$.

The homogenization (Grothendieck construction) is the extent lax fibered functor 
$\mathrmbfit{ext} : \mathrmbf{Spec} \rightarrow \mathrmbf{Struc}$
that commutes with projections
$\mathrmbfit{ext} \circ \mathrmbfit{pr} = \mathrmbfit{pr}$.
Extent
maps a specification ${\langle{\Sigma,T}\rangle}$
to the structure
$\mathrmbfit{ext}(\Sigma,T) = {\langle{\Sigma,\mathrmbfit{ext}_{\Sigma}(T)}\rangle}$, and 
maps a specification morphism~\footnote{Recall that a specification morphism
$\sigma : (\Sigma_{1},T_{1}) \rightarrow (\Sigma_{2},T_{2})$
is a language morphism $\sigma : \Sigma_{1} \rightarrow \Sigma_{2}$,
where $T_{1} \geq_{\Sigma_{1}} \mathrmbfit{inv}(\sigma)(T_{2})=\mathrmbfit{spec}(\sigma)(T_{2})$,
or $\mathrmbfit{dir}(\sigma)(T_{1}) \geq_{\Sigma_{2}} T_{2}$.}
$\sigma : {\langle{\Sigma_{1},T_{1}}\rangle} \rightarrow {\langle{\Sigma_{2},T_{2}}\rangle}$,
to the structure morphism~\footnote{Recall that a structure morphism
$(\sigma,h) : (\Sigma_{1},M_{1}) \rightarrow (\Sigma_{2},M_{2})$
is a language morphism $\sigma : \Sigma_{1} \rightarrow \Sigma_{2}$  
and a $\Sigma_{1}$-structure morphism $h : M_{1} \rightarrow \mathrmbfit{struc}(\sigma)(M_{2})$.}
\footnotesize\[
\mathrmbfit{ext}(\sigma) = {\langle{\sigma,{\mathrmbfit{ext}_{\Sigma_{1}}(T_{1} \geq_{\Sigma_{1}} \mathrmbfit{spec}(\sigma)(T_{2}))}{\,\cdot\,}\mathrmbfit{ext}_{\sigma,T_{2}}}\rangle} :
{\langle{\Sigma_{1},\mathrmbfit{ext}_{\Sigma_{1}}(T_{1})}\rangle} \rightarrow 
{\langle{\Sigma_{2},\mathrmbfit{ext}_{\Sigma_{2}}(T_{2})}\rangle} 
\]\normalsize
where
\vspace{-10pt}
\begin{center}
{\footnotesize$\begin{array}{l}
T_{1} \geq_{\Sigma_{1}} \mathrmbfit{inv}(\sigma)(T_{2})=\mathrmbfit{spec}(\sigma)(T_{2}) \\
\mathrmbfit{ext}_{\Sigma_{1}}(T_{1} \geq_{\Sigma_{1}} \mathrmbfit{spec}(\sigma)(T_{2}))
\cdot \mathrmbfit{ext}_{\sigma,T_{2}} :
\mathrmbfit{ext}_{\Sigma_{1}}(T_{1}) \rightarrow
\mathrmbfit{ext}_{\Sigma_{1}}(\mathrmbfit{spec}(\sigma)(T_{2})) \rightarrow
\mathrmbfit{struc}(\sigma)(\mathrmbfit{ext}_{\Sigma_{2}}(T_{2}))
\end{array}$}
\end{center}

\begin{definition}\label{theory:logic:functor}
{\normalsize$\blacksquare$}
There is a theory logic lax indexed functor
\footnotesize\[
\widehat{\mathrmbfit{th}} 
= (\mathrmbfit{ext},\mathrmbfit{1}_{\mathrmbf{spec}}) 
: \mathrmbfit{spec} \Rightarrow \mathrmbfit{log}
: \mathrmbf{Lang}^{\mathrm{op}} \rightarrow \mathrmbf{Cat}.
\]\normalsize
that mediates the extent and identity indexed functors,
so that 
\footnotesize\[
\widehat{\mathrmbfit{th}} \bullet \mathrmbfit{pr}_{0} 
= \mathrmbfit{ext}
\;\;\text{and}\;\; 
\widehat{\mathrmbfit{th}} \bullet \mathrmbfit{pr}_{1} 
= \mathrmbfit{1}_{\mathrmbf{spec}}.
\]\normalsize
\end{definition}
For any language $\Sigma$,
there is a theory fiber functor
$\widehat{\mathrmbfit{th}}_{\Sigma} 
= (\mathrmbfit{ext}_{\Sigma},\mathrmbfit{1}_{\mathrmbf{spec}(\Sigma)}) 
: \mathrmbfit{spec}(\Sigma) \rightarrow \mathrmbfit{log}(\Sigma)$
that maps a $\Sigma$-specification $T \in \mathrmbfit{spec}(\Sigma)$ 
to the (sound) $\Sigma$-logic $(\mathrmbfit{ext}_{\Sigma}(T),T) \in \mathrmbfit{log}(\Sigma)$, and
maps a $\Sigma$-specification ordering
$T \geq_{\Sigma} T'$ in $\mathrmbfit{spec}(\Sigma)$
to the $\Sigma$-logic morphism
$\widehat{\mathrmbfit{th}}_{\Sigma}(T \geq_{\Sigma} T') = \mathrmbfit{ext}_{\Sigma}(T \geq_{\Sigma} T') :
(\mathrmbfit{ext}_{\Sigma}(T),T) \rightarrow (\mathrmbfit{ext}_{\Sigma}(T'),T')$
consisting of a $\Sigma$-structure morphism
$\mathrmbfit{ext}(T \geq_{\Sigma} T') : \mathrmbfit{ext}_{\Sigma}(T) \rightarrow \mathrmbfit{ext}_{\Sigma}(T')$~\footnote{Recall that a $\Sigma$-logic morphism
$f : L = {\langle{M,T}\rangle} \rightarrow {\langle{M',T'}\rangle} = L'$,
consists of a $\Sigma$-structure morphism $f : M \rightarrow M'$
and a possibly independent $\Sigma$-specification ordering $T \geq_{\Sigma} T'$.
Here of course,
the $\Sigma$-specification ordering $T \geq_{\Sigma} T'$ and
the $\Sigma$-structure morphism 
$\mathrmbfit{ext}(T \geq_{\Sigma} T') : \mathrmbfit{ext}_{\Sigma}(T) \rightarrow \mathrmbfit{ext}_{\Sigma}(T')$
with its associated $\Sigma$-specification ordering
$\mathrmbfit{int}_{\Sigma}(\mathrmbfit{ext}_{\Sigma}(T)) \geq_{\Sigma} \mathrmbfit{int}_{\Sigma}(\mathrmbfit{ext}_{\Sigma}(T'))$
are highly dependent.}.
For any language morphism $\sigma : \Sigma_{1} \rightarrow \Sigma_{2}$,
theory logic satisfies the lax naturality commutative diagram
(asserting existence of a natural transformation $\widehat{\mathrmbfit{th}}_{\sigma}$)
\begin{center}
\begin{tabular}{c@{\hspace{60pt}}c}
\\
\begin{tabular}{c}
\setlength{\unitlength}{0.65pt}
\begin{picture}(100,80)(0,0)
\put(0,80){\makebox(0,0){\footnotesize{$\mathrmbfit{spec}(\Sigma_{1})$}}}
\put(100,80){\makebox(0,0){\footnotesize{$\mathrmbfit{log}(\Sigma_{1})$}}}
\put(0,0){\makebox(0,0){\footnotesize{$\mathrmbfit{spec}(\Sigma_{2})$}}}
\put(100,0){\makebox(0,0){\footnotesize{$\mathrmbfit{log}(\Sigma_{2})$}}}
\put(50,92){\makebox(0,0){\scriptsize{$\widehat{\mathrmbfit{th}}_{\Sigma_{1}}$}}}
\put(105,40){\makebox(0,0)[l]{\scriptsize{$\mathrmbfit{log}(\sigma)$}}}
\put(-5,40){\makebox(0,0)[r]{\scriptsize{$\mathrmbfit{spec}(\sigma)$}}}
\put(50,-14){\makebox(0,0){\scriptsize{$\widehat{\mathrmbfit{th}}_{\Sigma_{2}}$}}}
\put(0,20){\vector(0,1){40}}
\put(100,20){\vector(0,1){40}}
\put(35,80){\vector(1,0){30}}
\put(35,0){\vector(1,0){30}}
\put(35,50){\begin{picture}(0,0)(0,0)
\setlength{\unitlength}{0.5pt}
\put(30,-4){\makebox(0,0){\large{$\scriptstyle\widehat{\mathrmbfit{th}}_{\sigma}$}}}
\drawline(3,0)(23,-20)
\drawline(0,-3)(20,-23)
\qbezier(23,-16)(23,-20)(25,-25)
\qbezier(16,-23)(20,-23)(25,-25)
\end{picture}}
\end{picture}
\end{tabular}
&
{\footnotesize$
\begin{array}{l}
\widehat{\mathrmbfit{th}}_{\sigma} \circ \mathrmbfit{pr}_{0}(\Sigma_{1}) = \mathrmbfit{ext}_{\sigma} \\
\widehat{\mathrmbfit{th}}_{\sigma} \circ \mathrmbfit{pr}_{1}(\Sigma_{1}) = \mathrmit{1}_{\mathrmbfit{spec}(\sigma)} 
\end{array}$}
\\ & \\
\end{tabular}
\end{center}
or componentwise,
for every target specification $T_{2} \in \mathrmbfit{spec}(\Sigma_{2})$,
there is a laxification $\Sigma_{1}$-logic morphism
\footnotesize\[
\overset{\textstyle\underbrace{\widehat{\mathrmbfit{th}}_{\Sigma_{1}}(\mathrmbfit{spec}(\sigma)(T_{2}))}}
{\mathrmbfit{ext}_{\Sigma_{1}}(\mathrmbfit{spec}(\sigma)(T_{2})),\mathrmbfit{spec}(\sigma)(T_{2})}
\xrightarrow{\overset{\scriptscriptstyle\underbrace{\widehat{\mathrmbfit{th}}_{\sigma,T_{2}}}}
{\textstyle\mathrmbfit{ext}_{\sigma,T_{2}}}}
\overset{\textstyle\underbrace{\mathrmbfit{log}(\sigma)(\widehat{\mathrmbfit{th}}_{\Sigma_{2}}(T_{2}))}}
{(\mathrmbfit{struc}(\sigma)(\mathrmbfit{ext}_{\Sigma_{2}}(T_{2})),
\mathrmbfit{spec}(\sigma)(T_{2}))}.
\]\normalsize
These diagrams vertically paste together under composition of language morphisms.

The homogenization (Grothendieck construction) is the theory logic lax fibered functor 
$\widehat{\mathrmbfit{th}} : \mathrmbf{Spec} \rightarrow \mathrmbf{Log}$
that commutes with projections
$\widehat{\mathrmbfit{th}} \circ \mathrmbfit{pr} = \mathrmbfit{pr}$.
Theory logic
maps a specification ${\langle{\Sigma,T}\rangle}$
to the (sound) logic
$\widehat{\mathrmbfit{th}}(\Sigma,T) = {\langle{\Sigma,\mathrmbfit{ext}_{\Sigma}(T),T}\rangle}$, and 
maps a specification morphism
$\sigma : {\langle{\Sigma_{1},T_{1}}\rangle} \rightarrow {\langle{\Sigma_{2},T_{2}}\rangle}$
to the logic morphism
\footnotesize\[
\widehat{\mathrmbfit{th}}(\sigma) = 
{\langle{\sigma,
\widehat{\mathrmbfit{th}}_{\Sigma_{1}}(T_{1} \geq_{\Sigma_{1}} \mathrmbfit{spec}(\sigma)(T_{2})) {\,\cdot}
\widehat{\mathrmbfit{th}}_{\sigma,T_{2}}}\rangle} :
{\langle{\Sigma_{1},\widehat{\mathrmbfit{th}}_{\Sigma_{1}}(T_{1})}\rangle} \rightarrow 
{\langle{\Sigma_{2},\widehat{\mathrmbfit{th}}_{\Sigma_{2}}(T_{2})}\rangle}
\]\normalsize
\vspace{-10pt}
where
\begin{center}
{\footnotesize$\begin{array}{l}
T_{1} \geq_{\Sigma_{1}} \mathrmbfit{inv}(\sigma)(T_{2})=\mathrmbfit{spec}(\sigma)(T_{2}) \\
\widehat{\mathrmbfit{th}}_{\Sigma_{1}}(T_{1} \geq_{\Sigma_{1}} \mathrmbfit{spec}(\sigma)(T_{2}))
\cdot \widehat{\mathrmbfit{th}}_{\sigma,T_{2}} :
\widehat{\mathrmbfit{th}}_{\Sigma_{1}}(T_{1}) \rightarrow
\widehat{\mathrmbfit{th}}_{\Sigma_{1}}(\mathrmbfit{spec}(\sigma)(T_{2})) \rightarrow
\mathrmbfit{struc}(\sigma)(\widehat{\mathrmbfit{th}}_{\Sigma_{2}}(T_{2}))
\end{array}$}
\end{center}

\begin{definition}\label{restricted:theory:logic:functor}
{\normalsize$\blacksquare$}
There is a restricted theory logic lax indexed functor
\footnotesize\[
\mathrmbfit{th}
: \mathrmbfit{spec} \Rightarrow \mathrmbfit{snd}
: \mathrmbf{Lang}^{\mathrm{op}} \rightarrow \mathrmbf{Cat}.
\]\normalsize
that is the restriction of theory logic to sound logics,
so that
\footnotesize\[
\mathrmbfit{th} \bullet \mathrmbfit{inc} = \widehat{\mathrmbfit{th}} 
: \mathrmbfit{struc} \Rightarrow \mathrmbfit{log}.
\]\normalsize
\end{definition}
Clearly,
\footnotesize\[
\mathrmbfit{th} \bullet \mathrmbfit{pr}_{0} = \mathrmbfit{ext}
\;\;\text{and}\;\; 
\mathrmbfit{th} \bullet \mathrmbfit{pr}_{1} = \mathrmbfit{1}_{\mathrmbf{Spec}}.
\]\normalsize

\begin{proposition}
There is an adjunction (coreflection)
\footnotesize\[
\lambda 
= {\langle{\mathrmbfit{th} \dashv \mathrmbfit{pr}_{1},1,\varepsilon}\rangle} 
: \mathrmbf{Spec} \rightarrow \mathrmbf{Snd}.
\]\normalsize
\end{proposition}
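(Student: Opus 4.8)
The plan is to follow the pattern of the preceding coreflection $\overset{\scriptscriptstyle\vee}{\rho}$: read off the unit and counit, and then verify a single universal property. The recorded identity $\mathrmbfit{th} \bullet \mathrmbfit{pr}_{1} = \mathrmbfit{1}_{\mathrmbf{Spec}}$ shows that $\mathrmbfit{th}\bullet\mathrmbfit{pr}_{1}$ is the identity endofunctor of $\mathrmbf{Spec}$, so the unit $\eta : \mathrmbfit{1}_{\mathrmbf{Spec}} \Rightarrow \mathrmbfit{th}\bullet\mathrmbfit{pr}_{1}$ is forced to be the identity; this is exactly what makes $\mathrmbfit{th}$ full and faithful and the adjunction a coreflection. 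It therefore suffices to establish the universal property of this identity unit: for every specification $\langle\Sigma_{1},T_{1}\rangle$, every sound logic $\langle\Sigma_{2},M_{2},T_{2}\rangle$, and every specification morphism $\sigma : \langle\Sigma_{1},T_{1}\rangle \rightarrow \mathrmbfit{pr}_{1}\langle\Sigma_{2},M_{2},T_{2}\rangle = \langle\Sigma_{2},T_{2}\rangle$, there is a unique sound-logic morphism $\widehat{\sigma} : \mathrmbfit{th}\langle\Sigma_{1},T_{1}\rangle = \langle\Sigma_{1},\mathrmbfit{ext}_{\Sigma_{1}}(T_{1}),T_{1}\rangle \rightarrow \langle\Sigma_{2},M_{2},T_{2}\rangle$ with $\mathrmbfit{pr}_{1}(\widehat{\sigma}) = \sigma$.

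The counit $\varepsilon : \mathrmbfit{pr}_{1}\bullet\mathrmbfit{th} \Rightarrow \mathrmbfit{1}_{\mathrmbf{Snd}}$ is read off from the Fact that $\mathrmbfit{ext}_{\Sigma}(T)$ is initial in the fiber category $\mathrmbfit{mod}(\Sigma,T)$: its $\langle\Sigma,M,T\rangle^{\mathrm{th}}$ component is $\varepsilon_{\langle\Sigma,M,T\rangle} = \langle 1_{\Sigma}, {\gtrdot}_{\Sigma}(T,M)\rangle : \langle\Sigma,\mathrmbfit{ext}_{\Sigma}(T),T\rangle \rightarrow \langle\Sigma,M,T\rangle$, where ${\gtrdot}_{\Sigma}(T,M) : \mathrmbfit{ext}_{\Sigma}(T) \rightarrow M$ is the unique $\Sigma$-structure morphism out of the initial object, available because the target logic is sound, $M \models_{\Sigma} T$. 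For the universal property itself the candidate is $\widehat{\sigma} = \langle\sigma,f\rangle$: its specification component is $\sigma$, so $\mathrmbfit{pr}_{1}(\widehat{\sigma}) = \sigma$ holds by construction, while its structure component must be supplied as a $\Sigma_{1}$-structure morphism $f : \mathrmbfit{ext}_{\Sigma_{1}}(T_{1}) \rightarrow \mathrmbfit{struc}(\sigma)(M_{2})$.

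The crux — the one place where the specification ordering, soundness, and intent naturality are used together — is to check that $\mathrmbfit{struc}(\sigma)(M_{2})$ is itself a $\langle\Sigma_{1},T_{1}\rangle$-model. Chaining $T_{1} \geq_{\Sigma_{1}} \mathrmbfit{inv}(\sigma)(T_{2})$ (since $\sigma$ is a specification morphism), the monotone image $\mathrmbfit{inv}(\sigma)(T_{2}) \geq_{\Sigma_{1}} \mathrmbfit{inv}(\sigma)(\mathrmbfit{int}_{\Sigma_{2}}(M_{2}))$ of the soundness $T_{2} \geq_{\Sigma_{2}} \mathrmbfit{int}_{\Sigma_{2}}(M_{2})$, and the intent naturality equation $\mathrmbfit{inv}(\sigma)(\mathrmbfit{int}_{\Sigma_{2}}(M_{2})) = \mathrmbfit{int}_{\Sigma_{1}}(\mathrmbfit{struc}(\sigma)(M_{2}))$ yields $T_{1} \geq_{\Sigma_{1}} \mathrmbfit{int}_{\Sigma_{1}}(\mathrmbfit{struc}(\sigma)(M_{2}))$, i.e. $\mathrmbfit{struc}(\sigma)(M_{2}) \models_{\Sigma_{1}} T_{1}$. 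With this in hand the Fact supplies $f$ as the \emph{unique} morphism from the initial object $\mathrmbfit{ext}_{\Sigma_{1}}(T_{1})$ into the model $\mathrmbfit{struc}(\sigma)(M_{2})$, which at once gives existence of $\widehat{\sigma}$ and forces its uniqueness: any competitor has specification component $\sigma$ by $\mathrmbfit{pr}_{1}(\widehat{\sigma}) = \sigma$, and then its structure component coincides with $f$ by initiality. I do not expect the triangle identities to cause trouble; the only bookkeeping subtlety is that $\mathrmbfit{th}$ is a \emph{lax} fibered functor, so one should confirm that $\widehat{\sigma}$ agrees with the fibered composite $\mathrmbfit{th}(\sigma)\cdot\varepsilon_{\langle\Sigma_{2},M_{2},T_{2}\rangle}$ built from the laxification $\mathrmbfit{ext}_{\sigma}$ — but since both land in the same hom-set out of the initial object $\mathrmbfit{ext}_{\Sigma_{1}}(T_{1})$, initiality identifies them automatically.
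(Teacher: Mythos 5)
Your proposal is correct and follows essentially the same route as the paper: the identity unit read off from $\mathrmbfit{th}\bullet\mathrmbfit{pr}_{1}=\mathrmbfit{1}_{\mathrmbf{Spec}}$, a counit component $\langle 1_{\Sigma},-\rangle$ given by the (unique) $\Sigma$-structure morphism $\mathrmbfit{ext}_{\Sigma}(T)\rightarrow M$ supplied by soundness, and the chain $T_{1}\geq_{\Sigma_{1}}\mathrmbfit{inv}(\sigma)(T_{2})\geq_{\Sigma_{1}}\mathrmbfit{inv}(\sigma)(\mathrmbfit{int}_{\Sigma_{2}}(M_{2}))=\mathrmbfit{int}_{\Sigma_{1}}(\mathrmbfit{struc}(\sigma)(M_{2}))$ as the crux. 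The only differences are presentational: you verify the universal property of the identity unit and obtain the structure component abstractly from the initiality Fact, whereas the paper verifies the couniversal property of the counit, writes that component explicitly as the composite $\mathrmbfit{ext}_{\Sigma_{1}}(T_{1}\geq_{\Sigma_{1}}\mathrmbfit{int}_{\Sigma_{1}}(\mathrmbfit{struc}(\sigma)(M_{2})))\cdot\mathrmbfit{ext}_{\sigma,\mathrmbfit{int}_{\Sigma_{2}}(M_{2})}\cdot\mathrmbfit{struc}(\sigma)(\varepsilon^{\mu}_{\Sigma_{2},M_{2}})$, and grounds uniqueness in cohesion (``unifiedness'') of $\mathrmbfit{ext}_{\Sigma_{1}}(T_{1})$ rather than in the Fact --- two justifications of the same uniqueness claim.
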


\begin{proof}
The unit is the identity
$\mathrmbfit{th} \bullet \mathrmbfit{pr}_{1} = \mathrmbfit{1}_{\mathrmbf{Spec}}$.
The counit
$\varepsilon : \mathrmbfit{pr}_{1} \bullet \mathrmbfit{th} \Rightarrow \mathrmbfit{1}_{\mathrmbf{Snd}}$
has 
the sound logic morphism
$\varepsilon_{\Sigma,M,T} 
= {\langle{1_{\Sigma},\mathrmbfit{ext}_{\Sigma}(T \geq_{\Sigma} \mathrmbfit{int}_{\Sigma}(M)) \cdot \varepsilon^{\mu}_{\Sigma,M}}\rangle} :
{\langle{\Sigma,\mathrmbfit{ext}_{\Sigma}(T),T}\rangle}
\rightarrow {\langle{\Sigma,M,T}\rangle}$
as its ${\langle{\Sigma,M,T}\rangle}^{\mathrm{th}}$ component.




\begin{sloppypar}
Assume that
${\langle{\sigma,f}\rangle} : 
\mathrmbfit{th}(\Sigma_{1},T_{1})
= {\langle{\Sigma_{1},\mathrmbfit{ext}_{\Sigma_{1}}(T_{1}),T_{1}}\rangle} \rightarrow
{\langle{\Sigma_{2},M_{2},T_{2}}\rangle}$
is a morphism between sound logics~\footnote{Recall that
a logic morphism is a pair 
${\langle{\sigma,f}\rangle} : {\langle{\Sigma_{1},M_{1},T_{1}}\rangle} \rightarrow {\langle{\Sigma_{2},M_{2},T_{2}}\rangle}$, 
where $\sigma : \Sigma_{1} \rightarrow \Sigma_{2}$ is a language morphism and
$f : (M_{1},T_{1}) \rightarrow 
\mathrmbfit{log}(\sigma)(M_{2},T_{2}) = (\mathrmbfit{struc}(\sigma)(M_{2}),\mathrmbfit{inv}(\sigma)(T_{2}))$
is a $\Sigma_{1}$-logic morphism; 
that is,
$f : M_{1} \rightarrow \mathrmbfit{struc}(\sigma)(M_{2})$ is a $\Sigma_{1}$-structure morphism, and 
$T_{1} \geq_{\Sigma_{1}} \mathrmbfit{inv}(\sigma)(T_{2})$ is a $\Sigma_{1}$-specification ordering.
Hence,
a logic morphism 
${\langle{\sigma,f}\rangle} : {\langle{\Sigma_{1},M_{1},T_{1}}\rangle} \rightarrow {\langle{\Sigma_{2},M_{2},T_{2}}\rangle}$
consists of
a structure morphism 
${\langle{\sigma,f}\rangle} : {\langle{\Sigma_{1},M_{1}}\rangle} \rightarrow {\langle{\Sigma_{2},M_{2}}\rangle}$ and
a specification morphism 
$\sigma : {\langle{\Sigma_{1},T_{1}}\rangle} \rightarrow {\langle{\Sigma_{2},T_{2}}\rangle}$.}.
Then
$\sigma : 
{\langle{\Sigma_{1},T_{1}}\rangle}  \rightarrow {\langle{\Sigma_{2},T_{2}}\rangle}$
is a specification morphism, and 
$f : \mathrmbfit{ext}_{\Sigma_{1}}(T_{1}) \rightarrow \mathrmbfit{struc}(\sigma)(M_{2})$ 
is the $\Sigma_{1}$-structure morphism
\footnotesize\[
\mathrmbfit{ext}_{\Sigma_{1}}(T_{1} \geq_{\Sigma_{1}} \mathrmbfit{int}_{\Sigma_{1}}\mathrmbfit{struc}(\sigma)((M_{2})))
\cdot \mathrmbfit{ext}_{\sigma,\mathrmbfit{int}_{\Sigma_{2}}(M_{2})}
\cdot \mathrmbfit{struc}(\sigma)(\varepsilon^{\mu}_{\Sigma_{2},M_{2}}), 
\]\normalsize
since 
$\sigma : {\langle{\Sigma_{1},T_{1}}\rangle} \rightarrow {\langle{\Sigma_{2},T_{2}}\rangle}$ is a specification morphism
($T_{1} \geq_{\Sigma_{1}} \mathrmbfit{spec}(\sigma)(T_{2})$),
${\langle{\Sigma_{2},M_{2},T_{2}}\rangle}$ is a sound logic
($T_{2} \geq_{\Sigma_{2}} \mathrmbfit{int}_{\Sigma_{2}}(M_{2})$), and
$\mathrmbfit{ext}_{\Sigma_{1}}(T_{1})$
is unified (Axiom~\ref{extent:unified}).
Hence,
by the naturality of the 
$\mathrmbfit{ext}_{\sigma} : 
\mathrmbfit{spec}(\sigma) \circ \mathrmbfit{int}_{\Sigma_{1}} \Rightarrow
\mathrmbfit{int}_{\Sigma_{2}} \circ \mathrmbfit{struc}(\sigma)$,
$\sigma : {\langle{\Sigma_{1},T_{1}}\rangle}  \rightarrow {\langle{\Sigma_{2},T_{2}}\rangle}$
is the unique specification morphism
such that
$\mathrmbfit{th}(\sigma) \cdot  \varepsilon_{\Sigma_{2},M_{2},T_{2}} 
= 
{\langle{\sigma,f}\rangle}$.
\rule{5pt}{5pt}
\end{sloppypar}
\end{proof}


\subsection{Fiber Sums}

For any language $\Sigma$,
the fiber categories $\mathrmbfit{struc}(\Sigma)$ have some finite sums.
Existence of these particular kinds of sums is specified in the following meta-axioms.

\begin{meta-axiom}
{\normalsize$\blacksquare$}
There is 
an initial lax indexed functor
\footnotesize\[
\mathrmbfit{0} 
: \mathrmbfit{lang} \Rightarrow \mathrmbfit{struc} 
: \mathrmbf{Lang}^{\mathrm{op}} \rightarrow \mathrmbf{Cat}.
\]\normalsize
\end{meta-axiom}
For any language $\Sigma$,
there is a fiber functor
$\mathrmbfit{0}_{\Sigma} : \mathrmbfit{lang}(\Sigma) = \mathrmbf{1} \rightarrow \mathrmbfit{struc}(\Sigma)$
that maps 
the single object $\bullet_{\Sigma} \in \mathrmbfit{lang}(\Sigma) = \mathrmbf{1}$ 
to the initial $\Sigma$-structure 
$\mathrmbfit{0}_{\Sigma}(\bullet_{\Sigma}) = 0_{\Sigma} \in \mathrmbfit{struc}(\Sigma)$, and
 maps the single (identity) morphism
$1_{\bullet_{\Sigma}} : {\bullet_{\Sigma}} \rightarrow {\bullet_{\Sigma}}$ in $\mathrmbfit{lang}(\Sigma) = \mathrmbf{1}$
to the identity $\Sigma$-structure morphism
$0_{\Sigma} \rightarrow 0_{\Sigma}$.
For any language morphism $\sigma : \Sigma_{1} \rightarrow \Sigma_{2}$,
inital structure satisfies the lax naturality commutative diagram
\begin{center}
\begin{tabular}{c}
\\
\setlength{\unitlength}{0.65pt}
\begin{picture}(100,80)(0,0)
\put(-10,80){\makebox(0,0){\footnotesize{$\mathrmbf{1}=\mathrmbfit{lang}(\Sigma_{1})$}}}
\put(100,80){\makebox(0,0){\footnotesize{$\mathrmbfit{struc}(\Sigma_{1})$}}}
\put(-10,0){\makebox(0,0){\footnotesize{$\mathrmbf{1}=\mathrmbfit{lang}(\Sigma_{2})$}}}
\put(100,0){\makebox(0,0){\footnotesize{$\mathrmbfit{struc}(\Sigma_{2})$}}}
\put(50,92){\makebox(0,0){\scriptsize{$\mathrmbfit{0}_{\Sigma_{1}}$}}}
\put(106,40){\makebox(0,0)[l]{\scriptsize{$\mathrmbfit{struc}(\sigma)$}}}
\put(-4,40){\makebox(0,0)[r]{\scriptsize{$\mathrmit{1}_{\mathrmbf{1}}=\mathrmbfit{lang}(\sigma)$}}}
\put(50,-14){\makebox(0,0){\scriptsize{$\mathrmbfit{0}_{\Sigma_{2}}$}}}
\put(0,20){\vector(0,1){40}}
\put(100,20){\vector(0,1){40}}
\put(35,80){\vector(1,0){30}}
\put(35,0){\vector(1,0){30}}
\put(35,50){\begin{picture}(0,0)(0,0)
\setlength{\unitlength}{0.5pt}
\put(28,-7){\makebox(0,0){{$\scriptstyle\mathrmbfit{0}_{\sigma}$}}}
\drawline(3,0)(23,-20)
\drawline(0,-3)(20,-23)
\qbezier(23,-16)(23,-20)(25,-25)
\qbezier(16,-23)(20,-23)(25,-25)
\end{picture}}
\end{picture}
\\ \\
\end{tabular}
\end{center}
Pointwise,
this asserts that the structure operator laxly preserves initial structure,
$0_{\Sigma_{1}} \xrightarrow{\scriptstyle\mathrmbfit{0}_{\sigma,\bullet_{\Sigma_{2}}}} \mathrmbfit{struc}(\sigma)(0_{\Sigma_{2}})$.
These diagrams vertically paste together under composition of language morphisms:
$\mathrmbfit{0}_{\sigma_{1}\cdot\sigma_{2}}
= \mathrmbfit{lang}(\sigma_{2})\mathrmbfit{0}_{\sigma_{1}} \bullet \mathrmbfit{0}_{\sigma_{2}}\mathrmbfit{struc}(\sigma_{1})
= \mathrmbfit{0}_{\sigma_{1}} \bullet \mathrmbfit{0}_{\sigma_{2}}\mathrmbfit{struc}(\sigma_{1})$
or
$\mathrmbfit{0}_{\sigma_{1}\cdot\sigma_{2},\bullet_{\Sigma_{3}}}
= \mathrmbfit{0}_{\sigma_{1},\bullet_{\Sigma_{2}}} \cdot \mathrmbfit{struc}(\sigma_{1})(\mathrmbfit{0}_{\sigma_{2},\bullet_{\Sigma_{3}}})$.

The homogenization (Grothendieck construction) is the initial lax fibered functor 
$\mathrmbfit{0} : \mathrmbf{Lang} \rightarrow \mathrmbf{Spec}$ 
that commutes with index projections
$\mathrmbfit{0} \circ \mathrmbfit{pr} = \mathrmit{1}_{\mathrmbf{Lang}}$.
Initial structure
maps a language $\Sigma$ 
(more precisely ${\langle{\Sigma,\bullet_{\Sigma}}\rangle}$)
to the initial structure $\mathrmbfit{0}(\Sigma) = {\langle{\Sigma,0_{\Sigma}}\rangle}$, and
maps a language morphism $\sigma : \Sigma_{1} \rightarrow \Sigma_{2}$
(more precisely $\sigma : {\langle{\Sigma_{1},\bullet_{\Sigma_{1}}}\rangle} \rightarrow {\langle{\Sigma_{2},\bullet_{\Sigma_{2}}}\rangle}$)
to the initial structure morphism
$\mathrmbfit{0}(\sigma) 
={\langle{\sigma,\mathrmbfit{0}_{\Sigma_{1}}(1_{\bullet_{\Sigma_{1}}}){\,\cdot\,}\mathrmbfit{0}_{\sigma,\bullet_{\Sigma_{2}}}}\rangle}
= {\langle{\sigma,\mathrmbfit{0}_{\sigma,\bullet_{\Sigma_{2}}}}\rangle}
: {\langle{\Sigma_{1},0_{\Sigma_{1}}}\rangle} \rightarrow {\langle{\Sigma_{2},0_{\Sigma_{2}}}\rangle}$.
Clearly,
the initial functor commutes with projections
$\mathrmbfit{0} \circ \mathrmbfit{pr} = \mathrmbfit{1}_{\mathrmbf{Lang}}$.
Initiality, 
the fact that
$\mathrmbfit{0}_{\Sigma}(\bullet_{\Sigma})$ is the initial object in the structure fiber category $\mathrmbf{Struc}(\Sigma)$,
is axiomatized by the following.
\begin{meta-axiom}
{\normalsize$\blacksquare$}
There is an adjunction (coreflection)
\footnotesize\[
\omega = 
{\langle{\mathrmbfit{0},\mathrmbfit{pr},1,\varepsilon}\rangle} 
: \mathrmbf{Lang} \rightarrow \mathrmbf{Struc}.
\]\normalsize
\end{meta-axiom}
The unit is the identity
$\mathrmbfit{1}_{\mathrmbf{Lang}} = \mathrmbfit{0} \bullet \mathrmbfit{pr}$.
The counit is a natural transformation
$\varepsilon : \mathrmbfit{pr} \bullet \mathrmbfit{0} \Rightarrow \mathrmbfit{1}_{\mathrmbf{Struc}}$.
For any structure ${\langle{\Sigma,M}\rangle}$,
the ${\langle{\Sigma,M}\rangle}^{\mathrm{th}}$ component of $\varepsilon$ is
the structure morphism
$\varepsilon_{\Sigma,M} = {\langle{1_{\Sigma},{!}_{M}}\rangle} : 
\mathrmbfit{0}(\mathrmbfit{pr}(\Sigma,M))={\langle{\Sigma,0_{\Sigma}}\rangle} \rightarrow {\langle{\Sigma,M}\rangle}$
with $\Sigma$-structure morphism
${!}_{M} : 0_{\Sigma} \rightarrow M$.
We show that ${!}_{M}$ is the unique $\Sigma$-structure morphism from the $\Sigma$-structure $0_{\Sigma}$ to $M$.

\begin{figure}
\begin{center}
\begin{tabular}{@{\hspace{-50pt}}c@{\hspace{170pt}}c@{\hspace{60pt}}}
\\ 
& \\
\setlength{\unitlength}{0.7pt}
\begin{picture}(120,80)(0,0)
\put(0,80){\makebox(0,0){$\Sigma_{1}$}}
\put(120,90){\makebox(0,0){$
\overset{\textstyle\underbrace{
\mathrmbfit{pr}(\mathrmbfit{0}(\Sigma_{1}))}}
{\Sigma_{1}}
$}}
\put(120,-12){\makebox(0,0){$
\overset{\textstyle\underbrace{\Sigma_{2}}}
{\mathrmbfit{pr}(\Sigma_{2},M_{2})}
$}}
\put(50,30){\makebox(0,0)[r]{\footnotesize{$\sigma$}}}
\put(50,85){\makebox(0,0){\footnotesize{$
\overset{\underbrace{\eta_{\Sigma_{1}}}}{=}
$}}}
\put(128,45){\makebox(0,0)[l]{\footnotesize{$
\overset{\underbrace{\mathrmbfit{pr}(\sigma,g)}}{\sigma}
$}}}
\put(20,80){\vector(1,0){75}}
\put(18,68){\vector(3,-2){75}}
\put(120,65){\vector(0,-1){45}}
\put(85,0){\begin{picture}(0,80)(0,0)
\put(120,92){\makebox(0,0){$
\overset{\underbrace{\mathrmbfit{0}(\Sigma_{1})}}
{{\langle{\Sigma_{1},0_{\Sigma_{1}}}\rangle}}
$}}
\put(120,0){\makebox(0,0){${\langle{\Sigma_{2},M_{2}}\rangle}$}}
\put(120,65){\vector(0,-1){45}}
\put(130,40){\makebox(0,0)[l]{\footnotesize{$
{\langle{\sigma,g}\rangle}
$}}}
\end{picture}}
\end{picture}
&
\setlength{\unitlength}{0.7pt}
\begin{picture}(120,80)(0,0)
\put(0,80){\makebox(0,0){${\langle{\Sigma_{2},M_{2}}\rangle}$}}
\put(0,0){\makebox(0,0){${\langle{\Sigma_{1},\mathrmbfit{struc}(\sigma)(M_{2})}\rangle}$}}
\put(120,92){\makebox(0,0){$
\overset{\underbrace{\mathrmbfit{0}(\mathrmbfit{pr}(\Sigma_{2},M_{2}))}}
{{\langle{\Sigma_{2},0_{\Sigma_{2}}}\rangle}}
$}}
\put(120,-12){\makebox(0,0){$
\overset{\textstyle{\langle{\Sigma_{1},0_{\Sigma_{1}}}\rangle}}
{\overbrace{\mathrmbfit{0}(\Sigma_{1})}}
$}}
\put(-5,35){\makebox(0,0)[r]{\footnotesize{${\langle{\sigma,1}\rangle}$}}}
\put(55,35){\makebox(0,0)[r]{\footnotesize{${\langle{\sigma,g}\rangle}$}}}
\put(55,105){\makebox(0,0){\footnotesize{$
\overset{\underbrace{\varepsilon_{\Sigma_{2},M_{2}}}}
{{\langle{1_{\Sigma_{2}},{!}_{M_{2}}}\rangle}}
$}}}
\put(50,-25){\makebox(0,0){\footnotesize{$
\underset{\overbrace{\varepsilon_{\Sigma_{1},\mathrmbfit{struc}(\sigma)(M_{2})}}}
{{\langle{1_{\Sigma_{1}},{!}_{\mathrmbfit{struc}(\sigma)(M_{2})}}\rangle}}
$}}}
\put(115,42){\makebox(0,0)[l]{\footnotesize{$
\overset{\underbrace{\mathrmbfit{0}(\sigma)}}
{{\langle{\sigma,\mathrmbfit{0}_{\sigma,\bullet_{\Sigma_{2}}}}\rangle}}
$}}}
\put(80,80){\vector(-1,0){45}}
\qbezier[10](80,0)(70,0)(60,0)\put(60,0){\vector(-1,0){0}}
\put(93,18){\vector(-3,2){75}}
\put(120,20){\vector(0,1){45}}
\qbezier[15](0,20)(0,32.5)(0,65)\put(0,65){\vector(0,1){0}}
\put(85,0){\begin{picture}(0,80)(0,0)
\put(120,92){\makebox(0,0){$
\overset{\underbrace{\mathrmbfit{pr}(\Sigma_{2},M_{2})}}
{\Sigma_{2}}
$}}
\put(120,0){\makebox(0,0){$\Sigma_{1}$}}
\put(120,20){\vector(0,1){45}}
\put(130,40){\makebox(0,0)[l]{\footnotesize{$\sigma$}}}
\end{picture}}
\end{picture}
\\ & \\ & \\ & \\ 
\multicolumn{2}{c}{
{\footnotesize$\begin{array}{l}
{\langle{\sigma,g}\rangle} =
{\langle{\sigma,\mathrmbfit{0}_{\sigma,\bullet_{\Sigma_{2}}}}\rangle} \circ 
{{\langle{1_{\Sigma_{2}},{!}_{M_{2}}}\rangle}} =
{{\langle{\sigma,\mathrmbfit{0}_{\sigma,\bullet_{\Sigma_{2}}} \cdot \mathrmbfit{struc}(\sigma)({!}_{M_{2}})}\rangle}} 
\\ \\
g = \mathrmbfit{0}_{\sigma,\bullet_{\Sigma_{2}}} \cdot \mathrmbfit{struc}(\sigma)({!}_{M_{2}}) :
0_{\Sigma_{1}} \rightarrow \mathrmbfit{struc}(\sigma)(0_{\Sigma_{2}})
\rightarrow
\mathrmbfit{struc}(\sigma)(M_{2})
\end{array}$}}
\end{tabular}
\end{center}
\caption{Initial-Projection Adjunction}
\label{initial-projection:adjunction}
\end{figure}

The adjunction is expressed either as universally or couniversally.
{\bfseries Universally:} 
for any structure ${\langle{\Sigma_{2},M_{2}}\rangle}$
and any language morphism $\sigma : \Sigma_{1} \rightarrow \Sigma_{2}=\mathrmbfit{pr}(\Sigma_{2},M_{2})$,
there is a unique $\Sigma_{1}$-morphism 
$g : 0_{\Sigma_{1}} \rightarrow \mathrmbfit{struc}(\sigma)(M_{2})$.
Hence,
$g = {!}_{\mathrmbfit{struc}(\sigma)(M_{2})}$.
{\bfseries Couniversally:} 
for any structure ${\langle{\Sigma_{2},M_{2}}\rangle}$,
    any language morphism $\sigma : \Sigma_{1} \rightarrow \Sigma_{2}$
and any $\Sigma_{1}$-morphism $g : 0_{\Sigma_{1}} \rightarrow \mathrmbfit{struc}(\sigma)(M_{2})$~\footnote{That is,
for any language $\Sigma_{1}$
and any structure morphism
${\langle{\sigma,g}\rangle} : \mathrmbfit{0}(\Sigma_{1})={\langle{\Sigma_{1},0_{\Sigma_{1}}}\rangle} \rightarrow {\langle{\Sigma_{2},M_{2}}\rangle}$.},
we must have
$g = \mathrmbfit{0}_{\sigma,\bullet_{\Sigma_{2}}} \cdot \mathrmbfit{struc}(\sigma)({!}_{M_{2}}) 
: 0_{\Sigma_{1}} \rightarrow \mathrmbfit{struc}(\sigma)(0_{\Sigma_{2}}) \rightarrow \mathrmbfit{struc}(\sigma)(M_{2})$.

In summary,
for any $\Sigma_{2}$-structure $M_{2}$
and any language morphism $\sigma : \Sigma_{1} \rightarrow \Sigma_{2}$, 
${!}_{\mathrmbfit{struc}(\sigma)(M_{2})}  = \mathrmbfit{0}_{\sigma,\bullet_{\Sigma_{2}}} \cdot \mathrmbfit{struc}(\sigma)({!}_{M_{2}}) 
: 0_{\Sigma_{1}} \rightarrow \mathrmbfit{struc}(\sigma)(M_{2})$
is the unique $\Sigma_{1}$-morphism
from $0_{\Sigma_{1}}$ to $\mathrmbfit{struc}(\sigma)(M_{2})$.
In particular,
(when $\sigma = 1_{\Sigma} : \Sigma \rightarrow \Sigma$)
for any $\Sigma$-structure $M$,
${!}_{M} : 0_{\Sigma} \rightarrow M$ is the unique $\Sigma$-morphism from $0_{\Sigma}$ to $M$.

\begin{definition}
{\normalsize$\blacksquare$}
There is an initial indexed functor
\footnotesize\[
\mathrmbfit{0} = \mathrmbfit{pr} \circ \mathrmbfit{0}
: \mathrmbfit{spec} \Rightarrow \mathrmbfit{lang} \Rightarrow \mathrmbfit{struc} 
: \mathrmbf{Lang}^{\mathrm{op}} \rightarrow \mathrmbf{Cat}.
\]\normalsize
\end{definition}
For any language $\Sigma$,
there is a (constant) initial fiber functor
$\mathrmbfit{0}_{\Sigma} : \mathrmbfit{struc}(\Sigma) \rightarrow \mathrmbfit{spec}(\Sigma)$
that maps 
a $\Sigma$-specification $T \in \mathrmbfit{spec}(\Sigma)$ 
to the initial $\Sigma$-structure $0_{\Sigma}$, and
maps a $\Sigma$-specification ordering
$T \geq_{\Sigma} T'$ in $\mathrmbfit{spec}(\Sigma)$
to the trivial identity $\Sigma$-structure morphism
$1 : 0_{\Sigma} \rightarrow 0_{\Sigma}$.
For any language morphism $\sigma : \Sigma_{1} \rightarrow \Sigma_{2}$ in $\mathrmbf{Lang}$,
initiality satisfies the lax naturality commutative diagram
$\mathrmbfit{0}_{\sigma} :
\mathrmbfit{spec}(\sigma) \cdot \mathrmbfit{0}_{\Sigma_{1}} \Rightarrow 
\mathrmbfit{0}_{\Sigma_{2}} \cdot \mathrmbfit{struc}(\sigma)$.
Pointwise,
this asserts that the structure operator laxly preserves initial structure,
$0_{\Sigma_{1}} \xrightarrow{\scriptstyle\mathrmbfit{0}_{\sigma,T_{2}}} \mathrmbfit{struc}(\sigma)(0_{\Sigma_{2}})$
for any $\Sigma$-specification $T_{2} \in \mathrmbfit{spec}(\Sigma_{2})$.
These diagrams vertically paste together under composition of language morphisms.

The homogenization (Grothendieck construction) is the initial fibered functor 
$\mathrmbfit{0} = \mathrmbfit{pr} \circ \mathrmbfit{0} :	\mathrmbf{Spec} \rightarrow \mathrmbf{Struc}$
that commutes with index projections
$\mathrmbfit{0} \circ \mathrmbfit{pr} = \mathrmbfit{pr}$.

\begin{definition}
{\normalsize$\blacksquare$}
There is a initial logic indexed functor
\footnotesize\[
\mathrmbfit{0} 
= (\mathrmbfit{0},\mathrmbfit{1}_{\mathrmbf{spec}}) 
: \mathrmbfit{spec} \Rightarrow \mathrmbfit{log}
: \mathrmbf{Lang}^{\mathrm{op}} \rightarrow \mathrmbf{Cat}.
\]\normalsize
that mediates the initial and identity indexed functors,
so that 
\footnotesize\[
\mathrmbfit{0} \bullet \mathrmbfit{pr}_{0} 
= \mathrmbfit{0}
\;\;\text{and}\;\; 
\mathrmbfit{0} \bullet \mathrmbfit{pr}_{1} 
= \mathrmbfit{1}_{\mathrmbf{spec}}.
\]\normalsize
\end{definition}
For any language $\Sigma$,
there is a initial logic fiber functor
$\mathrmbfit{0}_{\Sigma} 
= (\mathrmbfit{1}_{\mathrmbf{spec}(\Sigma)},\mathrmbfit{0}_{\Sigma}) 
: \mathrmbfit{spec}(\Sigma) \rightarrow \mathrmbfit{log}(\Sigma)$
that maps a $\Sigma$-structure $T \in \mathrmbfit{spec}(\Sigma)$ 
to the $\Sigma$-logic $(T,\mathrmbfit{0}_{\Sigma}) \in \mathrmbfit{log}(\Sigma)$, and
maps a $\Sigma$-structure morphism
$f : T \rightarrow T'$ in $\mathrmbfit{spec}(\Sigma)$
to the $\Sigma$-logic morphism
$\mathrmbfit{0}(f) = f : (T,\mathrmbfit{0}_{\Sigma}) \rightarrow (T',\mathrmbfit{0}_{\Sigma})$.

For any language morphism $\sigma : \Sigma_{1} \rightarrow \Sigma_{2}$ in $\mathrmbf{Lang}$,
initial logic satisfies the naturality commutative diagram
\begin{center}
\begin{tabular}{c}
\\
\setlength{\unitlength}{0.6pt}
\begin{picture}(100,80)(0,0)
\put(0,80){\makebox(0,0){\footnotesize{$\mathrmbfit{spec}(\Sigma_{1})$}}}
\put(100,80){\makebox(0,0){\footnotesize{$\mathrmbfit{log}(\Sigma_{1})$}}}
\put(0,0){\makebox(0,0){\footnotesize{$\mathrmbfit{spec}(\Sigma_{2})$}}}
\put(100,0){\makebox(0,0){\footnotesize{$\mathrmbfit{log}(\Sigma_{2})$}}}
\put(50,92){\makebox(0,0){\scriptsize{$\mathrmbfit{0}_{\Sigma_{1}}$}}}
\put(105,40){\makebox(0,0)[l]{\scriptsize{$\mathrmbfit{log}(\sigma)$}}}
\put(-5,40){\makebox(0,0)[r]{\scriptsize{$\mathrmbfit{spec}(\sigma)$}}}
\put(50,-14){\makebox(0,0){\scriptsize{$\mathrmbfit{0}_{\Sigma_{2}}$}}}
\put(0,20){\vector(0,1){40}}
\put(100,20){\vector(0,1){40}}
\put(35,80){\vector(1,0){30}}
\put(35,0){\vector(1,0){30}}
\end{picture}
\\ \\
\end{tabular}
\end{center}
or pointwise,
$
\mathrmbfit{0}_{\Sigma_{1}}(\mathrmbfit{spec}(\sigma)(T_{2}))
= (\mathrmbfit{spec}(\sigma)(T_{2}),\mathrmbfit{0}_{\Sigma_{1}})
= (\mathrmbfit{spec}(\sigma)(T_{2}),\mathrmbfit{inv}(\sigma)(\mathrmbfit{0}_{\Sigma_{2}})))
= \mathrmbfit{log}(\sigma)(T_{2},\mathrmbfit{0}_{\Sigma_{2}})
= \mathrmbfit{log}(\sigma)(\mathrmbfit{0}_{\Sigma_{2}}(T_{2}))
$
for every target structure $T_{2} \in \mathrmbfit{spec}(\Sigma_{2})$.

The homogenization (Grothendieck construction) is the initial logic fibered functor 
$\mathrmbfit{0} : \mathrmbf{Spec} \rightarrow \mathrmbf{Log}$ 
that commutes with index projections
$\mathrmbfit{0} \circ \mathrmbfit{pr} = \mathrmbfit{pr}$.
Initial logic
maps a specification ${\langle{\Sigma,T}\rangle}$
to the logic
$\mathrmbfit{0}(\Sigma,T) = {\langle{\Sigma,T,\mathrmbfit{0}_{\Sigma}}\rangle}$,
and maps a specification morphism
$(\sigma,h) : (\Sigma_{1},T_{1}) \rightarrow (\Sigma_{2},T_{2})$
to the logic morphism
$\mathrmbfit{0}(\sigma,h) = {\langle{\sigma,h}\rangle} :
\mathrmbfit{0}(\Sigma_{1},T_{1}) = {\langle{\Sigma_{1},T_{1},\mathrmbfit{0}_{\Sigma_{1}}}\rangle} \rightarrow 
{\langle{\Sigma_{2},T_{2},\mathrmbfit{0}_{\Sigma_{2}}}\rangle} = \mathrmbfit{0}(\Sigma_{2},T_{2})$,
where
$\mathrmbfit{0}_{\Sigma_{1}} \geq_{\Sigma_{1}} 
\mathrmbfit{inv}(\sigma)(\mathrmbfit{0}_{\Sigma_{2}}) =
\mathrmbfit{0}_{\Sigma_{1}}$.

\newpage

\begin{eg}
In the classification logical environment $\mathtt{Cls}$ 
(used in the metatheory of information flow),
the type power functor
${\wp} : \mathrmbf{Set} \rightarrow \mathrmbf{Cls}$
serves as the initial structure lax fibered functor.
The functor ${\wp}$ 
maps a (type) set $X$ to the type power classification ${\wp}(X) = {\langle{X,{\wp}(X),\ni_{X}}\rangle}$,
where $0_{X}= {\langle{{\wp}(X),\ni_{X}}\rangle}$ is the initial $X$-classification:
for any $X$-classification $\mathcal{A} = {\langle{Y,\models}\rangle}$,
the type set function for $\mathcal{A}$ is the unique $X$-infomorphism
${!}_{\mathcal{A}} = \tau_{\mathcal{A}} : 0_{X} \rightleftarrows \mathcal{A}$.
The functor ${\wp}$
maps a (type) function $f : X_{1} \rightarrow X_{2}$ to the type power infomorphism
${\wp}(f) = {\langle{f,f^{-1}}\rangle}
: {\langle{X_{1},{\wp}(X_{1}),\ni_{X_{1}}}\rangle} \rightleftarrows {\langle{X_{2},{\wp}(X_{2}),\ni_{X_{2}}}\rangle}$,
where inverse image
${\wp}_{f} = f^{-1} = {!}_{\mathrmbfit{cls}(f)(0_{X_{2}})} : 0_{X_{1}} \rightarrow \mathrmbfit{cls}(f)(0_{X_{2}})$
is the unique $X_{1}$-infomorphism 
from $0_{X_{1}}={\langle{{\wp}(X_{1}),\ni_{X_{1}}}\rangle}$
to $\mathrmbfit{cls}(f)(0_{X_{2}}) = {\langle{{\wp}(X_{2}),\ni_{f}}\rangle}$,
since
$x_{1} \in \tau_{\mathrmbfit{cls}(f)(0_{X_{2}})}(Z_{2})$
iff $f(x_{1}) \in Z_{2}$
iff $x_{1} \in f^{-1}(Z_{2})$
for all $Z_{2} \in {\wp}(X_{2})$.
Clearly,
the type power functor commutes with projections
${\wp} \circ \mathrmbfit{typ} = 1_{\mathrmbf{Set}}$.
There is a natural transformation
$\varepsilon : \mathrmbfit{typ} \circ {\wp} \Rightarrow 1_{\mathrmbf{Cls}}$,
where for any classification $\mathcal{A}={\langle{X,Y,\models}\rangle}$,
the $\mathcal{A}^{\mathrm{th}}$
component of $\varepsilon$ is defined to be the infomorphism 
$\varepsilon_{\mathcal{A}} = {\langle{1_{X},\tau_{\mathcal{A}}}\rangle} : 
{\wp}(\mathrmbfit{typ}(\mathcal{A})) 
= {\langle{X,{\wp}(X),\ni_{X}}\rangle} \rightarrow {\langle{X,Y,\models}\rangle} = \mathcal{A}$.
Naturality is expressed by the commuting diagram
${\wp}(f) \cdot \varepsilon_{\mathcal{A}_{2}} 
= {\langle{f,f^{-1}}\rangle} \cdot {\langle{1_{X_{2}},\tau_{\mathcal{A}_{2}}}\rangle} 
= {\langle{f,\tau_{\mathcal{A}_{2}}{\cdot}f^{-1}}\rangle} 
= {\langle{f,g{\cdot}\tau_{\mathcal{A}_{1}}}\rangle}
= {\langle{1_{X_{1}},\tau_{\mathcal{A}_{1}}}\rangle}{\cdot}{\langle{f,g}\rangle}
= \varepsilon_{\mathcal{A}_{1}} \cdot {\langle{f,g}\rangle}$
for any infomorphism
${\langle{f,g}\rangle} : \mathcal{A}_{1}={\langle{X_{1},Y_{1},\models_{1}}\rangle} \rightleftarrows {\langle{X_{2},Y_{2},\models_{2}}\rangle}=\mathcal{A}_{2}$,
since
$x_{1} \in \tau_{\mathcal{A}_{1}}(g(y_{2}))$
iff $g(y_{2}) \models_{\mathcal{A}_{1}} x_{1}$
iff $y_{2} \models_{\mathcal{A}_{2}} f(x_{1})$
iff $x_{1} \in f^{-1}(\tau_{\mathcal{A}_{2}}(y_{2}))$.
The lax fibered functor ${\wp} : \mathrmbf{Set} \rightarrow \mathrmbf{Cls}$ 
is the homogenization of a type power lax indexed functor
${\wp} : \mathrmbfit{set} \Rightarrow \mathrmbfit{cls} : \mathrmbf{Set}^{\mathrm{op}} \rightarrow \mathrmbf{Cat}$.
For any set $X$
there is a fiber functor ${\wp}_{X} : \mathrmbfit{set}(X) \rightarrow \mathrmbfit{cls}(X)$
that maps 
the $X$-set (a single object) ${\bullet} \in \mathrmbfit{set}(X)$ 
to the type power $X$-classification 
${\wp}_{X}(\bullet) = {\langle{{\wp}X,\ni_{X}}\rangle} \in \mathrmbfit{cls}(X)$, 
and maps the (identity) $X$-set morphism
$1_{\bullet} : {\bullet} \rightarrow {\bullet}$ in $\mathrmbfit{set}(X)$
to the identity $X$-infomorphism
${\wp}_{X}(\bullet) \rightleftarrows {\wp}_{X}(\bullet)$.
For any function $f : X_{1} \rightarrow X_{2}$,
type power satisfies the lax naturality commutative diagram
\begin{center}
\begin{tabular}{c}
\\
\setlength{\unitlength}{0.6pt}
\begin{picture}(100,80)(0,0)
\put(0,80){\makebox(0,0){\footnotesize{$\mathrmbfit{set}(X_{1})$}}}
\put(100,80){\makebox(0,0){\footnotesize{$\mathrmbfit{cls}(X_{1})$}}}
\put(0,0){\makebox(0,0){\footnotesize{$\mathrmbfit{set}(X_{2})$}}}
\put(100,0){\makebox(0,0){\footnotesize{$\mathrmbfit{cls}(X_{2})$}}}
\put(50,92){\makebox(0,0){\scriptsize{$\mathrmbfit{0}_{X_{1}}={\wp}_{X_{1}}$}}}
\put(105,40){\makebox(0,0)[l]{\scriptsize{$\mathrmbfit{cls}(f)$}}}
\put(-5,40){\makebox(0,0)[r]{\scriptsize{$\mathrmbfit{set}(f)$}}}
\put(50,-14){\makebox(0,0){\scriptsize{$\mathrmbfit{0}_{X_{2}}={\wp}_{X_{2}}$}}}
\put(0,20){\vector(0,1){40}}
\put(100,20){\vector(0,1){40}}
\put(35,80){\vector(1,0){30}}
\put(35,0){\vector(1,0){30}}
\put(30,50){\begin{picture}(0,0)(0,0)
\setlength{\unitlength}{0.5pt}
\put(23,-5){\makebox(0,0)[l]{\large{$\scriptstyle\mathrmbfit{0}_{f}$}{$\scriptscriptstyle=f^{-1}$}}}
\drawline(3,0)(23,-20)
\drawline(0,-3)(20,-23)
\qbezier(23,-16)(23,-20)(25,-25)
\qbezier(16,-23)(20,-23)(25,-25)
\end{picture}}
\end{picture}
\\
\end{tabular}
\end{center}
or pointwise,
${\wp}_{X_{1}}(\mathrmbfit{set}(f)(\bullet))
= {\wp}_{X_{1}}(\bullet)
= {\langle{{\wp}X_{1},\ni_{X_{1}}}\rangle}
\stackrel{f^{-1}}{\rightleftarrows} 
{\langle{{\wp}X_{2},\ni_{f}}\rangle}
= \mathrmbfit{cls}(f)({\wp}X_{2},\ni_{X_{2}})
= \mathrmbfit{cls}(f)({\wp}_{X_{2}}(\bullet))$
for the single target object $\bullet \in \mathrmbfit{set}(X_{2})$.
These diagrams vertically paste together under composition of functions.
\end{eg}

~\footnote{Type power initially appears to be a fibered functor,
since it commutes with projections
${\wp} \circ \mathrmbfit{typ} = 1_{\mathrmbf{Set}}$.
But it is not,
since it does not preserve cartesian morphisms:
any function $f : X_{1} \rightarrow X_{2}$ is (trivially) cartesian,
but the infomorphism 
${\wp}(f) = {\langle{f,f^{-1}}\rangle}
: {\langle{X_{1},{\wp}(X_{1}),\ni_{X_{1}}}\rangle} \rightleftarrows {\langle{X_{2},{\wp}(X_{2}),\ni_{X_{2}}}\rangle}$
is not cartesian,
since
${\langle{f,1_{{\wp}(X_{2})}}\rangle} : 
\mathrmbfit{cls}(f)(X_{2},{\wp}(X_{2}),\ni_{X_{2}}) \rightleftarrows {\langle{X_{2},{\wp}(X_{2}),\ni_{X_{2}}}\rangle}$
is cartesian,
but
although there is a $X_{1}$-infomorphism
between
${\langle{X_{1},{\wp}(X_{1}),\ni_{X_{1}}}\rangle}$ and $\mathrmbfit{cls}(f)(X_{2},{\wp}(X_{2}),\ni_{X_{2}})$,
namely
${\langle{1_{X_{1}},f^{-1}}\rangle} : 
{\langle{X_{1},{\wp}(X_{1}),\ni_{X_{1}}}\rangle} \rightleftarrows \mathrmbfit{cls}(f)(X_{2},{\wp}(X_{2}),\ni_{X_{2}})$,
by a cardinality argument (${\wp}(X_{1}) \not\cong {\wp}(X_{2})$),
there is not necessarily an isomorphism.}

\begin{eg}
In the diagram logical environment $\mathtt{Dgm}$ (used in the metatheory of sketches),
the path functor
${\wp} : \mathrmbf{Gph} \rightarrow \mathrmbf{Dgm}$
serves as the initial structure lax fibered functor.
The functor ${\wp}$
maps a graph $G$ to the path diagram 
${\wp}(G) = {\langle{G,\eta_{G},G^{\ast}}\rangle}$,
where $G^{\ast}$ is the path category generated by $G$ and 
$\eta_{G} : G \rightarrow |G^{\ast}|$ is the embedding graph morphism.
This is an initial $G$-diagram $0_{G} = {\langle{\eta_{G},G^{\ast}}\rangle}$:
for any $G$-diagram ${\langle{G,D,\mathrmbf{C}}\rangle}$ with $D : G \rightarrow |\mathrmbf{C}|$, 
there is a unique $G$-diagram morphism
${!}_{D,\mathrmbf{C}} = \mathrmbfit{D} : 0_{G}={\langle{\eta_{G},G^{\ast}}\rangle} \rightarrow {\langle{D,\mathrmbf{C}}\rangle}$
consisting of a functor $\mathrmbfit{D} : G^{\ast} \rightarrow \mathrmbf{C}$
satisfying the commutative diagram $\eta_{G} \circ |\mathrmbfit{D}| = D$.
The functor ${\wp}$ maps a graph morphism $H : G_{1} \rightarrow G_{2}$ 
to the path diagram morphism
${\wp}(H) = {\langle{H,{\wp}_{H}}\rangle} = {\langle{H,H^{\ast}}\rangle} : 
{\langle{G_{1},\eta_{G_{1}},G_{1}^{\ast}}\rangle} \rightrightarrows {\langle{G_{2},\eta_{G_{2}},G_{2}^{\ast}}\rangle}$,
where 
${\wp}_{H} = {!}_{\mathrmbfit{dgm}(H)(0_{G_{2}})} = H^{\ast} : 
0_{G_{1}} = {\langle{\eta_{G_{1}},G_{1}^{\ast}}\rangle} \rightarrow 
{\langle{H \circ \eta_{G_{2}},G_{2}^{\ast}}\rangle} = \mathrmbfit{dgm}(H)(0_{G_{2}})$
is the unique $G_{1}$-diagram morphism 
from the initial $G_{1}$-diagram $0_{G_{1}}$
to $\mathrmbfit{dgm}(H)(0_{G_{2}})$,
the unique functor $H^{\ast} : G_{1}^{\ast} \rightarrow G_{2}^{\ast}$
adjoint to the graph morphism 
$H \circ \eta_{G_{2}} : G_{1} \rightarrow |G_{2}^{\ast}|$
and hence satisfying commutative diagram $\eta_{G_{1}} \circ |H^{\ast}| = H \circ \eta_{G_{2}}$.
Clearly,
the path functor commutes with projections
${\wp} \circ \mathrmbfit{gph} = 1_{\mathrmbf{Gph}}$.
There is a natural transformation
$\varepsilon : \mathrmbfit{gph} \circ {\wp} \Rightarrow 1_{\mathrmbf{Dgm}}$,
where for any diagram $\mathcal{D}={\langle{G,D,\mathrmbf{C}}\rangle}$,
the $\mathcal{D}^{\mathrm{th}}$
component of $\varepsilon$ is defined to be the diagram morphism 
$\varepsilon_{\mathcal{D}} = {\langle{1_{G},\mathrmbfit{D}}\rangle} : 
{\wp}(\mathrmbfit{gph}(\mathcal{D})) 
= {\langle{G,\eta_{G},G^{\ast}}\rangle} \rightarrow {\langle{G,D,\mathrmbf{C}}\rangle} = \mathcal{D}$.
Naturality is expressed by the commuting diagram
${\wp}(H) \circ \varepsilon_{\mathcal{D}_{2}} 
= {\langle{H,H^{\ast}}\rangle} \circ {\langle{1_{G_{2}},\mathrmbfit{D}_{2}}\rangle} 
= {\langle{H,H^{\ast}{\circ}\mathrmbfit{D}_{2} }\rangle} 
= {\langle{H,\mathrmbfit{D}_{1}{\circ}\mathrmbfit{H}}\rangle}
= {\langle{1_{G_{1}},\mathrmbfit{D}_{1}}\rangle}{\circ}{\langle{H,\mathrmbfit{H}}\rangle}
= \varepsilon_{\mathcal{D}_{1}} \circ {\langle{H,\mathrmbfit{H}}\rangle}$
for any diagram morphism
${\langle{H,\mathrmbfit{H}}\rangle} : \mathcal{D}_{1}={\langle{G_{1},D_{1},\mathrmbf{C}_{1}}\rangle} \rightrightarrows {\langle{G_{2},D_{2},\mathrmbf{C}_{2}}\rangle}=\mathcal{D}_{2}$,
since
$\eta_{G_{1}} \circ |H^{\ast}| \circ |\mathrmbfit{D}_{2}| 
= H \circ \eta_{G_{2}} \circ |\mathrmbfit{D}_{2}| 
= H \circ D_{2} 
= D_{1} \circ |\mathrmbfit{H}|
= \eta_{G_{1}} \circ |\mathrmbfit{D}_{1}| \circ |\mathrmbfit{H}|$
implies by uniqueness that
$H^{\ast} \circ \mathrmbfit{D}_{2} = \mathrmbfit{D}_{1} \circ \mathrmbfit{H}$.
\begin{center}
\begin{tabular}{c}
\\
\setlength{\unitlength}{0.5pt}
\begin{picture}(200,110)(0,-14)
\put(0,80){\makebox(0,0){\footnotesize{$G_{1}$}}}
\put(100,80){\makebox(0,0){\footnotesize{$|G_{1}^{\ast}|$}}}
\put(200,80){\makebox(0,0){\footnotesize{$|\mathrmbf{C}_{1}|$}}}
\put(0,0){\makebox(0,0){\footnotesize{$G_{2}$}}}
\put(100,0){\makebox(0,0){\footnotesize{$|G_{1}^{\ast}|$}}}
\put(200,0){\makebox(0,0){\footnotesize{$|\mathrmbf{C}_{2}|$}}}
\put(-8,40){\makebox(0,0)[r]{\scriptsize{$H$}}}
\put(108,40){\makebox(0,0)[l]{\scriptsize{$|H^{\ast}|$}}}
\put(208,40){\makebox(0,0)[l]{\scriptsize{$|\mathrmbfit{H}|$}}}
\put(50,92){\makebox(0,0){\scriptsize{$\eta_{G_{1}}$}}}
\put(50,-14){\makebox(0,0){\scriptsize{$\eta_{G_{2}}$}}}
\put(146,92){\makebox(0,0){\scriptsize{$|\mathrmbfit{D}_{1}|$}}}
\put(146,-14){\makebox(0,0){\scriptsize{$|\mathrmbfit{D}_{2}|$}}}
\put(100,120){\makebox(0,0){\scriptsize{$D_{1}$}}}
\put(100,-45){\makebox(0,0){\scriptsize{$D_{2}$}}}
\put(30,80){\vector(1,0){40}}
\put(123,80){\vector(1,0){40}}
\put(30,0){\vector(1,0){40}}
\put(123,0){\vector(1,0){40}}
\put(0,60){\vector(0,-1){40}}
\put(100,60){\vector(0,-1){40}}
\put(200,60){\vector(0,-1){40}}
\put(100,100){\oval(190,20)[t]}\put(195,96){\vector(0,-1){0}}
\put(100,-20){\oval(190,20)[b]}\put(195,-16){\vector(0,1){0}}
\end{picture}
\\ \\
\end{tabular}
\end{center}
The lax fibered functor ${\wp} : \mathrmbf{Gph} \rightarrow \mathrmbf{Dgm}$ 
is the homogenization of a path lax indexed functor
${\wp} : \mathrmbfit{gph} \Rightarrow \mathrmbfit{dgm} : \mathrmbf{Set}^{\mathrm{op}} \rightarrow \mathrmbf{Cat}$.
For any graph $G$
there is a fiber functor ${\wp}_{G} : \mathrmbfit{gph}(G) \rightarrow \mathrmbfit{dgm}(G)$
that maps 
the $G$-graph (a single object) ${\bullet} \in \mathrmbfit{gph}(G)$ 
to the path $G$-diagram 
${\wp}_{G}(\bullet) = 
{\langle{\eta_{G},G^{\ast}}\rangle} \in \mathrmbfit{dgm}(G)$, 
and maps the (identity) $G$-graph morphism
$1_{\bullet} : {\bullet} \rightarrow {\bullet}$ in $\mathrmbfit{gph}(G)$
to the identity $G$-diagram morphism
${\wp}_{G}(\bullet) \rightrightarrows {\wp}_{G}(\bullet)$.
For any graph morphism $H : G_{1} \rightarrow G_{2}$,
path satisfies the lax naturality commutative diagram
\begin{center}
\begin{tabular}{c}
\\
\setlength{\unitlength}{0.6pt}
\begin{picture}(100,80)(0,0)
\put(0,80){\makebox(0,0){\footnotesize{$\mathrmbfit{gph}(G_{1})$}}}
\put(100,80){\makebox(0,0){\footnotesize{$\mathrmbfit{dgm}(G_{1})$}}}
\put(0,0){\makebox(0,0){\footnotesize{$\mathrmbfit{gph}(G_{2})$}}}
\put(100,0){\makebox(0,0){\footnotesize{$\mathrmbfit{dgm}(G_{2})$}}}
\put(50,92){\makebox(0,0){\scriptsize{$\mathrmbfit{0}_{G_{1}}={\wp}_{G_{1}}$}}}
\put(105,40){\makebox(0,0)[l]{\scriptsize{$\mathrmbfit{dgm}(H)$}}}
\put(-5,40){\makebox(0,0)[r]{\scriptsize{$\mathrmbfit{gph}(H)$}}}
\put(50,-14){\makebox(0,0){\scriptsize{$\mathrmbfit{0}_{G_{2}}={\wp}_{G_{2}}$}}}
\put(0,20){\vector(0,1){40}}
\put(100,20){\vector(0,1){40}}
\put(35,80){\vector(1,0){30}}
\put(35,0){\vector(1,0){30}}
\put(30,50){\begin{picture}(0,0)(0,0)
\setlength{\unitlength}{0.5pt}
\put(23,-5){\makebox(0,0)[l]{\large{$\scriptstyle\mathrmbfit{0}_{H}$}{$\scriptscriptstyle=H^{\ast}$}}}
\drawline(3,0)(23,-20)
\drawline(0,-3)(20,-23)
\qbezier(23,-16)(23,-20)(25,-25)
\qbezier(16,-23)(20,-23)(25,-25)
\end{picture}}
\end{picture}
\\
\end{tabular}
\end{center}
or pointwise,
${\wp}_{G_{1}}(\mathrmbfit{gph}(H)(\bullet))
= {\wp}_{G_{1}}(\bullet)
= {\langle{\eta_{G_{1}},G_{1}^{\ast}}\rangle}
\stackrel{H^{\ast}}{\rightrightarrows} 
{\langle{H{\circ}\eta_{G_{2}},G_{2}^{\ast}}\rangle}
= \mathrmbfit{dgm}(H)(\eta_{G_{2}},G_{2}^{\ast})
= \mathrmbfit{dgm}(H)({\wp}_{G_{2}}(\bullet))$
for the single target object $\bullet \in \mathrmbfit{gph}(G_{2})$.
These diagrams vertically paste together under composition of functions.
\end{eg}

~\footnote{Every category $\mathrmbf{C}$ has an underlying graph 
$\mathrmbfit{gph}(\mathrmbf{C}) = |\mathrmbf{C}|$. 
Let
$|\mbox{-}| : \mathrmbf{Cat} \rightarrow \mathrmbf{Gph}$ be the underlying graph functor.
Every graph $G$ has 
an associated path category $\mathrmbfit{cat}(G) = G^{\ast}$ 
and embedding graph morphism $\eta_{G} : G \rightarrow |G^{\ast}|$, 
such that 
there is a natural isomorphism between 
graph morphisms $D : G \rightarrow |\mathrmbf{C}|$ and 
their adjoints,
the unique functors $\mathrmbfit{D} : G^{\ast} \rightarrow \mathrmbf{C}$ 
satisfing $\eta_{G} \circ |\mathrmbfit{D}| = D$.
That is, there is an adjunction $(\mbox{-})^{\ast} \dashv |\mbox{-}|$ between 
the path functor ${\ast} : \mathrmbf{Gph} \rightarrow \mathrmbf{Cat}$ and 
the underlying graph functor $|\mbox{-}| : \mathrmbf{Cat} \rightarrow \mathrmbf{Gph}$.}

\newpage
\paragraph{The sum structure functor}

\begin{figure}
\begin{center}
\begin{tabular}{@{\hspace{50pt}}c@{\hspace{190pt}}c@{\hspace{60pt}}}
\\ 
& \\
\setlength{\unitlength}{0.8pt}
\begin{picture}(120,80)(0,0)
\put(0,80){\makebox(0,0){${\langle{\Sigma_{1},M_{1},T_{1}}\rangle}$}}
\put(120,110){\makebox(0,0){$
\overset{\textstyle\underbrace{\mathrmbfit{nat}(\mathrmbfit{sum}(\Sigma_{1},M_{1},T_{1}))}}
{{\langle{\Sigma_{1},M_{1}{+}\mathrmbfit{ext}_{\Sigma_{1}}(T_{1})
,\mathrmbfit{int}_{\Sigma_{1}}(M_{1}{+}\mathrmbfit{ext}_{\Sigma_{1}}(T_{1}))}\rangle}}
$}}
\put(120,-12){\makebox(0,0){$
\overset{\textstyle\underbrace{{\langle{\Sigma_{2},M_{2},\mathrmbfit{int}_{\Sigma_{2}}(M_{2})}\rangle}}}
{\mathrmbfit{nat}(\Sigma_{2},M_{2})}
$}}
\put(50,30){\makebox(0,0)[r]{\footnotesize{${\langle{\sigma,f}\rangle}$}}}
\put(60,85){\makebox(0,0){\footnotesize{$
\eta_{{\langle{\Sigma_{1},M_{1},T_{1}}\rangle}}
$}}}
\put(128,45){\makebox(0,0)[l]{\footnotesize{$
\overset{\underbrace{\mathrmbfit{nat}(\sigma,{!})}}{\sigma}
$}}}
\put(40,80){\vector(1,0){55}}
\put(18,68){\vector(3,-2){75}}
\put(120,65){\vector(0,-1){45}}
\put(185,0){\begin{picture}(0,80)(0,0)
\put(120,92){\makebox(0,0){$
\overset{\underbrace{\mathrmbfit{sum}(\Sigma_{1},M_{1},T_{1})}}
{{\langle{\Sigma_{1},M_{1}{+}\mathrmbfit{ext}_{\Sigma_{1}}(T_{1})}\rangle}}
$}}
\put(120,0){\makebox(0,0){${\langle{\Sigma_{2},M_{2}}\rangle}$}}
\put(120,65){\vector(0,-1){45}}
\put(130,40){\makebox(0,0)[l]{\footnotesize{$
{\langle{\sigma,{!}_{\mathrmbfit{struc}(\sigma)(M_{2})}}\rangle}
$}}}
\end{picture}}
\end{picture}
&
\\ & \\ & \\ & \\ 
\end{tabular}
\end{center}
\caption{Sum-Natural Adjunction}
\label{sum-natural:adjunction}
\end{figure}

\mbox{ }\newline
A morphism in $\mathrmbf{Log}$, 
called a logic morphism,
is a pair $(\sigma,f) : (\Sigma_{1},M_{1},T_{1}) \rightarrow (\Sigma_{2},M_{2},T_{2})$, 
where $\sigma : \Sigma_{1} \rightarrow \Sigma_{2}$ is a language morphism, and
$f : (M_{1},T_{1}) \rightarrow 
\mathrmbfit{log}(\sigma)(M_{2},T_{2}) = (\mathrmbfit{struc}(\sigma)(M_{2}),\mathrmbfit{inv}(\sigma)(T_{2}))$
is a $\Sigma_{1}$-logic morphism; 
that is,
$f : M_{1} \rightarrow \mathrmbfit{struc}(\sigma)(M_{2})$ is a $\Sigma_{1}$-structure morphism, and 
$T_{1} \geq_{\Sigma_{1}} \mathrmbfit{inv}(\sigma)(T_{2})$ is a $\Sigma_{1}$-specification ordering.
Hence,
a logic morphism $(\sigma,f) : (\Sigma_{1},M_{1},T_{1}) \rightarrow (\Sigma_{2},M_{2},T_{2})$
consists of
a structure morphism $(\sigma,f) : (\Sigma_{1},M_{1}) \rightarrow (\Sigma_{2},M_{2})$ and
a specification morphism $\sigma : (\Sigma_{1},T_{1}) \rightarrow (\Sigma_{2},T_{2})$.

\begin{center}
\begin{tabular}{c}
\setlength{\unitlength}{0.9pt}
\begin{picture}(80,60)(0,0)
\put(40,60){\makebox(0,0){$\mathrmbf{Log}$}}
\put(0,0){\makebox(0,0){$\mathrmbf{Struc}$}}
\put(80,0){\makebox(0,0){$\mathrmbf{Spec}$}}
\put(40,-7){\makebox(0,0){\scriptsize{$\mathrmbfit{ext}$}}}
\put(27,28){\makebox(0,0)[l]{\scriptsize{$\mathrmbfit{pr}$}}}
\put(68,28){\makebox(0,0)[l]{\scriptsize{$\mathrmbfit{pr}$}}}
\put(5,34){\makebox(0,0)[r]{\scriptsize{$\mathrmbfit{sum}$}}}
\put(15,8){\makebox(0,0){\scriptsize{$+$}}}
\qbezier[26](22,53)(10,35)(-4,14)\put(-4,14){\vector(-2,-3){0}}
\put(60,0){\vector(-1,0){40}}
\put(30,45){\vector(-2,-3){22}}
\put(50,45){\vector(2,-3){22}}
\end{picture}
\end{tabular}
\end{center}

\newpage

\begin{meta-axiom}
{\normalsize$\blacksquare$}
For any language $\Sigma$ and for any $\Sigma$-logic
$\mathcal{L} = {\langle{M,T}\rangle}$
there is a $\Sigma$-structure
that is the sum $M +_{\Sigma} \mathrmbfit{ext}_{\Sigma}(T)$ in the fiber category $\mathrmbf{Struc}(\Sigma)$
of the component structure and the extent of the component specification.
Hence,
there is a sum structure lax fibered functor
that commutes with projection
$\mathrmbfit{sum} \circ \mathrmbfit{pr}_{\mathrmbf{Struc}} = \mathrmbfit{pr}_{\mathrmbf{Log}}$.
The sum structure functor is left adjoint to the natural logic functor
\footnotesize\[
{\langle{\mathrmbfit{sum},\mathrmbfit{nat},\eta,\varepsilon}\rangle} : \mathrmbf{Log} \rightarrow \mathrmbf{Struc}
\]\normalsize
We have the following adjunction properties:
\begin{center}
{\footnotesize$\begin{array}{rcl}
\mathrmbf{Log} 
\stackrel{{\langle{\overset{\scriptscriptstyle+}{\mathrmbfit{res}},\mathrmbfit{inc}}\rangle}}{\longrightarrow}
\mathrmbf{Snd} 
\stackrel{{\langle{\mathrmbfit{pr}_{0},\mathrmbfit{nat}}\rangle}}{\longrightarrow} 
\mathrmbf{Struc}
&=&
\mathrmbf{Log} 
\stackrel{{\langle{\mathrmbfit{sum},\mathrmbfit{nat}}\rangle}}{\longrightarrow} 
\mathrmbf{Struc}
\\
\mathrmbf{Snd} 
\stackrel{{\langle{\mathrmbfit{inc},\overset{\scriptscriptstyle\vee}{\mathrmbfit{res}}}\rangle}}{\longrightarrow}
\mathrmbf{Log} 
\stackrel{{\langle{\mathrmbfit{pr}_{0},{\bot}}\rangle}}{\longrightarrow} 
\mathrmbf{Struc}
&=&
\mathrmbf{Log} 
\stackrel{{\langle{\mathrmbfit{pr}_{0},\mathrmbfit{nat}}\rangle}}{\longrightarrow} 
\mathrmbf{Struc}
\\
\mathrmbf{Snd} 
\stackrel{{\langle{\mathrmbfit{inc},\overset{\scriptscriptstyle\vee}{\mathrmbfit{res}}}\rangle}}{\longrightarrow}
\mathrmbf{Log} 
\stackrel{{\langle{\mathrmbfit{sum},\mathrmbfit{nat}}\rangle}}{\longrightarrow} 
\mathrmbf{Struc}
&\cong&
\mathrmbf{Log} 
\stackrel{{\langle{\mathrmbfit{pr}_{0},\mathrmbfit{nat}}\rangle}}{\longrightarrow} 
\mathrmbf{Struc}
\end{array}$}
\end{center}
\end{meta-axiom}
This lax fibered functor is the homogenization of a lax indexed functor 
$\mathrmbfit{sum} : \mathrmbfit{log} \Rightarrow \mathrmbfit{struc}
: \mathrmbf{Lang}^{\mathrm{op}} \rightarrow \mathrmbf{Cat}$.
For any language $\Sigma$,
there is a fiber functor
$\mathrmbfit{sum}_{\Sigma} : \mathrmbfit{log}(\Sigma) \rightarrow \mathrmbfit{struc}(\Sigma)$
that maps 
a $\Sigma$-logic $(M,T) \in \mathrmbfit{log}(\Sigma)$ 
to a sum $\Sigma$-structure 
$\mathrmbfit{sum}_{\Sigma}(M,T) 
= M +_{\Sigma} \mathrmbfit{ext}_{\Sigma}(T)$,
so that
$T \geq_{\Sigma} \mathrmbfit{int}_{\Sigma}(\mathrmbfit{sum}_{\Sigma}(M,T))$, and
maps a $\Sigma$-logic morphism
$f : (M,T) \rightarrow (M',T')$ in $\mathrmbfit{log}(\Sigma)$,
where $f : M \rightarrow M'$ in $\mathrmbfit{struc}(\Sigma)$
and $T \geq_{\Sigma} T'$ in $\mathrmbfit{spec}(\Sigma)$,
to the sum $\Sigma$-structure morphism
$\mathrmbfit{sum}_{\Sigma}(f) 
= f +_{\Sigma} \mathrmbfit{ext}_{\Sigma}(T \geq_{\Sigma} T') :
\mathrmbfit{sum}_{\Sigma}(M,T) \rightarrow_{\Sigma} \mathrmbfit{sum}_{\Sigma}(M',T')$.
For any language morphism $\sigma : \Sigma_{1} \rightarrow \Sigma_{2}$ in $\mathrmbf{Lang}$,
summation 
includes
a natural transformation
$\mathrmbfit{sum}_{\sigma} :
\mathrmbfit{log}(\sigma) \circ \mathrmbfit{sum}_{\Sigma_{1}} \Rightarrow
\mathrmbfit{sum}_{\Sigma_{2}} \circ \mathrmbfit{struc}(\sigma)$
that can be regarded as a lax naturality commutative diagram
\begin{center}
\begin{tabular}{c}
\\
\setlength{\unitlength}{0.6pt}
\begin{picture}(100,80)(0,0)
\put(0,80){\makebox(0,0){\footnotesize{$\mathrmbfit{log}(\Sigma_{1})$}}}
\put(0,0){\makebox(0,0){\footnotesize{$\mathrmbfit{log}(\Sigma_{2})$}}}
\put(100,80){\makebox(0,0){\footnotesize{$\mathrmbfit{struc}(\Sigma_{1})$}}}
\put(100,0){\makebox(0,0){\footnotesize{$\mathrmbfit{struc}(\Sigma_{2})$}}}
\put(50,92){\makebox(0,0){\scriptsize{$\mathrmbfit{sum}_{\Sigma_{1}}$}}}
\put(50,-14){\makebox(0,0){\scriptsize{$\mathrmbfit{sum}_{\Sigma_{2}}$}}}
\put(-5,40){\makebox(0,0)[r]{\scriptsize{$\mathrmbfit{log}(\sigma)$}}}
\put(105,40){\makebox(0,0)[l]{\scriptsize{$\mathrmbfit{struc}(\sigma)$}}}
\put(0,20){\vector(0,1){40}}
\put(100,20){\vector(0,1){40}}
\put(35,80){\vector(1,0){24}}
\put(35,0){\vector(1,0){24}}
\put(35,50){\begin{picture}(0,0)(0,0)
\setlength{\unitlength}{0.5pt}
\put(31,-7){\makebox(0,0){{$\scriptscriptstyle\mathrmbfit{sum}_{\sigma}$}}}
\drawline(3,0)(23,-20)
\drawline(0,-3)(20,-23)
\qbezier(23,-16)(23,-20)(25,-25)
\qbezier(16,-23)(20,-23)(25,-25)
\end{picture}}
\end{picture}
\\ \\
\end{tabular}
\end{center}
where, for every target logic $(M_{2},T_{2}) \in \mathrmbfit{struc}(\Sigma_{2})$
\footnotesize\[
\hspace{-20pt}
\underset{
[\mathrmbfit{struc}(\sigma)(\iota_{M_{2}}),
\mathrmbfit{ext}_{\sigma}(T_{2}){\cdot}\mathrmbfit{struc}(\sigma)(\iota_{T_{2}})]
}
{\underbrace{\mathrmbfit{sum}_{\sigma}(M_{2},T_{2})}} :
\underset{
\mathrmbfit{struc}(\sigma)(M_{2}) +_{\Sigma_{1}} \mathrmbfit{ext}_{\Sigma_{1}}(\mathrmbfit{inv}(\sigma)(T_{2}))
}{\underbrace{\mathrmbfit{sum}_{\Sigma_{1}}(\mathrmbfit{log}(\sigma)(M_{2},T_{2}))}}
\rightarrow 
\underset{
\mathrmbfit{struc}(\sigma)(M_{2} +_{\Sigma_{2}} \mathrmbfit{ext}_{\Sigma_{2}}(T_{2}))
}
{\underbrace{\mathrmbfit{struc}(\sigma)(\mathrmbfit{sum}_{\Sigma_{2}}(M_{2},T_{2}))}}.
\]\normalsize
These diagrams vertically paste together under composition of language morphisms.

\begin{eg}
In the information flow logical environment $\mathtt{IF}$,
the sum functor
$\mathrmbfit{sum} : \mathrmbf{Log} \rightarrow \mathrmbf{Cls}$
serves as a sum structure lax fibered functor.
The functor $\mathrmbfit{sum}$
maps a logic ${\langle{X,Y,\models,\vdash}\rangle}$ to the sum classification
$\mathrmbfit{sum}(X,Y,\models,\vdash) 
= {\langle{X,Y,\models}\rangle} +_{Y} \mathrmbfit{ext}(Y,\vdash)
= {\langle{X^{\vdash},Y,\models}\rangle}$,
where
$X^{\vdash} = \{ n \in X \mid n \models (\Gamma,\Delta),\text{if}\;\Gamma\vdash\Delta \}$ 
consists of all instances that satisfy all sequents in the theory.
~\footnote{
Consider ways of connecting
the component classification ${\langle{X,Y,\models}\rangle}$
and the extent
$\mathrmbfit{ext}(Y,\vdash) = {\langle{{\wp}Y^{\vdash},Y, \ni_{Y}}\rangle}$ 
of the component theory,
where ${\wp}Y^{\vdash} \subseteq {\wp}Y$
is the collection of type subsets that satisfy the theory.
Let $\tau : X \rightarrow {\wp}Y$ be the state description function,
where $\tau(x) = \{ y \in Y \mid x \models y \}$.
Let $X^{\vdash} = \{ n \in X \mid n \models (\Gamma,\Delta),\text{all}\;\Gamma\vdash\Delta \}$ 
be the subset of all normal instances;
i.e., those that satisfy all sequents in the theory.
The state description function restricts as
$\tau : X^{\vdash} \rightarrow {\wp}Y^{\vdash}$.
One way for connection is with the opspan consisting of
the inclusion $Y$-fiber infomorphism (on the left) and 
the restricted state description $Y$-fiber infomorphism (on the right)
\[
{\langle{X,Y,\models}\rangle} 
\stackrel{\langle{\iota^Y_X,1_{Y}}\rangle}{\rightleftarrows} 
{\langle{X^{\vdash},Y,\models}\rangle}
\stackrel{\langle{\tau,1_{Y}}\rangle}{\leftrightarrows} 
{\langle{{\wp}Y^{\vdash},Y,\ni_{Y}}\rangle} 
\]
If there is another opspan
${\langle{X,Y,\models}\rangle} 
\stackrel{\langle{g_X,1_{Y}}\rangle}{\rightleftarrows} 
{\langle{X',Y,\models'}\rangle}
\stackrel{\langle{g_{\vdash},1_{Y}}\rangle}{\leftrightarrows} 
{\langle{{\wp}Y^{\vdash},Y,\ni_{Y}}\rangle}$
connecting these two classifications,
then
$g_X(x') \models y$ iff $x' \models' y$ iff $g_{\vdash}(x') \ni_{Y} y$
for any $x' \in X'$.
Hence,
$g_{\vdash}(x')$ is the state description of $g_X(x')$ and
$g_X(x') \in X^{\vdash}$ is a normal instance.
Hence,
$g_X$ factors through the normal instances
$g_X \cdot \iota^Y_X = g_X$
and
$g_X \cdot \tau = g_{\vdash}$.
So we have a (clearly unique) $Y$-fiber infomorphism
${\langle{X^{\vdash},Y,\models}\rangle}
\stackrel{\langle{g_X,1_{Y}}\rangle}{\rightleftarrows} 
{\langle{X',Y,\models'}\rangle}$
with
${\langle{\iota^Y_X,1_{Y}}\rangle} \cdot {\langle{g_X,1_{Y}}\rangle}
= {\langle{g_X,1_{Y}}\rangle}$
and
${\langle{\tau,1_{Y}}\rangle} \cdot {\langle{g_X,1_{Y}}\rangle}
= {\langle{g_{\vdash},1_{Y}}\rangle}$.
This means that
${\langle{X^{\vdash},Y,\models}\rangle}
= {\langle{X,Y,\models}\rangle} +_{Y} {\langle{{\wp}Y^{\vdash},Y,\ni_{Y}}\rangle}$
is a $Y$-fiber binary sum.
We are not saying that all binary sums exist.
Also, note that
${\langle{{\wp}Y,Y,\ni_{Y}}\rangle} 
\stackrel{\langle{\tau,1_{Y}}\rangle}{\rightleftarrows} 
{\langle{X,Y,\models}\rangle}$
is the unique $Y$-fiber infomorphism
from the initial $Y$-fiber object
$\mathrmbfit{ext}(Y,\emptyset) = \mathrmbfit{ext}(Y,\top) = {\langle{{\wp}Y,Y,\ni_{Y}}\rangle}$.
``$\mathrmbfit{ext}_{Y}$ preserves initial object.''}
The functor $\mathrmbfit{sum}$
maps a logic infomorphism 
${\langle{g,f}\rangle} : 
{\langle{X_{1},Y_{1},\models_{1},\vdash_{1}}\rangle} \rightleftarrows {\langle{X_{2},Y_{2},\models_{2},\vdash_{2}}\rangle}$ 
to the sum structure infomorphism
$\mathrmbfit{sum}(g,f) = {\langle{g^{\vdash},f}\rangle} : 
{\langle{X^{\vdash_{1}}_{1},Y_{1},\models_{1}}\rangle} \rightleftarrows 
{\langle{X^{\vdash_{2}}_{2},Y_{2},\models_{2}}\rangle}$, 
where
$g^{\vdash} : X^{\vdash_{2}}_{2} \rightarrow X^{\vdash_{1}}_{1}$ 
is the sum restriction
of the instance function
$g : X_{2} \rightarrow X_{1}$. 
Clearly,
the sum operator $\mathrmbfit{sum}$ is functorial,
since restriction of a composite function is the composite of the component restrictions.
The sum functor commutes with projection
$\mathrmbfit{sum} \circ \mathrmbfit{typ} = \mathrmbfit{typ}$.
The lax fibered functor $\mathrmbfit{sum} : \mathrmbf{Log} \rightarrow \mathrmbf{Cls}$ 
is the homogenization of 
a sum lax indexed functor
$\mathrmbfit{sum} : \mathrmbfit{log} \Rightarrow \mathrmbfit{cls}
: \mathrmbf{Set}^{\mathrm{op}} \rightarrow \mathrmbf{Cat}$.
For any set $Y$,
there is a fiber functor
$\mathrmbfit{sum}_{Y} : \mathrmbfit{log}(Y) \rightarrow \mathrmbfit{cls}(Y)$
that maps 
a $Y$-logic $(X,\models,\vdash) \in \mathrmbfit{log}(Y)$ 
to a sum $Y$-classification 
${\langle{X^{\vdash},\models}\rangle} \in \mathrmbfit{cls}(Y)$
with all instances in $X^{\vdash}$ satisfying ${\vdash}$, and
maps a $Y$-logic infomorphism
$g : (X,\models,\vdash) \rightarrow (X',\models',\vdash')$ in $\mathrmbfit{log}(Y)$
to the sum $Y$-classification infomorphism
$g^{\vdash} : 
{\langle{X^{\vdash},Y,\models}\rangle} \rightleftarrows {\langle{X'^{\vdash'},Y',\models'}\rangle}$, 
For any (type) function $f : Y_{1} \rightarrow \Sigma_{2}$,
there is a natural transformation
\begin{center}
\begin{tabular}{c}
\\
\setlength{\unitlength}{0.6pt}
\begin{picture}(100,80)(0,0)
\put(0,80){\makebox(0,0){\footnotesize{$\mathrmbfit{log}(Y_{1})$}}}
\put(100,80){\makebox(0,0){\footnotesize{$\mathrmbfit{cls}(Y_{1})$}}}
\put(0,0){\makebox(0,0){\footnotesize{$\mathrmbfit{log}(Y_{2})$}}}
\put(100,0){\makebox(0,0){\footnotesize{$\mathrmbfit{cls}(Y_{2})$}}}
\put(50,92){\makebox(0,0){\scriptsize{$\mathrmbfit{sum}_{Y_{1}}$}}}
\put(105,40){\makebox(0,0)[l]{\scriptsize{$\mathrmbfit{cls}(f)$}}}
\put(-5,40){\makebox(0,0)[r]{\scriptsize{$\mathrmbfit{log}(f)$}}}
\put(50,-14){\makebox(0,0){\scriptsize{$\mathrmbfit{sum}_{Y_{2}}$}}}
\put(0,20){\vector(0,1){40}}
\put(100,20){\vector(0,1){40}}
\put(35,80){\vector(1,0){24}}
\put(35,0){\vector(1,0){24}}
\put(35,50){\begin{picture}(0,0)(0,0)
\setlength{\unitlength}{0.5pt}
\put(31,-7){\makebox(0,0){{$\scriptscriptstyle\mathrmbfit{sum}_{f}$}}}
\drawline(3,0)(23,-20)
\drawline(0,-3)(20,-23)
\qbezier(23,-16)(23,-20)(25,-25)
\qbezier(16,-23)(20,-23)(25,-25)
\end{picture}}
\end{picture}
\\ \\
\end{tabular}
\end{center}
where
for every $Y_{2}$-logic $(X_{2},\models_{2},\vdash_{2}) \in \mathrmbfit{log}(Y_{2})$,
the $Y_{1}$-infomorphism
\footnotesize\[
\mathrmbfit{sum}_{f}(X_{2},\models_{2},\vdash_{2}) :
\underset{(X_{2}^{f^{-1}(\vdash_{2})},\models_{2}^{f})} 
{\underbrace{\mathrmbfit{sum}_{Y_{1}}(\mathrmbfit{log}(f)(X_{2},\models_{2},\vdash_{2}))}}
\rightleftarrows 
\underset{(X_{2}^{\vdash_{2}},\models_{2}^{f})}
{\underbrace{\mathrmbfit{cls}(f)(\mathrmbfit{sum}_{Y_{2}}(X_{2},\models_{2},\vdash_{2}))}}
\]\normalsize
has inclusion instance function
$X_{2}^{\vdash_{2}} \hookrightarrow X_{2}^{f^{-1}(\vdash_{2})}$,
since any normal instance of the logic 
${\langle{X_{2},\models_{2},\vdash_{2} }\rangle}$
is a normal instance of the logic 
${\langle{X_{2},\models_{2}^{f},f^{-1}(\vdash_{2})}\rangle}$.
These diagrams vertically paste together under composition of functions.
\end{eg}

\begin{eg}
In the sketch logical environment $\mathtt{Sk}$ (used in the metatheory of sketches),
the sum functor
$\mathrmbfit{sum} : \mathrmbf{Log} \rightarrow \mathrmbf{Dgm}$
serves as a sum structure lax fibered functor.
The functor $\mathrmbfit{sum}$
maps a logic ${\langle{G,D,\mathrmbf{C},E}\rangle}$ to the sum diagram
$\mathrmbfit{sum}(G,D,\mathrmbf{C},E) 
= {\langle{G,(D,\mathrmbf{C})+_{G}\mathrmbfit{ext}_{G}(E)}\rangle}$
~\footnote{
$\mathrmbfit{ext}_{G}(E) 
= ({\between}_{G},G/E)
= (\raisebox{1.5pt}[0pt]{${\scriptstyle\{\}}_{G}$}{\circ}|\raisebox{1.5pt}[0pt]{${\scriptstyle[]}_{E}$}|,G/E)$
is the free diagram generated by the $G$-specification $E$.
$E$ induces a congruence $\ddot{E}$ on the graph $G$,
which in turn induces a ``quotient ''category $G/E$
with canonical functor
$\raisebox{1.5pt}[0pt]{${\scriptstyle[]}_{E}$} : G^{\ast} \rightarrow G/E$.
Any diagram
$D = {\scriptstyle\{\}}_{G} \circ \mathrmbfit{D} : G \rightarrow |\mathrmbf{C}|$ 
with iteration $\mathrmbfit{D}^{\ast} : G^{\ast} \rightarrow \mathrmbf{C}$
that satisfies the specification $E$ 
\[
(D,\mathrmbf{C}) \models_{G} E 
\;\;\text{iff}\;\;
E \geq_{G} \mathrmbf{int}_{G}(D,\mathrmbf{C}) 
\;\;\text{iff}\;\;
\mathrmbf{ext}_{G}(E) \stackrel{\exists{!}}{\rightarrow} (D,\mathrmbf{C})
\;\;\text{iff}\;\;
\mathrmbfit{D} : G^{\ast} \rightarrow \mathrmbf{C}\;\;\text{respects}\;\;\ddot{E},
\]
uniquely factors through $G/E$,
${\between}_{G} \circ |{\exists{!}}_{G}| = D$,
via a functor ${\exists{!}}_{G} : G/E \rightarrow \mathrmbf{C}$.
},
\begin{center}
\begin{tabular}{c}
\\
\setlength{\unitlength}{0.5pt}
\begin{picture}(200,100)(0,0)
\put(0,80){\makebox(0,0){\footnotesize{$G$}}}
\put(100,80){\makebox(0,0){\footnotesize{$|G^{\ast}|$}}}
\put(204,80){\makebox(0,0){\footnotesize{$|G/E|$}}}
\put(0,0){\makebox(0,0){\footnotesize{$|\mathrmbf{C}|$}}}
\put(220,0){\makebox(0,0){\footnotesize{$|\mathrmbf{C}{+}_{G}{G/E}|$}}}
\put(-8,40){\makebox(0,0)[r]{\scriptsize{$D$}}}
\put(208,40){\makebox(0,0)[l]{\scriptsize{$|\iota_{E}|$}}}
\put(100,120){\makebox(0,0){\scriptsize{${\scriptstyle\between}_{G}$}}}
\put(50,92){\makebox(0,0){\scriptsize{${\scriptstyle\{\}}_{G}$}}}
\put(146,92){\makebox(0,0){\scriptsize{$|{\scriptstyle[]}_{E}|$}}}
\put(100,-14){\makebox(0,0){\scriptsize{$|\iota_{\mathrmbf{C}}|$}}}
\put(90,30){\makebox(0,0){\scriptsize{${\Diamond}_{E}$}}}
\put(30,80){\vector(1,0){40}}
\put(123,80){\vector(1,0){40}}
\put(26,0){\vector(1,0){140}}
\put(25,63){\vector(3,-1){142}}
\put(0,60){\vector(0,-1){40}}
\put(200,60){\vector(0,-1){40}}
\put(100,100){\oval(190,20)[t]}\put(195,96){\vector(0,-1){0}}
\end{picture}
\\ \\
\end{tabular}
\end{center}
The functor $\mathrmbfit{sum}$ maps a logic infomorphism 
${\langle{H,\mathrmbfit{F}}\rangle} : 
{\langle{G_{1},D_{1},\mathrmbf{C}_{1},E_{1}}\rangle} \rightleftarrows {\langle{G_{2},D_{2},\mathrmbf{C}_{2},E_{2}}\rangle}$ 
with diagram morphism
${\langle{H,\mathrmbfit{F}}\rangle} : 
{\langle{G_{1},D_{1},\mathrmbf{C}_{1}}\rangle} \rightleftarrows {\langle{G_{2},D_{2},\mathrmbf{C}_{2}}\rangle}$
and specification morphism
$H : {\langle{G_{1},E_{1}}\rangle} \rightleftarrows {\langle{G_{2},E_{2}}\rangle}$ 
with unique factor functor
$\mathrmbfit{H} : G_{1}/E_{1} \rightarrow G_{2}/E_{2}$,
so that
$H \circ D_{2} = D_{1} \circ \mathrmbfit{F}$,
and
$H \circ {\scriptstyle\between}_{G_{2}}
= {\scriptstyle\between}_{G_{1}} \circ \mathrmbfit{H}$,
to the sum diagram morphism
${\langle{H,\mathrmbfit{F}{+}\mathrmbfit{H}}\rangle} : 
{\langle{G_{1},\Diamond_{E_{1}},\mathrmbf{C}_{1}/E_{1}}\rangle} 
\rightleftarrows 
{\langle{G_{2},\Diamond_{E_{2}},\mathrmbf{C}_{2}/E_{2}}\rangle}$,
where
$\mathrmbfit{F}{+}\mathrmbfit{H} : 
\mathrmbf{C}_{1}{+}_{G_{1}}{G_{1}/E_{1}} \rightarrow \mathrmbf{C}_{2}{+}_{G_{2}}{G_{2}/E_{2}}$
is the unique functor
such that 
$\iota_{\mathrmbf{C}_{1}} \circ (\mathrmbfit{F}{+}\mathrmbfit{H}) 
= \mathrmbfit{F} \circ \iota_{\mathrmbf{C}_{2}}$
and
$\iota_{E_{1}} \circ (\mathrmbfit{F}{+}\mathrmbfit{H}) 
= \mathrmbfit{H} \circ \iota_{E_{2}}$.
\begin{center}
\begin{tabular}{c}
\\
\setlength{\unitlength}{0.5pt}
\begin{picture}(200,110)(0,-14)
\put(0,80){\makebox(0,0){\footnotesize{$G_{1}$}}}
\put(100,80){\makebox(0,0){\footnotesize{$|\mathrmbf{C}_{1}|$}}}
\put(228,80){\makebox(0,0){\footnotesize{$|\mathrmbf{C}_{1}{+}_{G_{1}}{G_{1}/E_{1}}|$}}}
\put(0,0){\makebox(0,0){\footnotesize{$G_{2}$}}}
\put(100,0){\makebox(0,0){\footnotesize{$|\mathrmbf{C}_{2}|$}}}
\put(228,0){\makebox(0,0){\footnotesize{$|\mathrmbf{C}_{2}{+}_{G_{2}}{G_{2}/E_{2}}|$}}}
\put(-8,40){\makebox(0,0)[r]{\scriptsize{$H$}}}
\put(108,40){\makebox(0,0)[l]{\scriptsize{$|\mathrmbfit{F}|$}}}
\put(208,40){\makebox(0,0)[l]{\scriptsize{$|\mathrmbfit{F}{+}\mathrmbfit{H}|$}}}
\put(50,92){\makebox(0,0){\scriptsize{$D_{1}$}}}
\put(50,-14){\makebox(0,0){\scriptsize{$D_{2}$}}}
\put(146,92){\makebox(0,0){\scriptsize{$|\iota_{\mathrmbf{C}_{1}}|$}}}
\put(146,-14){\makebox(0,0){\scriptsize{$|\iota_{\mathrmbf{C}_{2}}|$}}}
\put(100,120){\makebox(0,0){\scriptsize{$\Diamond_{E_{1}}$}}}
\put(100,-45){\makebox(0,0){\scriptsize{$\Diamond_{E_{2}}$}}}
\put(30,80){\vector(1,0){40}}
\put(123,80){\vector(1,0){40}}
\put(30,0){\vector(1,0){40}}
\put(123,0){\vector(1,0){40}}
\put(0,60){\vector(0,-1){40}}
\put(100,60){\vector(0,-1){40}}
\put(200,60){\vector(0,-1){40}}
\put(100,100){\oval(190,20)[t]}\put(195,96){\vector(0,-1){0}}
\put(100,-20){\oval(190,20)[b]}\put(195,-16){\vector(0,1){0}}
\end{picture}
\\ \\
\end{tabular}
\end{center}
Clearly,
the sum operator $\mathrmbfit{sum}$ is functorial,
since mediating functor of a composite is the composite of the component mediating functors.
The sum functor commutes with projection
$\mathrmbfit{sum} \circ \mathrmbfit{typ} = \mathrmbfit{typ}$.
\end{eg}

\newpage
\subsubsection{Relating Extent to Sum.} 

An object $A$ in a category $\mathrmbf{C}$ is cohesive
when it is its own binary copower:
$A \cong 2{\cdot}A = A{+}A$.
This means that any two parallel $\mathrmbf{C}$-morphisms
$f,g : A \rightarrow B$
are identical $f = g$.
Clearly,
initial objects are cohesive~\footnote{If $\mathrmbf{C}$ has an initial object $0_{\mathrmbf{C}}$,
then any two parallel $\mathrmbf{C}$-morphisms $f,g : 0_{\mathrmbf{C}} \rightarrow B$ are identical $f = g$,
since there is a unique such morphism ${!}_{B} : 0_{\mathrmbf{C}} \rightarrow B$.}.
If $A$ is cohesive in $\mathrmbf{C}$,
then existence of a (necessarily unique) $\mathrmbf{C}$-morphism $f : A \rightarrow B$
means that $B$ is the $\mathrmbf{C}$-coproduct $A + B$
with coproduct injections
\footnotesize\[
A \stackrel{f}{\longrightarrow} B \stackrel{1_{B}}{\longleftarrow} B,
\]\normalsize
since cohesion of $A$ implies that
for any two $\mathrmbf{C}$-morphisms
$g : A \rightarrow C$ and $h : B \rightarrow C$,
we have the identity $g = f \cdot h : A \rightarrow C$.

\newpage

\begin{meta-axiom}
{\normalsize$\blacksquare$}
For any language $\Sigma$,
and for any $\Sigma$-specification $T$,
the $\Sigma$-structure
$\mathrmbfit{ext}_{\Sigma}(T)$ is cohesive
in the fiber category $\mathrmbf{Struc}(\Sigma)$.
\end{meta-axiom}

\begin{itemize}
\item 
The meta-axiom,
which asserts existence of the left adjoint initial functor 
$\mathrmbfit{0} : \mathrmbf{Lang} \rightarrow \mathrmbf{Struc}$,
guarantees that for any language $\Sigma$
the fiber category
$\mathrmbf{Struc}(\Sigma)$  
has the initial (hence, cohesive) structure $\mathrmbfit{0}(\Sigma) \in \mathrmbf{Struc}(\Sigma)$.
\item 
The meta-axiom, 
which asserts existence of the left adjoint extent functor 
$\mathrmbfit{ext} : \mathrmbf{Spec} \rightarrow \mathrmbf{Struc}$,
guarantees that 
for any language $\Sigma$
and any $\Sigma$-specification $T$
the fiber subcategory
$\ddddot{\mathrmbfit{ext}}_{\Sigma}(T) \subseteq \mathrmbf{Struc}(\Sigma)$  
has the initial (hence, cohesive) structure $\mathrmbfit{ext}_{\Sigma}(T) \in \ddddot{\mathrmbfit{ext}}_{\Sigma}(T)$.
\item 
The above meta-axiom
guarantees that for any language $\Sigma$
the structures $\mathrmbfit{ext}_{\Sigma}(T)$,
although not initial,
are at least cohesive in $\mathrmbf{Struc}(\Sigma)$.
\end{itemize}
For any sound logic ${\langle{\Sigma,M,T}\rangle}$,
since the $\Sigma$-structure $\mathrmbfit{ext}_{\Sigma}(T)$ is cohesive in $\mathrmbf{Struc}(\Sigma)$ and
$\gtrdot_{\Sigma} : \mathrmbfit{ext}_{\Sigma}(T) \rightarrow M$ is a (necessarily unique) $\Sigma$-structure morphism,
$M$ is the $\mathrmbf{Struc}(\Sigma)$-coproduct $M = \mathrmbfit{ext}_{\Sigma}(T) + M$
~\footnote{In particular,
for any $\Sigma$-structure $M$
the $\Sigma$-structure 
${M}^{\scriptscriptstyle\circ} = \mathrmbfit{ext}_{\Sigma}(\mathrmbfit{int}_{\Sigma}(M))$
is cohesive in $\mathrmbf{Struc}(\Sigma)$.
Hence,
the pair of $\Sigma$-structures
${M}^{\scriptscriptstyle\circ}$ and $M$
have the $\mathrmbf{Struc}(\Sigma)$-coproduct $M = {M}^{\scriptscriptstyle\circ} + M$
with coproduct injections
${M}^{\scriptscriptstyle\circ} \stackrel{\varepsilon_{\Sigma}(M)}{\longrightarrow} M \stackrel{1_{M}}{\longleftarrow} M$. 
}
with coproduct injections
$\mathrmbfit{ext}_{\Sigma}(T) \stackrel{\gtrdot_{\Sigma}}{\longrightarrow} M \stackrel{1_{M}}{\longleftarrow} M$.

\begin{itemize}

\item 
We have the adjunction
\footnotesize\[
{\langle{\mathrmbfit{sum}_{\Sigma} \dashv \mathrmbfit{nat}_{\Sigma},\eta_{\Sigma},1}\rangle} : 
\mathrmbf{Log}(\Sigma) \rightarrow \mathrmbf{Struc}(\Sigma). 
\]\normalsize
since
$\mathrmbfit{sum}_{\Sigma}(\widehat{\mathrmbfit{nat}}_{\Sigma}(M)) 
= {M}^{\scriptscriptstyle\circ} + M = M$
implies
$\widehat{\mathrmbfit{nat}} \circ \mathrmbfit{sum} = 1_{\mathrmbf{Struc}}$.
This gives the trivial counit component for the adjunction.
\newline

\item 
We have the functor composition
$\widehat{\mathrmbfit{th}} \circ \mathrmbfit{sum} = \mathrmbfit{ext}$, 
\newline
since
$\mathrmbfit{sum}_{\Sigma}(\widehat{\mathrmbfit{th}}_{\Sigma}(T)) 
= \mathrmbfit{ext}_{\Sigma}(T) + \mathrmbfit{ext}_{\Sigma}(T) = \mathrmbfit{ext}_{\Sigma}(T)$.
\newline

\item 
Thus,
we have the adjunction composition (factorization of truth)
\footnotesize\[
{\langle{\widehat{\mathrmbfit{th}} \dashv \mathrmbfit{pr}_{1}}\rangle}
\circ
{\langle{\mathrmbfit{sum} \dashv \mathrmbfit{nat}}\rangle}
=
{\langle{\mathrmbfit{ext} \dashv \mathrmbfit{int}}\rangle}
\]\normalsize

\end{itemize}

We have the following adjunction properties:
\begin{center}
{\footnotesize$\begin{array}{rcl}
\mathrmbf{Log} 
\stackrel{{\langle{\overset{\scriptscriptstyle+}{\mathrmbfit{res}},\mathrmbfit{inc}}\rangle}}{\longrightarrow}
\mathrmbf{Snd} 
\stackrel{{\langle{\mathrmbfit{pr}_{0},\mathrmbfit{nat}}\rangle}}{\longrightarrow} 
\mathrmbf{Struc}
&=&
\mathrmbf{Log} 
\stackrel{{\langle{\mathrmbfit{sum},\mathrmbfit{nat}}\rangle}}{\longrightarrow} 
\mathrmbf{Struc}
\\
\mathrmbf{Snd} 
\stackrel{{\langle{\mathrmbfit{inc},\overset{\scriptscriptstyle\vee}{\mathrmbfit{res}}}\rangle}}{\longrightarrow}
\mathrmbf{Log} 
\stackrel{{\langle{\mathrmbfit{pr}_{0},{\bot}}\rangle}}{\longrightarrow} 
\mathrmbf{Struc}
&=&
\mathrmbf{Log} 
\stackrel{{\langle{\mathrmbfit{pr}_{0},\mathrmbfit{nat}}\rangle}}{\longrightarrow} 
\mathrmbf{Struc}
\\
\mathrmbf{Snd} 
\stackrel{{\langle{\mathrmbfit{inc},\overset{\scriptscriptstyle\vee}{\mathrmbfit{res}}}\rangle}}{\longrightarrow}
\mathrmbf{Log} 
\stackrel{{\langle{\mathrmbfit{sum},\mathrmbfit{nat}}\rangle}}{\longrightarrow} 
\mathrmbf{Struc}
&\cong&
\mathrmbf{Log} 
\stackrel{{\langle{\mathrmbfit{pr}_{0},\mathrmbfit{nat}}\rangle}}{\longrightarrow} 
\mathrmbf{Struc}
\end{array}$}
\end{center}

\begin{eg}
In the information flow logical environment $\mathtt{IF}$,
for a classification ${\langle{X,Y,\models}\rangle}$,
two instances $x,x' \in X$ are indistinguishable,
denoted $x \equiv x'$,
when they have the same intent.
Any two infomorphisms 
${\langle{g,f}\rangle},{\langle{g',f}\rangle} : 
{\langle{X_{1},Y_{1},\models_{1}}\rangle} \rightarrow {\langle{X_{2},Y_{2},\models_{2}}\rangle}$
over a (type) function $f : Y_{1} \rightarrow Y_{2}$,
have ``equivalent'' (instance) functions $g,g' : X_{2} \rightarrow X_{2}$,
since $g(x_{2}) \models_{1} y_{1}$ iff $x_{2} \models_{2} f(y_{1})$ iff $g(x_{1}) \models_{1} y_{1}$
for all target instances $x_{2} \in X_{2}$.
A classification ${\langle{X,Y,\models}\rangle}$
is separated when there is no pair of indistinguishable instances in $X$.
The type powerset classification ${\langle{{\wp}(Y),Y,\ni_{Y}}\rangle}$ 
(and any of its subclassifications)
is separated.
There is at most one infomorphism 
over a (type) function $f : Y_{1} \rightarrow Y_{2}$
with a separated source classification.
In particular,
an infomorphism
${\langle{g,f}\rangle}
: {\langle{{\wp}(Y_{1}),Y_{1},\ni_{Y{1}}}\rangle} \rightleftarrows {\langle{{\wp}(Y_{2}),Y_{2},\ni_{Y_{2}}}\rangle}$
between type powerset classifications,
must have the unique instance function $g = f^{-1}$,
since $g(X_{2}) = \{ y \in Y \mid X_{2} \ni_{Y{2}} f(y) \} = f^{-1}(X_{2})$ for every $X_{2} \in {\wp}(Y_{2})$.
Thus,
for any type function $f : Y_{1} \rightarrow Y_{2}$,
there is precisely one infomorphism
$g : {\langle{{\wp}(Y_{1}),\ni_{Y{1}}}\rangle} \rightarrow 
{\langle{X_{2},Y_{1},\models_{f}}\rangle}=\mathrmbfit{cls}(f)({\wp}(Y_{2}),\ni_{Y_{2}})$
in the fiber category $\mathrmbfit{cls}(Y_{1})$,
namely $g = f^{-1}$.
More generally,
for any type set $Y$,
any subclassification of the type powerset classification ${\langle{{\wp}(Y),Y,\ni_{Y}}\rangle}$ 
is separated,
and hence cohesive, 
in the fiber category $\mathrmbfit{cls}(Y_{1})$.
\end{eg}

\end{document}